\DeclarePairedDelimiterX\set[1]\lbrace\rbrace{\,#1\,}
\newcommand{\stitle}[1]{\vspace{1mm}\noindent{\textbf{#1}}.}
\newcommand{\eps}{\varepsilon}
\newcommand{\enet}{{$\varepsilon$-{net}}\xspace}
\newcommand{\enets}{{$\varepsilon$-{nets}}\xspace}
\newcommand{\esample}{{$\varepsilon$-{sample}}\xspace}
\newcommand{\at}[1]{{\tt \small #1}\xspace}
\newtheorem{theorem}{Theorem} 
\newtheorem{lemma}[theorem]{Lemma}
\newtheorem{definition}{Definition}
\newtheorem{example}{Example}
\newcommand{\mohsen}[1]{\textcolor{red}{Mohsen: #1}}
\newcommand{\ranges}{\mathcal{R}}
\newcommand{\points}{X}
\newcommand{\rangespace}{(\points, \ranges)}
\newcommand{\range}{R}
\newcommand{\point}{p}
\newcommand{\epsnet}{\mathcal{N}}
\newcommand{\epssample}{\mathcal{A}}
\newcommand{\group}{c}
\newcommand{\groups}{C}
\DeclareMathOperator{\VC}{VC}
\newcommand{\opt}{\mathsf{OPT}}
\renewcommand\footnotetextcopyrightpermission[1]{}
\renewcommand{\Re}{\mathbb{R}}
\newcommand{\prob}{\varphi}
\newcommand{\sizenet}{\lambda}
\newcommand{\parti}{\mathcal{P}}
\renewcommand{\root}{\mathsf{root}}
\newcommand{\X}{\mathcal{X}}
\newcommand{\res}{\mathcal{S}}
\begin{document}
\title{On Fair Epsilon Net and Geometric Hitting Set}

\author{Mohsen Dehghankar}
\affiliation{%
  \institution{University of Illinois Chicago}
  \city{Chicago}
  \state{IL}
  \country{USA}
}
\email{mdehgh2@uic.edu}

\author{Stavros Sintos}
\orcid{0000-0002-2114-8886}
\affiliation{%
  \institution{University of Illinois Chicago}
  \city{Chicago}
  \state{IL}
  \country{USA}
}
\email{stavros@uic.edu}

\author{Abolfazl Asudeh}
\orcid{0000-0002-5251-6186}
\affiliation{%
  \institution{University of Illinois Chicago}
  \city{Chicago}
  \state{IL}
  \country{USA}
}
\email{asudeh@uic.edu}

\acmConference{}{}{} 
\acmYear{}           
\acmDOI{}            
\acmISBN{}           



\begin{abstract}
Fairness has emerged as a formidable challenge in data-driven decisions. Many of the data problems, such as creating compact data summaries for approximate query processing, can be effectively tackled using concepts from computational geometry, such as $\eps$-nets. However, these powerful tools have yet to be examined from the perspective of fairness.

To fill this research gap, we add fairness to classical geometric approximation problems of $\eps$-net, $\eps$-sample, and geometric hitting set. We introduce and address two notions of group fairness: demographic parity, which requires preserving group proportions from the input distribution, and custom-ratios fairness, which demands satisfying arbitrary target ratios.

We develop two algorithms to enforce fairness—one based on sampling and another on discrepancy theory. The sampling-based algorithm is faster and computes a fair $\eps$-net of size which is only larger by a $\log(k)$ factor compared to the standard (unfair) $\eps$-net, where $k$ is the number of demographic groups. The discrepancy-based algorithm is slightly slower (for bounded VC dimension), but it computes a smaller fair $\eps$-net. Notably, we reduce the fair geometric hitting set problem to finding fair $\eps$-nets. This results in a $O(\log \mathsf{OPT} \times \log k)$ approximation of a fair geometric hitting set.

Additionally, we show that under certain input distributions, constructing fair $\eps$-samples can be infeasible, highlighting limitations in fair sampling. 
Beyond the theoretical guarantees, our experimental results validate the practical effectiveness of the proposed algorithms. In particular, we achieve zero unfairness with only a modest increase in output size compared to the unfair setting.
\end{abstract}

\maketitle

\vspace{-2mm}
\section{Introduction}

\subsection{Motivation}
As algorithmic decisions continue to influence critical aspects of modern human life, from resource allocation and recommendation to hiring and even predictive policing, the need to ensure fairness in data-driven systems has become increasingly urgent. 

Geometric approximation algorithms and notions such as \enets facilitate {\em Approximate Query Processing} ({\sc AQP}) by providing compact data summaries, aka data representations, that preserve key properties of large datasets~\cite{mustafa2017epsilon, haussler1986epsilon}. These geometric tools are especially valuable due to their provable guarantees of approximation quality and computational efficiency. 
\enets, for example, are well-known as small subsets of data that guarantee to contain at least one sample from each range larger than $\eps$, and hence can be used for fast query answering.
To further clarify this, let us consider the following example for approximate database range-query processing.

\begin{figure}[!thb]
    \centering
    \begin{tikzpicture}
\def\blues{(-2,1.5), (-1,1), (-2,3), (1,2.5), (1,3),(-.2,2), (-.6,1.7),(0.1,2.9),(-0.4,3)}
\def\reds{(-2.1,2.5), (-1.5,1), (1.4,2), (0,3.5), (1.3,2.8),(.2,1), (-.3,1.1), (-1,1.4),(-0.6,3.1)}
\def\netone{(-2.1,2.5), (1.3,2.8),(0,3.5), (-1,1.4),(-.2,2)}
\foreach \Point in \blues{
    \node[color=blue!60] at \Point {\textbullet};
}
\foreach \Point in \reds{
    \node[color=red!60] at \Point {\textbullet};
}
\foreach \Point in \netone{
    \draw [color=green!100] \Point circle [radius=0.15];
}

\draw [draw=black] (-0.8,3.3) rectangle (1.5,2.65);
\node at (1.6,3.4) {$r_1$};

\draw [draw=black] (-1.2,0.5) rectangle (.5,2.2);
\node at (.7,1.8) {$r_2$};

\draw[->, thick] (-2.7,0) -- (2.2,0) node[right] {$x$};
\foreach \x in {-2, -1, 0, 1, 2}
    \draw (\x,0.1) -- (\x,-0.1) node[below] {\x};

\draw[->, thick] (-2.5,-0.2) -- (-2.5,4.2) node[right] {$y$};
\foreach \y in {1,2,3,4} {
    \draw (-2.6,\y) -- (-2.4,\y);
    \node at (-2.75,\y) {\y};
}

\end{tikzpicture}

    \vspace{-5mm}
    \caption{Illustration of an unfair \enet, highlighted with green circles}
    \label{fig:enet1}
\end{figure}
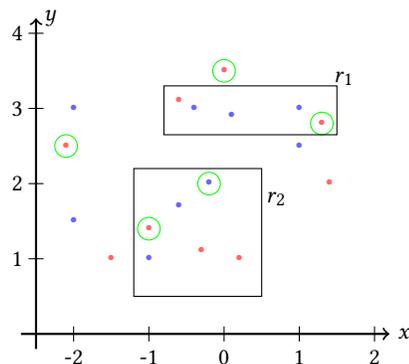

\begin{example}\label{ex:enet} {\sc (Part 1)}
    Consider a relation $T$, defined over two attributes $x$ and $y$, with the $n=18$ tuples shown as points in Figure~\ref{fig:enet1}.
    The range predicates in this relation are in the form of axis-parallel rectangles. 
    For example, $r_1$ in Figure~\ref{fig:enet1} corresponds with the following SQL query:

    \begin{verbatim}
            SELECT * FROM T 
                WHERE -0.8<=x<=1.5 AND 2.6<=y<=3.3
    \end{verbatim}

    In a very large setting, where the objective is to quickly identify a tuple matching the range query, 
    one can model the problem as 
    a range space $(\points,\ranges)$ where $\points$ is the set of points in the relation $T$ and $\ranges$ is the universe of all axis-parallel rectangles. 
    
    Let $\eps = \frac{5}{18}$.
    An \enet on $(\points,\ranges)$ is a subset of $T$ that guarantees to contain at least one point from any possible range with cardinality at least $5$.
    The points highlighted in green form such an \enet.
    For example, the rectangle $r_1$, which encompasses five points, has one point from the \enet.
    Using only the set of points in the \enet, one can quickly find a point satisfying any (sufficiently large) range query. For example, the highlighted point in the bottom-right of $r_1$ is the point in the \enet satisfying the above SQL query. \qed
\end{example}

\vspace{-1mm}
In \S~\ref{sec:application}, we will illustrate some of the other applications of \enets, including k-Nearest Neighbors and top-$k$ queries.

The traditional formulations of these problems, however, operate under the assumption of homogeneity in the data, neglecting the underlying (demographic) group structure that may be critical in fairness-sensitive applications. To better motivate this, let us consider Example~\ref{ex:enet} once more:

\noindent{\sc Example~\ref{ex:enet} (Part 2)}
{\it 
Suppose the tuples in Figure~\ref{fig:enet1} belong to two demographic groups specified by the color of the points (i.e., \at{\{blue,red\}}). One can notice the selected points for \enet mainly belong to the \at{red} group. As a result, answering range queries using this set will favor the red group by mostly returning a tuple from this group. \qed
}


As a concrete use case of example~\ref{ex:enet}, we can consider the {\em query answering on maps for finding points of interest}~\cite{netek2019performance, agafonkin2016supercluster, leafletjs}.
Specifically, let the x and y axes in the figure be the \at{longitude} and \at{latitude} of businesses such as restaurants on a map (Figure~\ref{fig:map-example} in the Appendix).
Now consider an interactive application, such as \href{https://yelp.com}{Yelp} or \href{https://maps.google.com}{Google Maps}, where users explore the map by panning and zooming to certain (rectangular) regions on their screen.

In order to make the UI interactive, the application needs to quickly find some points that match the specified region~\cite{agafonkin2016supercluster, chupurnoff2024displaying}. 
Using an \enet as the dataset representative, the system can quickly find and display some points that match the query, without scanning the entire dataset.
Now, if some groups are underrepresented in the \enet (e.g., only a few black-owned restaurants are selected), those groups will have a lower exposure (e.g., the black-owned restaurants are highlighted disproportionately less on the maps), causing potentially less income for those groups.

A \emph{fair} \enet ensures that the selected points are representative of all groups, and using it for approximate query answering is not discriminatory against some groups.


\definecolor{rowgray}{gray}{0.95}
\renewcommand{\arraystretch}{1.2}
\begin{table*}[t]
\centering
\small
\begin{tabular}{|p{5cm}|l|p{5.5cm}|}
\hline
\rowcolor{gray!20}
\textbf{Problem} & \textbf{Fairness Notion} & \textbf{Proposed Algorithms} \\
\hline

\multirow{4}{*}{\textbf{Fair} \boldmath{$\varepsilon$}{\bf-net}} 
  & \multirow{3}{*}{Demographic Parity} 
    & Fair Monte-Carlo (\S~\ref{sec:epsnets:sample}) \\
  & & Fair Discrepancy-based (\S~\ref{sec:epsnets:disc})\\
  & & Fair Sketch-and-merge (\S~\ref{sec:epsnets:sketch})\\
  \cline{2-3}
  & Custom-ratios & Reduction to FGHS (\S~\ref{sec:epsnets:cr}) \\
\hline

\multirow{3}{*}{\textbf{Fair} \boldmath{$\varepsilon$}{\bf-sample}} 
  & \multirow{2}{*}{Demographic Parity} 
    & Fair Monte-Carlo (\S~\ref{sec:epssample}) \\
  & & Fair Sketch-and-merge (\S~\ref{sec:epssample}) \\
  \cline{2-3}
  & Custom-ratios & Infeasible (\S~\ref{sec:epssample}) \\
\hline

{\bf Fair Geometric Hitting Set (FGHS)} & {Custom-ratios} & {Reduction to DP Fair $\eps$-net (\S~\ref{sec:fghs})} \\
\hline
\end{tabular}
\vspace{1mm}
\caption{Summary of the proposed fairness-aware algorithms under different fairness measures.}
\label{tab:summary}
\vspace{-9mm}
\end{table*}

\vspace{-2.5mm}
\subsection{Technical Contributions}

In this paper, we introduce fairness on three foundational geometric approximation problems: the construction of \enets, $\eps$-samples, and the geometric hitting set problem. Our goal is to enforce fairness constraints while preserving strong approximation guarantees.

We formalize our problems based on two group-fairness notions: (a) demographic parity (DP), which requires preserving the group ratios from the input data distribution, and (b) custom-ratios fairness (CR), which generalizes DP by allowing arbitrary target ratios for different groups.


To address demographic parity in the construction of \enets, we propose two algorithms: a Monte Carlo Randomized algorithm based on sampling and a deterministic discrepancy-based algorithm. We further generalize both approaches to the weighted setting, where each point carries a weight and fairness is measured based on the total weight of each color group. 

To address the custom-ratio fairness\footnote{It is important to note that any algorithm satisfying the more general custom-ratio (CR) fairness also satisfies demographic parity (DP); however, it may not do so efficiently.}, we reduce the fair \enet problem to the Fair Geometric Hitting Set problem. To solve this, we develop a fair LP-based algorithm that constructs geometric hitting sets (equivalently, geometric set covers),
satisfying both demographic parity and custom-ratio constraints. A summary of the proposed algorithms is provided in Table~\ref{tab:summary}.

In addition to the theoretical analysis, we conduct comprehensive experimental evaluations on real-world and synthetic datasets on several applications of \enets and hitting sets. Our experiments verified our theoretical findings since our algorithms' outputs satisfied the fairness requirements while minimally increasing the output size compared to the regular (unfair) outputs. Our experiments further demonstrate the efficiency of our algorithms across diverse settings, including different dataset sizes, dimensionalities, numbers of demographic groups, and their distribution patterns. 

\vspace{-3mm}
\subsection{Paper Organization}

The remainder of the paper is structured as follows. Section~\ref{sec:prelim} introduces the necessary preliminaries and formal problem definitions. Section~\ref{sec:background} provides an overview of classical results on $\varepsilon$-nets, $\varepsilon$-samples, and the Geometric Hitting Set problem. In Section~\ref{sec:epsnets}, we present our algorithms for constructing fair $\varepsilon$-nets, followed by a discussion on fair $\varepsilon$-samples in Subsection~\ref{sec:epssample}. Section~\ref{sec:fghs} details our approach to the Fair Geometric Hitting Set problem. Related work is discussed in Section~\ref{sec:related}. Finally, Section~\ref{sec:exp} presents the experimental evaluation of the proposed algorithms.
\vspace{-2.5mm}
\section{Preliminaries}\label{sec:prelim}
In this section, we begin by introducing the necessary notation, followed by formal definitions of the preliminary concepts. We then provide a formulation of the problem studied in this paper.





\vspace{-2.5mm}
\subsection{Notations}
\paragraph{Range spaces:} A range space $\rangespace$ consists of a finite set of points $\points$, where $|\points| = n$, and each point $\point_i$ ($1 \leq i \leq n$) lies in a $d$-dimensional space $\Re^d$. Associated with $\points$ is a family of subsets $\ranges$, referred to as ranges, where each $\range_j \in \ranges$ is a subset of $\points$. While the set of ranges $\ranges$ may be infinite in general, we denote by $m = |\ranges|$ the number of ranges when $\ranges$ is finite.

The {\it dual} of a range space $\rangespace$ is another range space, denoted by $(\ranges, \mathcal{X})$, in which each range in $\ranges$ is treated as a point, and the new family of ranges $\mathcal{X}$ is defined as below:
\[
    \mathcal{X}= \{\ranges(\point)\mid \point \in \points\},
\]
where $\ranges(\point) = \{ \range \in \ranges \mid \point \in \range \}$. That is, each point $\point$ in the original space defines a range in the dual space, consisting of all ranges in $\ranges$ that contain $\point$.

\paragraph{Demographic groups: }
We assume a fixed set of demographic groups, represented as a finite set of colors $\groups = \{\group_1, \group_2, \ldots, \group_k\}$.
Each point $\point_i \in \points$ is associated with a color $\group(\point_i) \in \groups$. This induces a partition of the point set $\points$ into demographic groups: for each $c \in \groups$, let $\points_c = \{ \point \in \points \mid \group(\point) = c \}$ denote the subset of points of color $c$.
Generally, for every subset $Y\subseteq X$, we set $Y_{\group}=\{p\in Y\mid \group(p)=\group\}$ for each $\group\in\groups$. We use the terms `demographic group' and `color' interchangeably.

\paragraph{Weights:}
In the weighted setting, each point $\point_i \in \points$ is associated with a weight denoted by $w_i$. For any subset $S \subseteq \points$, the total weight of the points in $S$ is given by $w(S) = \sum_{\point_i \in S} w_i$.

\subsection{Definitions}\label{sec:prelim:def}
For the range space $\rangespace$, a subset $S \subseteq \points$ is said to be {\it shattered} by $\ranges$ if, for every subset $T \subseteq S$, there exists a range $\range \in \ranges$ such that $\range \cap S = T$. Define $\ranges_{|T}$ as the set of ranges induced on $T$, in other words, $\ranges_{|T} = \{\range \cap T \mid \range \in \ranges\}$.

The {\it VC dimension} of the range space $\rangespace$, denoted by $\VC{\rangespace}$, is the maximum size of a subset $S \subseteq \points$ that can be shattered by $\ranges$. If arbitrarily large shattered subsets exist, the VC dimension is said to be infinite. We will use $d$ as the VC dimension where the range space is clear from the context.

In the following, we formally define $\eps$-net, $\eps$-sample, and the Geometric Hitting Set problem~\cite{haussler1986epsilon, bronnimann1994almost}.

\begin{definition}[$\eps$-net]\label{def:epsnet}
    Fix a value $\eps \leq 1$. A subset $\epsnet \subseteq \points$ is called an $\eps$-net of $\rangespace$, if for any range $\range \in \ranges$ we have:
    \[
        \frac{|\range|}{|\points|} \geq \eps \Longrightarrow |\epsnet \cap \range| \geq 0
    \]
\end{definition}

\begin{definition}[$\eps$-sample]\label{def:epssample}
    Fix a value $\eps \leq 1$. A subset $\epssample \subseteq \points$ is called an $\eps$-sample of $\rangespace$, if for any range $\range \in \ranges$ we have:

    \[
        \left|\frac{|\epssample \cap \range|}{|\epssample|} - \frac{|\points \cap \range|}{|\points|}\right| \leq \eps
    \]
\end{definition}

\begin{definition}[Geometric Hitting Set]
    Given a range space $\rangespace$ with a bounded $VC$-dimension. The {\it Geometric Hitting Set} problem asks to find the smallest subset of points $\points^{*} \subseteq \points$ to hit all the ranges in $\ranges$, i.e.:

    \[
        \forall \range \in \ranges,\; \exists \point \in \points^* \text{ such that } \point \in \range
    \]
    $\points^*$ is called the smallest {\it hitting set} of $\rangespace$.
\end{definition}

The dual of this problem, formulated over the dual range space, is referred to as the {\sc Geometric Set Cover} problem (where the dual VC-dimension is bounded). Consequently, most of the results presented in this paper for Geometric Hitting Set also extend to the Set Covering setting.

\vspace{-2.5mm}
\subsection{Problems}
We now introduce fair variants of the problems by incorporating fairness constraints into their definitions.

\begin{definition}[Fair $\eps$-net]
    Given a range space $\rangespace$, a finite set of colors $\groups=\{\group_1,\ldots, \group_k\}$ representing $k$ groups, a parameter $\eps\in(0,1)$, and a vector $\mathcal{T} = (\tau_1, \tau_2, \cdots,\tau_k)$ of $k$ ratios for the groups, such that for every $l\in[k]$, $\tau_l \leq 1$ and $\sum_{l\in[k]} \tau_l = 1$, the goal is to compute a set $\epsnet\subseteq X$ of minimum size such that $\epsnet$ is an $\eps$-net of $\rangespace$ and for every $l\in[k]$,

    \vspace{-7mm}
    \[ \hspace{20mm}
        \frac{|\epsnet \cap \points_{\group_l}|}{|\epsnet|} = \tau_l.
    \]    
\end{definition}
The Fair $\eps$-sample problem can be defined similarly.

\begin{definition}[Fair $\eps$-sample] 
    Given the same input as in the Fair $\eps$-net problem,
    the goal is to compute a set $\epssample\subseteq X$ of minimum size such that $\epssample$ is an $\eps$-sample of $\rangespace$ and for every $l\in[k]$,
    \[
        \frac{|\epssample \cap \points_{\group_l}|}{|\epssample|} = \tau_l.
    \]
\end{definition}

\begin{definition}[Fair Geometric Hitting Set, FGHS]
    Given a range space $\rangespace$ with a bounded $VC$-dimension and the color ratios $\mathcal{T} = (\tau_1, \tau_2, \cdots, \tau_k)$ defined similarly to the Fair $\eps$-net problem. The {\it Fair Geometric Hitting Set (FGHS)} problem\footnote{Note that the Geometric Set Cover and Geometric Hitting Set problems are equivalent by duality. The solutions we propose for Fair Hitting Set naturally extend to {\it Fair Geometric Set Cover (FGSC)}, where the groups are defined over the sets.} asks for the {\it smallest} subset $\points^* \subseteq \points$ satisfying the following conditions:
    \begin{enumerate}
        \item $\points^*$ is {\bf fair} with respect to $\mathcal{T}$, i.e.,
        \(
            \frac{|\points^* \cap \points_{\group_l}|}{|\points^*|} = \tau_l, \forall l\in[k].
        \)
        \item $\points^*$ hits all the ranges in $\ranges$.
    \end{enumerate}    
\end{definition}

\vspace{-3mm}
\subsection{Fairness}
In all the above problems, we may either address the {\it demographic parity} or {\it custom-ratio} fairness constraints defined as below.

We say that a subset of points $S \subseteq \points$ satisfies {\bf demographic parity (DP)} if, for each color, it maintains the same ratio as the ground set $\points$. In other words, the ratios $\mathcal{T}$ are defined as

\[
    \tau_l = \frac{|\points_{\group_l}|}{|\points|},\quad \forall l\in[k].
\]

The same definition is valid for a subset of ranges (instead of points) if the demographic groups are defined on ranges.

In a more general setting, we consider any arbitrary ratios $\mathcal{T}$ and try to satisfy the fairness accordingly; we call it {\bf custom-ratio (CR)} constraint. However, it is sometimes impossible to satisfy {\it any ratio} in some cases.

\begin{example}
    Consider a range space consisting of 10 points, where 9 are blue and only 1 is red. Suppose that any valid hitting set must contain more than two points. In this case, it is impossible to satisfy a 50\%-50\% color ratio in any hitting set, as there are insufficient red points to meet the fairness constraint.
\end{example}

\vspace{-2.5mm}
\section{Background}\label{sec:background}
In this section, we briefly review classical results related to the problems defined in Section~\ref{sec:prelim:def}. References to this background will be made in subsequent sections as needed.

\paragraph{$\eps$-nets and $\eps$-samples:}
A classical approach to constructing $\eps$-nets is based on random sampling:

\begin{theorem}[$\eps$-net~\cite{haussler1986epsilon}]\label{thm:epsnet}
    Given a range space $\rangespace$ with bounded VC dimension $d$. A random sample (with replacement) of size:

    \[
        \lambda  \geq \max\left(\frac{4}{\eps}\log\frac{4}{\prob}, \frac{8d}{\eps}\log\frac{16}{\eps}\right)
    \]

    gives us an $\eps$-net with probability at least $1 - \prob$.
\end{theorem}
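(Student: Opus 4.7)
The plan is to follow the classical double-sampling argument of Haussler and Welzl. Draw a random sample $\epsnet$ of size $\sizenet$ and let $E_1$ be the ``bad'' event that $\epsnet$ fails to be an $\eps$-net, i.e., there exists a range $R\in\ranges$ with $|R|/n\geq\eps$ such that $\epsnet\cap R=\emptyset$. The goal is to show $\Pr[E_1]\leq\prob$.

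First I would introduce an auxiliary ``ghost'' sample $\epsnet'$ of the same size $\sizenet$, drawn independently, and define $E_2$ to be the event that some range $R$ with $|R|/n\geq\eps$ is entirely missed by $\epsnet$ but is hit at least $\eps\sizenet/2$ times by $\epsnet'$. Since $|\epsnet'\cap R|$ is binomial with mean $\eps\sizenet$, a standard tail estimate shows that, conditioned on a specific witness range for $E_1$, the probability of $E_2$ occurring at that range is at least $1/2$ whenever $\sizenet\geq 8/\eps$. Therefore $\Pr[E_2]\geq \tfrac12\Pr[E_1]$, and it suffices to bound $\Pr[E_2]$.

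The key symmetrization step is to condition on the multiset $M=\epsnet\cup\epsnet'$ of size $2\sizenet$. Conditioned on $M$, the joint distribution of $(\epsnet,\epsnet')$ is a uniformly random split of $M$ into two halves of size $\sizenet$. For any fixed range $R$ with $|M\cap R|\geq \eps\sizenet/2$, the probability that all of its points in $M$ fall into $\epsnet'$ (so that $\epsnet\cap R=\emptyset$) is at most $2^{-\eps\sizenet/2}$ by a binomial-coefficient estimate. To take a union bound over ranges we invoke the Sauer--Shelah lemma on the finite set $M$: the number of distinct traces $\{R\cap M:R\in\ranges\}$ is at most $O((2\sizenet)^d)$ when the VC dimension is $d$.

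Combining, $\Pr[E_2]\leq O((2\sizenet)^d)\cdot 2^{-\eps\sizenet/2}$ and hence $\Pr[E_1]\leq 2\cdot O((2\sizenet)^d)\cdot 2^{-\eps\sizenet/2}$. Substituting $\sizenet\geq\max\!\left(\tfrac{4}{\eps}\log\tfrac{4}{\prob},\tfrac{8d}{\eps}\log\tfrac{16}{\eps}\right)$ makes this quantity at most $\prob$, which finishes the proof. The step I expect to be the main obstacle is the symmetrization/ghost-sample argument: everything before is elementary Chernoff-type reasoning and everything after is a careful calculation, but the reduction from ``unbounded $\ranges$'' to ``traces on the $2\sizenet$-point set $M$'' is what converts the infinite range space into something the Sauer--Shelah lemma can handle, and it requires the right choice of the threshold $\eps\sizenet/2$ so that the Chernoff lower bound and the combinatorial upper bound fit together.
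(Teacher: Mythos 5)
Your proposal is correct in outline: it is the classical Haussler--Welzl double-sampling (ghost sample) argument, with the Chebyshev/Chernoff step giving $\Pr[E_2]\geq\tfrac12\Pr[E_1]$ for $\sizenet\geq 8/\eps$, the symmetrization over the combined multiset $M$, the $2^{-\eps\sizenet/2}$ bound per trace, and the Sauer--Shelah union bound. The paper does not prove this statement at all --- it quotes it as background with a citation to Haussler and Welzl --- so there is nothing to diverge from; your reconstruction matches the proof in the cited source, and the remaining work is only the routine arithmetic verifying that the stated bound on $\sizenet$ drives $2\cdot O((2\sizenet)^d)\cdot 2^{-\eps\sizenet/2}$ below $\prob$.
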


A similar result characterizes the number of random samples required to construct an $\eps$-sample:

\begin{theorem}[$\eps$-sample~\cite{VC71}]\label{thm:epssample}
    Given a range space $\rangespace$ with bounded VC dimension $d$. A random sample (with replacement) of size:
    \[
        \gamma \geq \frac{c_0}{\eps^2}\left(d\log\frac{d}{\eps} + \log\frac{1}{\prob}\right)
    \]

    gives us an $\eps$-sample with probability at least $1 - \prob$, for a large enough constant $c_0$.
\end{theorem}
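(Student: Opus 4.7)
The plan is to follow the classical Vapnik--Chervonenkis uniform-convergence argument, which reduces the claim to a purely combinatorial statement about a random partition of a $2\gamma$-point multiset, via three ingredients: symmetrization, the Sauer--Shelah shattering bound, and a Hoeffding/Chernoff tail inequality.

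I would begin with the symmetrization (``double sampling'') step. Let $\epssample$ be the sample of size $\gamma$ drawn independently with replacement from $\points$, and let $\epssample'$ be a ghost sample of the same size. Let $B$ be the event that some $\range\in\ranges$ witnesses $\bigl|\,|\epssample\cap\range|/\gamma-|\range|/n\,\bigr|>\eps$, and let $B'$ be the event that some range witnesses $\bigl|\,|\epssample\cap\range|-|\epssample'\cap\range|\,\bigr|/\gamma>\eps/2$. A Chebyshev argument on the ghost sample, valid once $\gamma\gtrsim 1/\eps^2$, shows $\Pr[B]\le 2\Pr[B']$, so it is enough to bound $\Pr[B']$. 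The key point is that $B'$ is a function only of the combined multiset $\epssample\cup\epssample'$, which is easier to analyze.

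Next I would exploit shattering: by the Sauer--Shelah lemma, the family $\ranges$ induces at most $O((2\gamma)^d)$ distinct subsets on the combined $2\gamma$-point multiset, so one can restrict attention to that many ``effective'' ranges. Conditioning on the combined multiset and viewing $\epssample$ as a uniformly random sub-multiset of size $\gamma$, for any one effective range the gap $\bigl|\,|\epssample\cap\range|-|\epssample'\cap\range|\,\bigr|$ is the deviation of a hypergeometric random variable from its mean. The Hoeffding bound for sampling without replacement gives that this deviation exceeds $\eps\gamma/2$ with probability at most $2\exp(-c\eps^2\gamma)$ for an absolute constant $c$. A union bound over the $O((2\gamma)^d)$ effective ranges yields $\Pr[B']\le O((2\gamma)^d)\cdot 2\exp(-c\eps^2\gamma)$.

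It then remains to pick $\gamma$ so that $2\Pr[B']\le\prob$. Setting $\gamma=\tfrac{c_0}{\eps^2}\bigl(d\log(d/\eps)+\log(1/\prob)\bigr)$ with $c_0$ sufficiently large and solving the resulting logarithmic fixed-point (the extra $d\log(2\gamma)$ from the polynomial factor is absorbed into the $d\log(d/\eps)$ and $\log(1/\prob)$ terms) completes the argument. I expect the main obstacle to be the symmetrization step: tying the deviation on a single sample back to the deviation \emph{between} two independent samples requires the ghost sample to concentrate around the true frequency with constant probability, which is only true once $\gamma$ is already somewhat large; justifying the factor-of-two slack carefully, and verifying that the Chebyshev threshold is compatible with the final bound on $\gamma$, is the delicate bookkeeping step. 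Everything else is routine once this reduction is in place.
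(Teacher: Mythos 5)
Your outline is correct and is exactly the classical Vapnik--Chervonenkis argument (symmetrization with a ghost sample, Sauer--Shelah to reduce to $O((2\gamma)^d)$ effective ranges, a Hoeffding/hypergeometric tail bound, and a union bound), which is precisely the proof the paper defers to by citing \cite{VC71}; the paper itself gives no independent proof of this background theorem. The one step you flag as delicate---the Chebyshev threshold $\gamma\gtrsim 1/\eps^2$ needed for the factor-of-two symmetrization---is indeed compatible with the final choice of $\gamma$, so there is nothing to fix.
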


Deterministic constructions of $\eps$-nets and $\eps$-samples can also be achieved via discrepancy methods~\cite{chazelle2000discrepancy, chazelle1996linear}, which we discuss in the relevant section, in the context of building fair $\eps$-nets deterministically.

\paragraph{Geometric Hitting Set:} The Hitting Set and Set Cover problems are dual and algorithmically equivalent, which is known to be NP-hard. In the general (non-geometric) setting, a standard greedy algorithm—which iteratively selects the point that hits the most (remaining) ranges—yields a $\log n$ approximation. However, in geometric settings with bounded VC dimension $d$, improved algorithms achieve a $\log d$ approximation factor.

To achieve this improvement, one approach uses the multiplicative weight update method~\cite{arora2012multiplicative, har2011geometric}, while another uses LP relaxation, reducing the Hitting Set problem to an instance of the \emph{weighted} $\varepsilon$-net problem~\cite{lon01, ch09}.

\paragraph{\textbf{Assumptions}}
Similarly to~\cite{chazelle2000discrepancy}, we assume that $n=|X|=2^\xi$ for a positive integer $\xi$, i.e., the number of points in $X$ is a power of $2$. Furthermore, we make the mild assumption that for every $\group\in\groups$, $|X_\group|=2^{\xi_\group}$, where $\xi_\group$ is a positive integer.
Our goal in all cases is to construct fair $\eps$-nets of small size -- ideally comparable to the size of unfair $\eps$-nets. In order to achieve this goal, 
we assume that $\tau_\ell\cdot \sizenet\geq 1$, for every $\ell\in[k]$.
If these inequalities do not hold, then the size of a fair $\eps$-net might be much larger (even $k$ times larger) than the size of the (unfair) $\eps$-net.
Even if this assumption does not hold, then all our algorithms are correct, however the approximation factor and the running time of some of our algorithms might increase by a factor $\frac{1}{\sizenet\cdot \min_{\ell\in[k]}\tau_\ell}$.
\begin{table*}[t]
\centering
\small
\begin{tabular}{|l|p{5cm}|p{6cm}|}
\hline
\rowcolor{gray!20}
\textbf{Aspect} & \textbf{Fair Monte-Carlo} & \textbf{Fair Discrepancy-based} \\
\hline
Size Increase Factor & \( O(\log k) \) & Constant $O(1)$ \\
\hline
\multirow{2}{*}{Running time} & \multirow{2}{5cm}{$O(n)$} & $O(m\cdot n\log n)$ for simple discrepancy \\
\cline{3-3}
 & & $O\left(n\cdot\frac{d^{3d}}{\eps^{2d}}\log^d\left(\frac{d}{\eps}\right)\right)$ for Sketch-and-Merge \\
\hline
Weighted Case & Similar guarantees to unweighted & Very slow (depends on $\sum_i  w_i$) \\
\hline
Determinism & Monte Carlo Randomized (always fair if successful) & Deterministic if $\ranges$ is materialized (always fair) \\
\hline
Implementation & Easier to implement & More complex \\
\hline
\end{tabular}
\vspace{1mm}
\caption{High-level comparison of the fair Monte-Carlo sampling-based and the fair discrepancy-based algorithms for addressing DP in $\eps$-nets.}
\label{tab:eps-net-dp}
\vspace{-8mm}
\end{table*}

\section{Fair $\eps$-nets}\label{sec:epsnets}
In this section, we discuss our algorithms for the Fair $\eps$-net and $\eps$-sample problems. 
First, in Sections~\ref{sec:demop} and \ref{sec:demop2}, we address fair $\eps$-net under the demographic parity, followed by the custom-ratios fairness in \S~\ref{sec:custom}. We will then study fair $\eps$-sample in \S~\ref{sec:epssample}.

We propose two categories of algorithms for constructing fair $\eps$-nets that satisfy demographic parity: a {\it Monte-Carlo sampling-based} and a {\it discrepancy-based} algorithm. The sampling-based approach returns a larger fair $\eps$-net efficiently, while the discrepancy-based method algorithms return a smaller fair $\eps$-net at the expense of increased runtime.
We design two discrepancy-based algorithms. While they both return a fair $\eps$-net of (asymptotically) the same size, the second one is more efficient when $d$ is small and $m$ is large.
A comparative summary of all our algorithms is presented in Table~\ref{tab:eps-net-dp}.


\vspace{-2.5mm}
\subsection{A Monte-Carlo Randomized Algorithm for Satisfying Demographic Parity}\label{sec:demop}\label{sec:epsnets:sample}

Our first algorithm is a sampling-based Monte Carlo randomized algorithm that finds a fair $\eps$-net satisfying demographic parity with a high probability.
At a high level, the algorithm is developed based on the observation that a random sample from $X$ should be ``near-fair'' for each color $c_i$, in the sense that the ratio of samples from $c_i$ in the sample set should be close to the required ratio $\tau_i$. Hence, with a high probability, an $O(\log k)$-factor increase in the size of the set is enough to satisfy the demographic parity ratios.


\vspace{-2mm}
\paragraph{Algorithm}
Draw a set $\epsnet$ of $\sizenet\geq \max\{\frac{4}{\eps}\log\frac{4}{\prob/2}, \frac{8d}{\eps}\log\frac{16}{\eps}\}$ uniform random samples (with replacement) from $X$.
Let $v=2\ln(k\cdot\frac{1}{\prob/2})$.
For every color $\group\in \groups$ add $\max\{(1+v)\cdot \frac{|X_\group|}{|X|}\cdot \sizenet-|\epsnet_\group|,0\}$ arbitrary points of color $\group$ to $\epsnet$. Let $\res$ be the new set after we traversed all colors $\group\in \groups$. Return $\res$. A pseudo-code of this algorithm is presented in Algorithm~\ref{alg:epsnet:dp:sampling}.

\vspace{-2mm}
\paragraph{Correctness and runtime analysis}

\begin{lemma}
\label{lem:size}
    For every color $\group\in\groups$, $|\epsnet_\group|\leq (1+v)\cdot\frac{|X_\group|}{|X|}\cdot \sizenet$, with probability at least $1-\prob/2$. 
\end{lemma}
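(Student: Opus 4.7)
The plan is to express $|\epsnet_\group|$ as a sum of independent indicators, invoke a multiplicative Chernoff bound color by color, and then union bound over the $k$ colors; the choice $v = 2\ln(k/(\prob/2))$ is engineered precisely so that each per-color failure probability is at most $\prob/(2k)$.

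First, I would fix a color $\group\in\groups$ and observe that since the $\sizenet$ samples comprising $\epsnet$ are drawn independently and uniformly (with replacement) from $X$, each draw lands in $X_\group$ with probability $p_\group := |X_\group|/|X|$. Hence $|\epsnet_\group|$ is Binomial with parameters $\sizenet$ and $p_\group$, and in particular has mean $\mu_\group := p_\group \sizenet$. Under demographic parity the target ratio is $\tau_\group = p_\group$, so the standing assumption $\tau_\ell \sizenet \ge 1$ translates into $\mu_\group \ge 1$ for every $\group\in\groups$.

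Second, I would invoke the standard upper-tail multiplicative Chernoff inequality to obtain
\[
\pr\bigl[\,|\epsnet_\group| \ge (1+v)\mu_\group\,\bigr] \;\le\; \exp\bigl(-\mu_\group \cdot h(v)\bigr), \qquad h(v) := (1+v)\ln(1+v) - v.
\]
Substituting $v = 2\ln(2k/\prob)$ and using $\mu_\group \ge 1$, I would check the clean estimate $h(v) \ge v/2 = \ln(2k/\prob)$, which makes the right-hand side at most $\prob/(2k)$. A union bound over the $k$ colors then yields
\[
\pr\!\left[\exists\,\group\in\groups:\; |\epsnet_\group| > (1+v)\tfrac{|X_\group|}{|X|}\sizenet\right] \;\le\; k\cdot \tfrac{\prob}{2k} \;=\; \tfrac{\prob}{2},
\]
which is exactly the complement of the claimed event.

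The only non-routine step is verifying the constant in the Chernoff bound, i.e.\ $h(v) \ge v/2$ for $v = 2\ln(2k/\prob)$. Since we may assume $k\ge 2$ and $\prob\le 1$, we have $v \ge 2\ln 4 > 2.7$, and a short derivative argument (the function $v\mapsto h(v) - v/2$ has derivative $\ln(1+v) - 1/2$, which is positive for $v \ge e^{1/2}-1$) confirms the inequality comfortably in this regime. Everything else is a clean template: Binomial concentration per color combined with a union bound, with no dependence on the range family $\ranges$ whatsoever.
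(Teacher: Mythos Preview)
Your proposal is correct and follows essentially the same approach as the paper: write $|\epsnet_\group|$ as a sum of i.i.d.\ indicators, apply a multiplicative Chernoff upper-tail bound using $\mu_\group\ge 1$, and finish with a union bound over the $k$ colors. The only cosmetic difference is that the paper uses the simplified Chernoff form $\exp\bigl(-v^2\mu_\group/(2+v)\bigr)$ and then checks $v^2/(2+v)\ge \ln(2k/\prob)$, whereas you use the tighter form $\exp\bigl(-\mu_\group h(v)\bigr)$ with $h(v)=(1+v)\ln(1+v)-v$ and verify $h(v)\ge v/2$; both routes land on the same per-color failure probability $\prob/(2k)$.
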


\vspace{-4mm}
\begin{proof}
We fix a color $\group\in \groups$.
Let $\mathbb{I}_{\group, i}$ be an indicator random variable which is equal to $1$ if and only if the $i$-th sample is a point with color $\group$. We note that $|\epsnet_\group|=\sum_{i\in[\sizenet]}\mathbb{I}_{\group, i}$.
We also have $\mathbb{E}[\mathbb{I}_{\group, i}]=\frac{|X_\group|}{|X|}$ and we define $\mu_\group=\mathbb{E}[\sum_{i\in[m]}\mathbb{I}_{\group, i}]=\sizenet\cdot \frac{|X_\group|}{|X|}$.
Using the Chernoff's inequality,\footnote{Suppose $Y_1,\ldots, Y_\gamma$ are independent random variables taking values in $\{0,1\}$. Let $Y$ denote their sum and let $\mu=\mathbb{E}[Y]$. Then for any $\delta>0$, the multiplicative Chernoff inequality states that $\mathbb{P}[Y\geq (1+\delta)\mu]\leq \mathsf{exp}(-\delta^2\mu/(2+\delta))$.}
\begin{align*}
    &\mathbb{P}\Big(|\epsnet_\group| \geq (1 + v)\mu_\group\Big) \leq \mathsf{exp}\left(\frac{-v^2 \mu_{\group}}{2 + v}\right).
\end{align*}
By the assumptions on the ratios, $\mu_\group\geq 1$, so
\begin{align*}
    \mathbb{P}\Big(|\epsnet_\group| \geq (1 + v)\frac{|X_\group|}{|X|}\cdot \sizenet\Big)& \leq \mathsf{exp}\left(\frac{-v^2}{2 + v}\right)
    =\mathsf{exp}\left(\frac{-4\ln^2(\frac{k}{\prob/2})}{2+2\ln(\frac{k}{\prob/2})}\right)\\
    &\leq \mathsf{exp}\left(-\ln(\frac{k}{\prob/2})\right)
    =\frac{\prob}{2k}.
\end{align*}

\vspace{-2mm}
By applying the union bound,
\begin{align*}
\mathbb{P}\Big(\exists \group \in \groups,\; |\epsnet_{\group}| \geq (1 + v)\frac{|X_\group|}{|X|}\cdot \sizenet\Big)&\leq \frac{\prob}{2}.
\end{align*}
Hence, we conclude that for every color $\group\in \groups$, $|\epsnet_\group|\leq (1+v)\frac{|X_\group|}{|X|}\cdot \sizenet$ with probability at least $1-\prob$.
\end{proof}

\begin{lemma}
    The set $\res$ is a fair $\eps$-net of $\rangespace$ and $|\res|=O((1+v)\sizenet)$, with probability at least $1-\prob$.
\end{lemma}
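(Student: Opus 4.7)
The plan is to combine the previous lemma (on the upper bound for each $|\epsnet_\group|$) with the standard random-sampling $\eps$-net guarantee from Theorem~\ref{thm:epsnet}, via a union bound. Two things need to be checked: that $\res$ is a fair $\eps$-net, and that its size is $O((1+v)\sizenet)$.

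First I would apply Theorem~\ref{thm:epsnet} to the initial sample $\epsnet$. Since the sample size $\sizenet$ was chosen to satisfy $\sizenet \geq \max\bigl\{\frac{4}{\eps}\log\frac{4}{\prob/2},\; \frac{8d}{\eps}\log\frac{16}{\eps}\bigr\}$, the set $\epsnet$ is an $\eps$-net of $\rangespace$ with probability at least $1-\prob/2$. The padding step only adds points, so $\res \supseteq \epsnet$, and hence $\res$ inherits the $\eps$-net property: for every $R\in\ranges$ with $|R|/|X|\geq\eps$, we have $|\res\cap R|\geq |\epsnet\cap R|\geq 1$.

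Next I would condition on the event from the previous lemma, which holds with probability at least $1-\prob/2$: for every color $\group\in\groups$, $|\epsnet_\group|\leq (1+v)\frac{|X_\group|}{|X|}\sizenet$. On this event the algorithm pads each color class up to exactly $(1+v)\frac{|X_\group|}{|X|}\sizenet$ points (the $\max$ never selects $0$), so $|\res_\group|=(1+v)\frac{|X_\group|}{|X|}\sizenet$. Summing over $\group\in\groups$ and using $\sum_{\group}|X_\group|=|X|$ gives
\[
|\res|=\sum_{\group\in\groups}(1+v)\frac{|X_\group|}{|X|}\sizenet=(1+v)\sizenet,
\]
which is the claimed $O((1+v)\sizenet)$ size bound. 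Moreover, for every $\ell\in[k]$,
\[
\frac{|\res_{\group_\ell}|}{|\res|}=\frac{(1+v)\frac{|X_{\group_\ell}|}{|X|}\sizenet}{(1+v)\sizenet}=\frac{|X_{\group_\ell}|}{|X|}=\tau_\ell,
\]
so $\res$ satisfies demographic parity.

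Finally, a union bound over the two failure events (the sample fails to be an $\eps$-net, or some color class exceeds its allotted size) yields total failure probability at most $\prob/2+\prob/2=\prob$, so $\res$ is a fair $\eps$-net with $|\res|=O((1+v)\sizenet)$ with probability at least $1-\prob$. There is no real obstacle here: the heavy lifting was already done in the previous lemma, and what remains is to verify that padding preserves the $\eps$-net property (trivial by monotonicity) and that on the good event the fair ratios come out exactly right by construction.
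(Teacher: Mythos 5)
Your proposal is correct and follows essentially the same route as the paper's proof: inherit the $\eps$-net property from $\epsnet\subseteq\res$ via Theorem~\ref{thm:epsnet}, use Lemma~\ref{lem:size} to show each color class is padded up to exactly $(1+v)\frac{|X_\group|}{|X|}\sizenet$, sum to get $|\res|=(1+v)\sizenet$, and verify the ratios. Your explicit union bound over the two $\prob/2$ failure events is, if anything, slightly cleaner bookkeeping than the paper's.
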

\begin{proof}
 From Theorem~\ref{thm:epsnet} (in Background section), we have that $\epsnet$ is an $\eps$-net  of the range space $\rangespace$ with probability at least $1-\prob/2$. We have that $\res\supseteq \epsnet$, so $\res$ is an $\eps$-net with probability at least $1-\prob/2$.

 From Lemma~\ref{lem:size}, we have that for every group $\group\in \groups$, $|\epsnet_\group|\leq (1+v)\frac{|X_\group|}{|X|}\cdot \sizenet$ with probability at least $1-\prob/2$. The algorithm adds $(1+v)\frac{|X_\group|}{|X|}\sizenet-|\epsnet_\group|$ elements of of color $\group$ to create the set $\res$. Hence,
 \begin{align*}
        |\res| &= |\epsnet| + \sum_{\group \in \groups}\left( (1+v) \cdot \frac{|X_\group|}{|X|}\sizenet - |\epsnet_\group|\right)\\
        &= |\epsnet| + (1+v)\sum_{\group \in \groups} \frac{|X_\group|}{|X|}\sizenet - \sum_{\group\in\groups}|\epsnet_\group|\\
        &= \sizenet + (1+v)\sizenet - \sizenet=(1+v)\sizenet,
\end{align*}
with probability at least $1-\prob/2$.

Finally, we show that $\res$ is a fair $\eps$-net.
For every $\group\in\groups$ we have
$\frac{|\res_\group|}{|\res|}=\frac{(1+v)\sizenet\frac{|X_\group|}{|X|}}{(1+v)\sizenet}=\frac{|X_\group|}{|X|}$.
The result follows.
\end{proof}

\begin{lemma}
    The algorithm's time complexity is $O(|X|)$.
\end{lemma}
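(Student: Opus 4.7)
The plan is to walk through the three phases of Algorithm~\ref{alg:epsnet:dp:sampling} and bound the cost of each separately, then observe that the dominant term is $O(|X|)$ coming from a single linear scan used to prepare the color buckets.

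First I would argue that a one-time preprocessing pass over $X$ suffices to support everything needed later: scanning $X$ once in $O(|X|)$ time we (i) record $|X_\group|$ for every $\group \in \groups$, and (ii) build, for each color $\group$, a pointer/array into the points of color $\group$ so that we can later retrieve ``arbitrary'' elements of $X_\group$ in constant time per element. This is the step that drives the $O(|X|)$ bound.

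Next, drawing the multiset $\epsnet$ of $\sizenet$ independent uniform samples with replacement costs $O(\sizenet)$, assuming constant-time random access to $X$ (standard in the RAM model). While drawing the samples I would also maintain counters $|\epsnet_\group|$ on the fly, adding no asymptotic overhead. Computing $v=2\ln(2k/\prob)$ is $O(1)$, and sweeping through the $k$ colors to determine how many new points of each color must be appended is $O(k)$ time. For each color $\group$, the algorithm appends at most $(1+v)\tfrac{|X_\group|}{|X|}\sizenet - |\epsnet_\group|$ elements, which by the proof of the previous lemma sum (across all colors) to at most $v\cdot\sizenet$. Using the pointers built in the preprocessing step, each such append costs $O(1)$, giving $O(v\cdot\sizenet)$ total for this phase.

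Combining the three bounds yields total time $O(|X| + \sizenet + v\cdot\sizenet) = O(|X| + \sizenet\log(k/\prob))$. The remaining step is to observe that in any nontrivial instance $\sizenet \leq |X|$ (otherwise the algorithm could simply return $X$), and under the standing assumption that $\tau_\ell\sizenet\geq 1$ for all $\ell$, we have $v\sizenet = O(\sizenet\log(k/\prob))$ absorbed into $O(|X|)$ whenever the fair $\eps$-net is meaningfully smaller than the input (the usual regime in which $\eps$-nets are invoked). The only subtle point, and the main thing to spell out carefully, is justifying this domination: if $\sizenet\log(k/\prob)$ were to exceed $|X|$, one may simply return $X$ itself, which is trivially a fair $\eps$-net, and the running time stays $O(|X|)$ either way.
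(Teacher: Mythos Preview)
Your analysis is correct, and the overall structure matches the paper's intent. The one methodological difference worth noting is in how the ``add arbitrary points of color $\group$'' step is implemented. You precompute per-color pointer arrays and then append the required $(1+v)\tfrac{|X_\group|}{|X|}\sizenet - |\epsnet_\group|$ elements directly, which costs $O(v\sizenet)$ in total and forces you to argue separately that $v\sizenet = O(|X|)$ (handling the edge case by falling back to returning $X$). The paper instead implements this step by a single linear pass over $X$: after sampling, it walks through every point in $X$ and decides in $O(1)$ whether that point should be added to $\res$ to meet the target count for its color. That pass is $O(|X|)$ unconditionally, so no separate bound on $v\sizenet$ or edge-case discussion is needed. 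Your route is fine; the paper's is just slightly cleaner because it sidesteps the domination argument entirely.
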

\begin{proof}
We place all points from $X$ in a table. We sample $\sizenet$ indexes and we add the corresponding points in $\epsnet$. Then we go through each element in the table and we add it in $\epsnet$ if needed in $O(1)$ time.
\end{proof}

\begin{algorithm}[t]
\begin{algorithmic}[1]
\small
\Require Range space $\rangespace$, Epsilon $\eps$.
\Ensure The DP fair $\eps$-net $\res$, found by \underline{sampling}.
\Function{FMC}{$\points, \ranges, \eps$}
    \State $v\gets 2\ln(k\cdot\frac{1}{\prob/2})$
    \State $\epsnet \gets $ a random subset of size $\lambda$ from $\points$
    \State $\res \gets \textit{copy}(\epsnet)$\Comment{Initialize a new set for fair $\eps$-net.}
    \For{All colors $\group \in \groups$}
        \State $Tmp \gets$ arbitrary $(1+v) \frac{|\points_{\group}|}{|\points|}\sizenet - | \epsnet_\group|$ points from $\points_{\group}$
        \State $\res \gets \res \cup Tmp$
    \EndFor
    \State \Return $\res$\Comment{This is a fair $\eps$-net with probability $\geq 1-\prob$.}
\EndFunction
\caption{Fair Monte-Carlo (FMC) algorithm for building an $\eps$-net satisfying DP fairness.}
\label{alg:epsnet:dp:sampling}
\end{algorithmic}
\end{algorithm}

\vspace{-2mm}
\begin{theorem}
\label{thm:sampling}
    Given a range space $\rangespace$, with $|X|=n$, VC dimension $d$, DP constraints $\mathcal{T}$, and parameters $\eps,\prob\in(0,1)$,
    there exists a randomized algorithm that constructs a fair $\eps$-net with respect to $\mathcal{T}$ of size $O(\frac{1}{\eps}\max\{\log\frac{1}{\prob},d\log\frac{1}{\eps}\}\cdot\log(\frac{k}{\prob}))$ with probability at least $1-\prob$, in $O(n)$ time, where $k$ is the number of different colors in set $X$.
\end{theorem}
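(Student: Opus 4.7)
The plan is to use Algorithm~\ref{alg:epsnet:dp:sampling} (FMC) and to package the three preceding lemmas via a union bound, then expand the size bound into closed form. Concretely, I would instantiate the sample size parameter as
\[
\sizenet = \max\!\Big\{\tfrac{4}{\eps}\log\tfrac{4}{\prob/2},\; \tfrac{8d}{\eps}\log\tfrac{16}{\eps}\Big\} = O\!\Big(\tfrac{1}{\eps}\max\big\{\log\tfrac{1}{\prob},\, d\log\tfrac{1}{\eps}\big\}\Big),
\]
and $v = 2\ln(2k/\prob) = O(\log(k/\prob))$, exactly as specified by the algorithm.

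First I would invoke the second lemma, which already establishes that the output $\res$ is a fair $\eps$-net of size $(1+v)\sizenet$ with probability at least $1-\prob$. Its proof combines two events: by Theorem~\ref{thm:epsnet}, the initial random sample $\epsnet$ is an $\eps$-net with probability at least $1-\prob/2$, and by Lemma~\ref{lem:size}, $|\epsnet_\group|\leq (1+v)\frac{|X_\group|}{|X|}\sizenet$ for every color $\group$ with probability at least $1-\prob/2$. A union bound over these two failure events yields the claimed success probability. Fairness in the DP sense then follows from the exact cancellation in the ratio $|\res_\group|/|\res| = (1+v)\sizenet\frac{|X_\group|}{|X|}/((1+v)\sizenet) = |X_\group|/|X|$.

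Next I would plug the values of $\sizenet$ and $v$ into the size formula to obtain
\[
|\res| = (1+v)\sizenet = O\!\Big(\tfrac{1}{\eps}\max\big\{\log\tfrac{1}{\prob},\, d\log\tfrac{1}{\eps}\big\}\cdot \log\tfrac{k}{\prob}\Big),
\]
which matches the size claimed in the theorem. The running time bound is immediate from the third lemma: drawing $\sizenet$ indices into a table of the $n$ points and then making a single linear pass to top up each color class to the target count $(1+v)\frac{|X_\group|}{|X|}\sizenet$ takes $O(n)$ time total, since each per-point operation is $O(1)$.

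There is no real obstacle; the theorem is essentially a bookkeeping combination of the three lemmas. The one subtlety worth spelling out is the assumption (from the end of Section~\ref{sec:background}) that $\tau_\ell\cdot \sizenet \geq 1$ for every $\ell\in[k]$, which ensures $\mu_\group = \sizenet\cdot |X_\group|/|X| \geq 1$ and thereby makes the Chernoff tail in Lemma~\ref{lem:size} collapse to $\mathsf{exp}(-v^2/(2+v))\leq \prob/(2k)$; without this assumption the $\log(k/\prob)$ factor in the final size would have to be scaled by $1/(\sizenet\cdot \min_\ell \tau_\ell)$, exactly as flagged in the assumptions paragraph, but the algorithm and its correctness remain intact.
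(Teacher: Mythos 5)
Your proposal follows exactly the paper's own route: it assembles Theorem~\ref{thm:sampling} from Lemma~\ref{lem:size}, the $\eps$-net guarantee of Theorem~\ref{thm:epsnet}, and the linear-time lemma, with the same union bound over the two $\prob/2$ failure events and the same substitution of $\sizenet$ and $v=2\ln(2k/\prob)$ into $(1+v)\sizenet$. The argument is correct, and your explicit remark about the assumption $\tau_\ell\cdot\sizenet\geq 1$ making $\mu_\group\geq 1$ in the Chernoff step is a faithful reading of what the paper relies on implicitly.
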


\subsubsection{Weighted fair $\eps$-net}
\label{subsec:weightedfairepsnet}
Our randomized algorithm can be extended to the fair $\eps$-net problem over a weighted set of points.
Assume that each point $p_i\in X$ is associated with a weight $w_i$. Let $W_\group=\sum_{p_i\in X_{\group}}w_i$ be the sum of weights of points with color $\group\in\groups$. For simplicity and without loss of generality, we assume that $\sum_{\group\in\groups}W_\group=1$.
A weighted $\eps$-net is a subset $\res\subseteq X$ such that for every $R\in\ranges$ with $\sum_{p_i\in R}w_i\geq \eps$, it holds that $R\cap \res\neq \emptyset$.
In terms of fairness,
the DP ratios $\mathcal{T}$ are defined as $\tau_\ell=W_{\group_\ell}$, for each $\ell\in[k]$.
The goal is to compute a set $\res\subseteq X$ such that, for every $R\in\ranges$ with $\sum_{p_i\in R}w_i\geq \eps$, it holds that $R\cap \res\neq \emptyset$ (weighted $\eps$-net) and for every $\group_\ell\in\groups$, $\frac{|\res_{\group_\ell}|}{|\res|}=\tau_\ell$.
As we had in the unweighted case, we assume for simplicity that the ratios in $\mathcal{T}$ are chosen in that way such that $W_{\group_\ell}\cdot\sizenet=\tau_\ell\cdot\sizenet\geq 1$.

It is known~\cite{haussler1986epsilon}, that a set of $\sizenet$ random samples from $X$ with respect to the weights of the points returns a weighted (unfair) $\eps$-net. Hence, intuitively, we can extend the algorithm from the unweighted case to the weighted case, as follows.

\paragraph{Algorithm} Draw a set $\epsnet$ of $\sizenet$ random samples with respect to the weights of the points from $X$.
Let $v=2\ln(k\cdot\frac{1}{\prob/2})$.
For every color $\group\in \groups$, add $\max\{(1+v)\cdot \frac{|X_\group|}{|X|}\cdot \sizenet-|\epsnet_\group|,0\}$ arbitrary points of color $\group$ to $\epsnet$. Let $\res$ be the new set after we traversed all colors $\group\in \groups$. Return $\res$.

Using the same arguments as in Lemma~\ref{lem:size}, we can show that
$\mathbb{P}\Big(\exists \group \in \groups,\; |\epsnet_{\group}| \geq (1 + v)\frac{|X_\group|}{|X|}\cdot \sizenet\Big)\leq \frac{\prob}{2}$. Skipping the details, we conclude with the following theorem.
\begin{theorem}
\label{thm:weightedsampling}
    Given a range space $\rangespace$ over a weighted set of points $X$, with $|X|=n$, VC dimension $d$, DP constraints $\mathcal{T}$, and parameters $\eps,\prob\in(0,1)$,
    there exists a randomized algorithm that constructs a weighted fair $\eps$-net with respect to $\mathcal{T}$ of $O\!\left(\!\frac{1}{\eps}\max\{\log\frac{1}{\prob},d\log\frac{1}{\eps}\}\!\!\cdot\!\log\frac{k}{\prob}\!\right)$ size
     with probability at least $1-\prob$, where $k$ is the number of different colors in set $X$. The time complexity of the algorithm is $O(n)$.
\end{theorem}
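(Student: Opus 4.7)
The plan is to parallel the proof of Theorem~\ref{thm:sampling} line by line, with the weighted measure $W_\group$ replacing the uniform measure $|X_\group|/|X|$ throughout. Conceptually, the only two changes are: (i) the base sampling step draws points with probability proportional to $w_i$ rather than uniformly, invoking the weighted version of Haussler--Welzl; and (ii) the padding step inflates each color up to $(1+v)\cdot W_\group\cdot \sizenet$, matching the DP target $\tau_\ell = W_{\group_\ell}$. I read the $\frac{|X_\group|}{|X|}$ appearing in the algorithm description as a typographical carryover from the unweighted presentation; the correct padding target in the weighted setting is $W_\group\cdot\sizenet$, and I will work with that.

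First, I would invoke the weighted $\eps$-net theorem of Haussler--Welzl directly on the weight-proportional sample $\epsnet$ of size $\sizenet$, concluding that $\epsnet$ is a weighted $\eps$-net with probability at least $1-\prob/2$. Since $\res\supseteq\epsnet$, the $\eps$-net property is inherited by $\res$.

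Second, I would reprove Lemma~\ref{lem:size} in the weighted setting. Fix a color $\group$ and let $\mathbb{I}_{\group,i}$ be the indicator that the $i$-th weighted sample has color $\group$; now $\mathbb{E}[\mathbb{I}_{\group,i}]=W_\group$, so $|\epsnet_\group|=\sum_i \mathbb{I}_{\group,i}$ has mean $\mu_\group=\sizenet\cdot W_\group$. The assumption $\tau_\ell\sizenet\geq 1$ gives $\mu_\group\geq 1$, so the same multiplicative Chernoff computation as in Lemma~\ref{lem:size} yields
\[
\mathbb{P}\bigl(|\epsnet_\group|\geq (1+v)\, W_\group\,\sizenet\bigr)\leq \exp\!\left(\tfrac{-v^2\mu_\group}{2+v}\right)\leq \tfrac{\prob}{2k},
\]
and a union bound over the $k$ colors gives the bad event probability at most $\prob/2$. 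Combining with the $\eps$-net event via a second union bound yields overall success probability at least $1-\prob$.

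Third, a direct accounting (identical to the unweighted case, but summing $W_\group$ instead of $|X_\group|/|X|$, which still telescopes to $1$) shows $|\res|=(1+v)\sizenet$ and, for each $\ell$, $\frac{|\res_{\group_\ell}|}{|\res|}=\frac{(1+v)W_{\group_\ell}\sizenet}{(1+v)\sizenet}=W_{\group_\ell}=\tau_\ell$, which is exactly the weighted DP requirement. Plugging the value of $\sizenet$ from Theorem~\ref{thm:epsnet} and $v=2\ln(2k/\prob)$ gives the stated size bound. For the runtime, I would preprocess the weights into an alias table in $O(n)$ time so that each of the $\sizenet$ weighted samples can be drawn in $O(1)$; the padding loop scans the colors once and pulls arbitrary points per color in $O(n)$ aggregate, for $O(n)$ total.

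The main obstacle I anticipate is essentially bookkeeping rather than mathematical: ensuring that ``$W_\group$'' consistently replaces ``$|X_\group|/|X|$'' in every place where fairness is measured (both in the padding rule and in the Chernoff target), since the theorem's notion of fairness in the weighted regime is defined via $\tau_\ell=W_{\group_\ell}$, not via cardinality ratios. Once that substitution is made uniformly, the analysis is mechanical and all constants carry over unchanged.
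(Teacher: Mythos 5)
Your proposal is correct and follows essentially the same route as the paper, which itself only sketches this proof by saying the arguments of Lemma~\ref{lem:size} carry over to weight-proportional sampling; you fill in exactly those details (weighted Haussler--Welzl for the net property, Chernoff with mean $\mu_\group=\sizenet\cdot W_\group$, union bound, padding, telescoping size count). Your reading of the $\frac{|X_\group|}{|X|}$ in the paper's weighted padding rule as a typographical carryover is right --- the DP target in the weighted regime is $\tau_\ell=W_{\group_\ell}$, so the padding threshold must be $(1+v)\,W_\group\,\sizenet$ as you use it.
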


\vspace{-2.5mm}
\subsection{Discrepancy-based Algorithms for Satisfying Demographic Parity}\label{sec:demop2}\label{sec:epsnets:disc}

The Monte-Carlo sampling-based algorithm proposed in the previous section, while being efficient, returns a fair $\eps$-net, which is larger than standard $\eps$-nets by roughly a $\log(k)$ factor. Besides, similar to other randomized Monte-Carlo algorithms, it does not always find a valid solution.
In this section, we propose two fair discrepancy-based algorithms to address these issues; we propose deterministic methods that construct smaller fair $\eps$-nets. Such results come at the cost of increased running time.
While both discrepancy-based algorithms return a fair $\eps$-net of (asymptotically) the same size, the first one is more efficient when $m$ is small and $d$ is large, while the second one is more efficient when $d$ is small and $m$ is large.

\subsubsection{Discrepancy-based algorithm for fair $\eps$-net}
\label{subsec:discr1}
We first describe the notion of discrepancy to construct an $\eps$-net. After that, we adjust this method to construct a {\bf fair} $\eps$-net. More details on the definitions and the proofs relating to the standard $\eps$-net construction using discrepancy can be found in ~\cite{chazelle2000discrepancy, har2011geometric}.

Before we start the description of our algorithm, we give some useful definitions.
Let $\rangespace$ be a range space with $|X|=n$ and $|\ranges|=m$.
Define a coloring function $\kappa: \points \to \{-1, +1\}$ that assigns a color $+1$ or $-1$ to each point in $\points$.
The coloring $\kappa$ differs from the demographic groups and should not be mistaken with colors/groups $\groups = \{\group_1, \group_2, \cdots, \group_k\}$.
The {\it discrepancy} of $\kappa$ over a range $\range \in \ranges$ is defined as:
$|\kappa(\range)| = \left|\sum_{\point \in \range} \kappa(\point)\right|$.
The {\it discrepancy} of the function $\kappa$ is defined as:
$\mathsf{disc}(\kappa) = \max_{\range \in \ranges} |\kappa(\range)|$.
Namely, it is the color difference in the most unbalanced range based on coloring $\kappa(\cdot)$.
The discrepancy of the range space $\rangespace$ is defined as the best achievable discrepancy by a coloring function,
$\mathsf{disc}(\points) = \min_{\kappa: \points \to \{-1, +1\}} \mathsf{disc}(\kappa)$.

A {\it matching} $\Pi$ of $\points$ is a set of pairs $\{(p_i, p_j) \mid p_i, p_j \in \points\}$ that partitions the entire set $\points$ into $n / 2$ pairs. A coloring $\kappa$ is said to be {\it compatible} with a matching $\Pi$, if for any pair $(p_i, p_j) \in \Pi$, we have: $\kappa(p_i) + \kappa(p_j) = 0$.

The following procedure defines a major step in building $\eps$-nets using this method:

\paragraph{\textsf{Random Halving}}Let $\Pi$ be an \emph{arbitrary} matching on $\points$. For each pair $(p_i, p_j) \in \Pi$, randomly either color $p_i$ as $+1$ and $p_j$ as $-1$, or the other way around, by tossing a fair coin. Then, without loss of generality, only keep the black points. Let $\points^{(1)}$ be the result set with $|\points^{(1)}| = \frac{n}{2}$. 


The next lemma that uses the \textsf{Random Halving} procedure is proven in ~\cite{har2011geometric}.

\begin{lemma}
\label{lem:RH}
We are given a range space $\rangespace$ with $|X|=n$ and $|\ranges|=m$.
    Let $\Pi$ be an arbitrary matching of $X$ and let 
    $\kappa$ be the coloring constructed by applying the procedure \textsf{Random Halving}. With a probability more than $\frac{1}{2}$, we have:

    \vspace{-4mm}
    \[
        \forall \range \in \ranges, \quad |\kappa(\range)| \leq \Delta= \sqrt{2 n \ln(4m)}.
    \]
\end{lemma}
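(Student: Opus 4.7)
The plan is to fix a single range, reduce $\kappa(\range)$ to a sum of independent fair $\pm 1$ random variables with at most $n/2$ terms, apply a Hoeffding/Chernoff tail bound with the stated $\Delta$, and then finish with a union bound over all $m$ ranges. The value $\Delta=\sqrt{2n\ln(4m)}$ is tuned precisely so the union bound yields failure probability at most $1/2$.

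Fix a range $\range\in\ranges$ and partition the $n/2$ pairs of $\Pi$ into three types according to how each pair intersects $\range$: (i) both endpoints in $\range$, (ii) neither endpoint in $\range$, and (iii) exactly one endpoint in $\range$ (a ``split'' pair). Type (ii) pairs contribute $0$ to $\sum_{\point\in\range}\kappa(\point)$ trivially. The key observation is that type (i) pairs also contribute $0$ deterministically: by the description of \textsf{Random Halving}, each pair gets one $+1$ and one $-1$ regardless of the coin flip, so $\kappa(p_i)+\kappa(p_j)=0$ for every fully-inside pair. Consequently,
\[
\kappa(\range)\;=\;\sum_{\ell=1}^{t_\range} Y_\ell,
\]
where $t_\range$ is the number of split pairs and $Y_1,\dots,Y_{t_\range}$ are independent uniform $\pm 1$ random variables arising from the fair coins on the split pairs. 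In particular $t_\range\leq n/2\leq n$.

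Now I would apply Hoeffding's inequality to this sum:
\[
\Pr\bigl[|\kappa(\range)|\geq \Delta\bigr]\;\leq\; 2\exp\!\left(-\frac{\Delta^2}{2 t_\range}\right)\;\leq\;2\exp\!\left(-\frac{\Delta^2}{2n}\right).
\]
Substituting $\Delta=\sqrt{2n\ln(4m)}$ gives $\Pr[|\kappa(\range)|\geq\Delta]\leq 2\cdot(4m)^{-1}=\tfrac{1}{2m}$. Taking a union bound over the $m$ ranges,
\[
\Pr\bigl[\exists\,\range\in\ranges:|\kappa(\range)|\geq \Delta\bigr]\;\leq\; m\cdot\frac{1}{2m}\;=\;\frac{1}{2},
\]
so with probability strictly greater than $1/2$ every range satisfies $|\kappa(\range)|\leq\Delta$, as claimed.

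The main obstacle is really a conceptual rather than technical one: recognizing the cancellation structure induced by the compatibility of the coloring with the matching, i.e., that the only pairs that actually contribute randomness to $\kappa(\range)$ are the split pairs. Once this is in place, the remainder is a textbook Chernoff-plus-union-bound template, with the choice of $\Delta$ engineered to absorb the $\ln m$ cost of the union bound.
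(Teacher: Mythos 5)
Your proof is correct and is essentially the standard argument; the paper does not prove Lemma~\ref{lem:RH} itself but cites it to~\cite{har2011geometric}, where exactly this decomposition (fully-inside and fully-outside pairs cancel, only split pairs contribute independent $\pm1$'s) followed by Hoeffding and a union bound is used. One small note: as written, your final union bound only gives failure probability at most $1/2$, hence success probability at least $1/2$ rather than the strictly-more-than-$1/2$ claimed in the lemma; this is fixed by using the bound $t_\range\leq n/2$ (which you already established) in the Hoeffding exponent, giving a per-range failure probability of $2\exp(-\Delta^2/n)=1/(8m^2)$ and an overall failure probability of $1/(8m)<1/2$.
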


The construction of Lemma~\ref{lem:RH} can be made deterministic by the method of \emph{conditional expectations}~\cite{har2011geometric}, assuming that we have access to all the ranges in $\ranges$:
\newcommand{\hp}{\mathsf{Halving}}
\begin{lemma}
\label{lem:halving2}
    Given a range space $\rangespace$ with $|X|=n$ and $|\ranges|=m$.
    Let $\Pi$ be an arbitrary matching on $X$.
   There exists a deterministic method $\hp(X,\ranges,\Pi)$ that constructs a coloring function $\kappa:X\rightarrow \{-1,1\}$ which is compatible with matching $\Pi$ in $O(n\cdot m)$ time, such that

   \vspace{-7mm}
    \[
        \forall \range \in \ranges,\; |\kappa(\range)| \leq \Delta.
    \]
\end{lemma}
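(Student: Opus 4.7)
}
The plan is to derandomize the \textsf{Random Halving} construction from Lemma~\ref{lem:RH} by the method of conditional expectations, using the moment generating function that implicitly powers the Chernoff bound behind that lemma as a pessimistic estimator.

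First I would set up the estimator. Enumerate the pairs of $\Pi$ in arbitrary order. For each range $R\in\ranges$, let $X_R=\sum_{p\in R}\kappa(p)$; by compatibility with $\Pi$, only pairs with exactly one endpoint in $R$ contribute, so $X_R$ is a sum of independent Rademacher $\pm 1$ variables, one per such pair. Fix $t=\Delta/n$ and define
\[
\Phi_R \;=\; e^{tX_R}+e^{-tX_R}, \qquad \Phi \;=\; \sum_{R\in\ranges}\Phi_R.
\]
The standard MGF bound gives $\mathbb{E}[\Phi_R]\leq 2e^{t^2 n/2}$, hence $\mathbb{E}[\Phi]\leq 2m\,e^{\Delta^2/(2n)}$. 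Plugging in $\Delta=\sqrt{2n\ln(4m)}$ gives $\mathbb{E}[\Phi]< e^{\Delta^2/n}=e^{t\Delta}$ with room to spare. This is precisely the threshold we need: whenever $\Phi<e^{t\Delta}$, each $\Phi_R<e^{t\Delta}$, and since $e^{t|X_R|}\leq\Phi_R$, we obtain $|X_R|<\Delta$ for every $R$.

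Next I would run the conditional-expectation routine on the $n/2$ pairs of $\Pi$. At each step there are exactly two admissible completions for the current pair $(p_i,p_j)$ (either $\kappa(p_i)=+1,\kappa(p_j)=-1$ or the reverse). Since the conditional expectation of $\Phi$ given the prior decisions is the average of the two conditional expectations after the decision, at least one of them does not exceed the current value; we pick that choice. By induction over the pairs, the conditional expectation never exceeds $\mathbb{E}[\Phi]<e^{t\Delta}$, so when all decisions are made and the conditional expectation equals the realized $\Phi$, the threshold $\Phi<e^{t\Delta}$ holds deterministically, yielding the required discrepancy bound on every $R$.

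For running time, the conditional expectation of $\Phi_R$ factorizes over the not-yet-colored pairs, and only ranges intersecting the current pair are affected by a decision. Concretely, when we color $(p_i,p_j)$, each $\Phi_R$ with $|\{p_i,p_j\}\cap R|=1$ is multiplied by $e^{\pm t}$ while ranges containing both or neither remain unchanged (up to removing the symmetric $\cosh(t)$ factor contributed by that pair). Hence each pair can be processed in $O(m)$ time by maintaining the current values of the $\Phi_R$, giving a total of $O(n\cdot m)$. The main technical care is selecting $t$ so that the initial pessimistic estimator already sits below the threshold (with slack), and checking that the per-pair update is local over $\ranges$; once these are in place, the induction over pairs is routine.
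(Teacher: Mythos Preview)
Your proposal is correct and matches what the paper intends: the paper does not prove Lemma~\ref{lem:halving2} in detail but simply states that the randomized construction of Lemma~\ref{lem:RH} ``can be made deterministic by the method of \emph{conditional expectations}~\cite{har2011geometric}.'' Your write-up is precisely this standard derandomization (pessimistic estimator $\Phi=\sum_R(e^{tX_R}+e^{-tX_R})$, greedy choice over the pairs of $\Pi$, $O(m)$ update per pair), so there is nothing to correct or contrast.
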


\paragraph{Algorithm}
Using the notion and properties of discrepancy, we describe our algorithm for constructing a fair $\eps$-net.
Our algorithm works in iterations $i=1,\ldots, U$, where $U=\log\frac{n}{c_0\frac{d}{\eps}\log\frac{d}{\eps}}$ and $c_0$ is a sufficiently large constant. Initially, $X^{(0)}=X$.
In the $i$-th iteration, we construct a \emph{fair matching} $\Pi_f^{(i-1)}$ on $X^{(i-1)}$, as follows. Let $Y=X^{(i-1)}$. For each pair $p_{j_1}, p_{j_2} \in Y$ with $\group(p_{j_1})=\group(p_{j_2})$, we add the pair $(p_{j_1},p_{j_2})$ in $\Pi_f^{(i-1)}$ and remove $p_{j_1},p_{j_2}$ from $Y$. We repeat the same pairing procedure until $Y=\emptyset$. Then, we run the halving procedure from Lemma~\ref{lem:halving2}, executing $\hp\left(X^{(i-1)}, \ranges_{|X^{(i-1)}}, \Pi_f^{(i-1)}\right)$. Let $X^{(i)}$ be the points with color $+1$ returned by the halving procedure.
In the end, we return $X^{(U)}$.

\paragraph{Correctness and runtime analysis}
We first show that the set returned by our algorithm is an $\eps$-net of small size. Then we show that this is also a fair $\eps$-net.
\begin{lemma}
\label{lem:epsnet1}
    Our algorithm returns an $\eps$-net of size $O(\frac{d}{\eps}\log\frac{d}{\eps})$.
\end{lemma}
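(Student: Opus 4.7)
\medskip
\noindent\textbf{Proof proposal for Lemma~\ref{lem:epsnet1}.}
The plan is to separate the size bound from the $\eps$-net property. For the size, the key observation is that because $\Pi_f^{(i-1)}$ is a perfect matching on $X^{(i-1)}$ and the coloring produced by $\hp$ is compatible with it, exactly one point in every pair is assigned $+1$; hence $|X^{(i)}|=|X^{(i-1)}|/2$ at every iteration. After $U=\log\frac{n}{c_0(d/\eps)\log(d/\eps)}$ iterations this yields $|X^{(U)}|=n/2^U=c_0(d/\eps)\log(d/\eps)$, as desired. The assumption $|X_\group|=2^{\xi_\group}$ together with the fair pairing guarantees that $|X^{(i-1)}_\group|$ is even (in fact a power of two) at every level, so a fair perfect matching $\Pi_f^{(i-1)}$ always exists; the halving procedure is thus well-defined throughout.

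For the $\eps$-net property I would proceed by a classical inductive discrepancy argument. Fix a range $R\in\ranges$ and set $\alpha_i=|R\cap X^{(i)}|$, $n_i=|X^{(i-1)}|=n/2^{i-1}$, and $m_i=|\ranges_{|X^{(i-1)}}|$. Since the points of each pair receive opposite signs, exactly one survives, and
\[
\alpha_i \;=\; \tfrac{1}{2}\bigl(\alpha_{i-1}+\kappa_i(R\cap X^{(i-1)})\bigr)\;\ge\;\tfrac{1}{2}\bigl(\alpha_{i-1}-\Delta_i\bigr),
\]
where $\Delta_i=\sqrt{2n_i\ln(4m_i)}$ by Lemma~\ref{lem:halving2}. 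Unrolling gives
\[
\alpha_U \;\ge\; \frac{\alpha_0}{2^U} \;-\; \sum_{i=1}^{U}\frac{\Delta_i}{2^{U-i+1}}.
\]
Using the Sauer--Shelah bound $m_i\le(en_i/d)^d$, we get $\Delta_i=O\!\bigl(\sqrt{n_i\,d\log(n_i/d)}\bigr)$. Substituting $n_i=n_{U+1}\cdot 2^{U-i+1}$ turns the error sum into a geometric-type series with ratio $1/\sqrt{2}$, dominated by its final term $O\!\bigl(\sqrt{n_{U+1}\,d\log(n_{U+1}/d)}\bigr)$. Since $\alpha_0\ge\eps n=\eps\cdot n_{U+1}\cdot 2^U$, the leading term is $\alpha_0/2^U\ge\eps n_{U+1}$, and for $c_0$ sufficiently large this strictly dominates the error sum plus $1$, giving $\alpha_U\ge 1$, i.e.\ the $\eps$-net property.

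The essential insight that makes fairness \emph{free} here is that Lemma~\ref{lem:halving2} gives the discrepancy bound $\Delta_i$ for \emph{any} matching on $X^{(i-1)}$, regardless of its internal structure; restricting $\Pi_f^{(i-1)}$ to pair only same-color points neither changes the family of ranges considered nor the Sauer--Shelah bound on $m_i$, so the halving inequality above is untouched. The main technical obstacle is the bookkeeping of the accumulated error $\sum_i\Delta_i/2^{U-i+1}$: one has to verify that the geometric decay in the denominators dominates both the $\sqrt{n_i}$ growth and the $\sqrt{\log(n_i/d)}$ factor at every level, so that the total error is still of order $\sqrt{n_{U+1}\,d\log(n_{U+1}/d)}=o(\eps n_{U+1})$ under the choice of $U$. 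Since this reduction of the fair construction to the standard halving argument keeps all the parameters unchanged, the resulting size $O(\tfrac{d}{\eps}\log\tfrac{d}{\eps})$ matches the unfair bound exactly.
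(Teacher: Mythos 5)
Your treatment of the size bound and of why fairness comes for free (the halving lemma accepts an arbitrary matching, and the power-of-two assumption on each $|X_\group|$ keeps every color class even, so a fair perfect matching always exists) matches the paper, which disposes of the $\eps$-net property itself by citing the iterated-halving argument of Chazelle. The gap is in your attempt to reprove that cited step. Using the \emph{uniform} discrepancy bound $\Delta_i=\sqrt{2n_i\ln(4m_i)}$ from Lemma~\ref{lem:halving2}, your accumulated error is, as you correctly compute, $\Theta\bigl(\sqrt{\nu\,d\log(\nu/d)}\bigr)$ with $\nu=n_{U+1}$, while the leading term is $\eps\nu$. To conclude $\alpha_U\ge 1$ you need $\eps\nu\gtrsim\sqrt{\nu\,d\log(\nu/d)}$, i.e.\ $\nu=\Omega\bigl(\tfrac{d}{\eps^2}\log\tfrac{d}{\eps}\bigr)$. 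Substituting $\nu=c_0\tfrac{d}{\eps}\log\tfrac{d}{\eps}$ turns this into the requirement $c_0\gtrsim 1/\eps$, so no absolute constant $c_0$ suffices: your unrolled argument certifies the $\eps$-net property only at the $\eps$-\emph{sample} size, a factor $1/\eps$ larger than the lemma claims.

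The missing ingredient is the range-sensitive discrepancy bound. For a coloring compatible with a matching, a pair with both endpoints inside $R$ contributes $0$ to $\kappa(R)$, so only pairs crossing the boundary of $R$ contribute $\pm 1$, and there are at most $|R|$ of them; the Chernoff/conditional-expectations analysis therefore gives $|\kappa(R)|=O(\sqrt{|R|\log m})$ rather than $O(\sqrt{n\log m})$. With $\Delta_i(R)=O(\sqrt{\alpha_{i-1}\log m_i})$ the per-step multiplicative loss for a heavy range is $1-O\bigl(\sqrt{\log m_i/(\eps n_i)}\bigr)$, the losses summed over levels are again dominated by the last level, where they amount to $O\bigl(\sqrt{d\log(\nu/d)/(\eps\nu)}\bigr)=O(1/\sqrt{c_0})$, and this is an arbitrarily small constant exactly when $\nu=c_0\tfrac{d}{\eps}\log\tfrac{d}{\eps}$ for large constant $c_0$. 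That is the argument behind the result the paper cites, and your key observation---that restricting to fair, same-color matchings changes nothing---carries over verbatim, since the crossing-pair bound also holds for an arbitrary matching.
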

\begin{proof}
Given the assumptions we discussed in the introduction (also similar assumptions were made in~\cite{dehghankar2024fair}), i.e., $|X_\group|$ is a power of $2$ for every group $\group\in \groups$, and there are enough points from every group, our algorithm finds fair matchings in every iteration and terminates.
It is known from~\cite{chazelle2000discrepancy} that if we apply Lemma~\ref{lem:halving2} $i^\star$ times (given an arbitrary pairing in every iteration), then $X^{(i^\star)}$ is an $\eps$-net of $X$, where $i^\star$ is a value such that

\vspace{-8mm}
\[\hspace{50mm}
        2^{i^\star} = \frac{n}{c_0 \frac{d}{\eps} \log \frac{d}{\eps}}.
    \]
    Notice that this iteration argument from~\cite{chazelle2000discrepancy} holds for arbitrary matchings. Hence, the fair matchings we construct in each iteration do not prevent us from using this argument.
    Our algorithm is executed for $U=\log\frac{n}{c_0\frac{d}{\eps}\log\frac{d}{\eps}}=i^\star$ iterations so $X^{(U)}$ is an $\eps$-net.
    By definition, notice that $|X^{(i)}|=\frac{|X^{(i-1)}|}{2}$, so $|X^{(U)}|=\frac{|X|}{2^U}=O(\frac{d}{\eps}\log\frac{d}{\eps})$. The lemma follows.
\end{proof}

\begin{lemma}
    Our algorithm returns a fair $\eps$-net.
\end{lemma}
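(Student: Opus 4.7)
The plan is to show that the fairness of the output follows directly from the structure of the fair matchings used in every iteration, while the $\eps$-net property is already guaranteed by Lemma~\ref{lem:epsnet1}. So the whole burden is to verify that the demographic parity ratios $\tau_\ell = |X_{\group_\ell}|/|X|$ are preserved through each halving step.

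The key observation is this. In every iteration $i$, the matching $\Pi_f^{(i-1)}$ is, by construction, \emph{monochromatic}: every pair $(p_{j_1},p_{j_2}) \in \Pi_f^{(i-1)}$ satisfies $\group(p_{j_1})=\group(p_{j_2})$. Since the coloring $\kappa$ returned by $\hp(X^{(i-1)},\ranges_{|X^{(i-1)}},\Pi_f^{(i-1)})$ is compatible with $\Pi_f^{(i-1)}$ (Lemma~\ref{lem:halving2}), exactly one point in each pair receives $+1$ and the other $-1$. Restricting attention to a fixed demographic group $\group\in \groups$, the points of color $\group$ are partitioned into pairs by $\Pi_f^{(i-1)}$, so exactly $|X^{(i-1)}_\group|/2$ of them get the label $+1$. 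Hence $|X^{(i)}_\group| = |X^{(i-1)}_\group|/2$ for every $\group\in\groups$.

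A straightforward induction on $i$ then gives $|X^{(i)}_\group| = |X_\group|/2^i$ for every $\group\in \groups$ and every $i=0,1,\ldots,U$. Summing over $\group$ yields $|X^{(i)}| = |X|/2^i$, and therefore
\[
\frac{|X^{(U)}_\group|}{|X^{(U)}|} = \frac{|X_\group|/2^U}{|X|/2^U} = \frac{|X_\group|}{|X|} = \tau_\group,
\]
which is exactly the demographic parity condition. Combined with Lemma~\ref{lem:epsnet1}, this establishes that $X^{(U)}$ is a fair $\eps$-net.

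The only subtle point — and arguably the main obstacle — is to ensure that the fair matching can actually be constructed in every iteration without leaving an unpaired point; otherwise the halving invariant $|X^{(i)}_\group|=|X^{(i-1)}_\group|/2$ breaks. This is exactly why the assumptions stated just before the algorithm (that $|X_\group|=2^{\xi_\group}$ and that $\tau_\ell\cdot\sizenet\geq 1$) are required: together they imply that $|X^{(i-1)}_\group|$ is even throughout $i=1,\ldots,U$, so the greedy same-color pairing exhausts each group and produces a perfect matching of $X^{(i-1)}$. Under these assumptions the inductive argument above goes through, and the lemma follows.
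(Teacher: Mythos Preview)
Your proof is correct and follows essentially the same approach as the paper: both invoke Lemma~\ref{lem:epsnet1} for the $\eps$-net property and then argue that the monochromatic matching halves each color class exactly, preserving the ratio $\frac{|X^{(i)}_\group|}{|X^{(i)}|}=\frac{|X_\group|}{|X|}$ through every iteration. Your treatment is slightly more explicit about compatibility of $\kappa$ with the matching and about the parity assumptions needed for the fair matching to exist, but the argument is the same.
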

\begin{proof}
    Form Lemma~\ref{lem:epsnet1}, we have that $X^{(U)}$ is an $\eps$-net. Hence we only need to show that $\frac{|X^{(U)}_\group|}{|X^{(U)}|}=\frac{|X_\group|}{|X|}$ for every $\group\in\groups$. In every iteration $i$ of our algorithm we construct a fair patching $\Pi_f^{(i-1)}$ over the set $X^{(i-1)}$. The set $X^{(i)}$ is constructed from $X^{(i-1)}$ keeping exactly one point from every pair in $\Pi_f^{(i-1)}$. Since both points in every pair in $\Pi_f^{(i-1)}$ has the same color, we have that $$\frac{|X^{(i)}_\group|}{|X^{(i)}|}=\frac{\frac{1}{2}|X^{(i-1)}_\group|}{\frac{1}{2}|X^{(i-1)}|}=\frac{|X^{(i-1)}_\group|}{|X^{(i-1)}|},$$ for every $\group\in\groups$. By repeating this argument for every $i=1,\ldots, U$, we conclude that $\frac{|X^{(U)}_\group|}{|X^{(U)}|}=\frac{|X_\group|}{|X|}$ for every $\group\in\groups$.
\end{proof}

\begin{lemma}
    The running time of our algorithm is $O(n\cdot m\cdot \log n)$.
\end{lemma}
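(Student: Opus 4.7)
The plan is to bound the work of each of the $U = O(\log n)$ iterations separately, and then sum. Let $n_i = |X^{(i-1)}|$ and $m_i = |\ranges_{|X^{(i-1)}}|$, so $n_1 = n$, $m_1 \leq m$, and in general $n_i = n/2^{i-1}$ and $m_i \leq m$.

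First, I would analyze the cost of constructing the fair matching $\Pi_f^{(i-1)}$ on $X^{(i-1)}$. This can be done in $O(n_i)$ time by bucketing the points of $X^{(i-1)}$ by color (e.g., using a hash map on the color labels) and then pairing consecutive points within each bucket. Under the assumption that every $|X_\group|$ is a power of $2$ and that halving preserves group ratios (as shown in the fairness lemma), every bucket at every iteration has even size, so all points are matched. Next, I would invoke Lemma~\ref{lem:halving2}, which states that $\hp(X^{(i-1)},\ranges_{|X^{(i-1)}},\Pi_f^{(i-1)})$ runs in $O(n_i \cdot m_i)$ time. Since $m_i \leq m$ and $n_i \leq n$, the cost of iteration $i$ is at most $O(n_i\cdot m) = O(n \cdot m)$.

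Summing over all $U = \log\tfrac{n}{c_0 (d/\eps)\log(d/\eps)} = O(\log n)$ iterations yields a total running time of
\[
\sum_{i=1}^{U} O(n_i \cdot m) \;\leq\; U \cdot O(n\cdot m) \;=\; O(n\cdot m\cdot \log n),
\]
which gives the claimed bound. (A tighter geometric-series argument using $n_i = n/2^{i-1}$ would actually give $O(n\cdot m)$, but the loose bound $O(n\cdot m \cdot \log n)$ suffices for the statement.)

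The main potential obstacle is not the arithmetic but a small bookkeeping issue: one has to make sure that computing $\ranges_{|X^{(i-1)}}$ at each iteration does not add unforeseen cost. I would handle this by noting that we never need to recompute the restricted ranges from scratch; we can maintain, for each range $R \in \ranges$, the list of its current surviving points, updating it in $O(|R|)$ time per iteration when we drop the points colored $-1$. The aggregate cost of these updates over all iterations is dominated by the halving work itself, so it does not affect the asymptotic bound.
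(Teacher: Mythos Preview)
Your proposal is correct and follows essentially the same approach as the paper's own proof: bound the fair-matching construction by $O(|X^{(i-1)}|)$, invoke Lemma~\ref{lem:halving2} to bound the halving step by $O(|X^{(i-1)}|\cdot|\ranges_{|X^{(i-1)}}|)=O(n\cdot m)$, and multiply by the $U=O(\log n)$ iterations. Your additional remarks---the geometric-series observation that the sum is really $O(n\cdot m)$, and the bookkeeping for maintaining $\ranges_{|X^{(i-1)}}$ incrementally---go beyond what the paper spells out but are consistent with it.
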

\begin{proof}
In each iteration $i$ of our algorithm, we construct a fair matching in $O(|X^{(i-1)}|)=O(n)$ time by making a pass over all points in $X^{(i-1)}$.
    The Halving procedure from Lemma~\ref{lem:halving2} runs in $O(|X^{(i-1)}|\cdot |\ranges_{|X^{(i-1)|}}|)
    =O(n\cdot m)$ time. 
    We execute $U=O(\log n)$ iterations of the algorithm so the total running time is  $O(n\cdot m\cdot \log n)$.
\end{proof}

Putting everything together, we conclude with Theorem~\ref{thm:discrepancy}.

\begin{theorem}
\label{thm:discrepancy}
    Given a range space $\rangespace$, with $|X|=n$, $|\ranges|=m$, VC dimension $d$, DP constraints $\mathcal{T}$, and a parameter $\eps\in(0,1)$,
    there exists a deterministic algorithm that constructs a fair $\eps$-net with respect to $\mathcal{T}$, of size $O(\frac{d}{\eps}\log\frac{d}{\eps})$ in $O(n\cdot m \cdot\log n)$ time.
\end{theorem}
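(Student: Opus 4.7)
The plan is to observe that Theorem~\ref{thm:discrepancy} is a compact statement of three quantities—correctness of the $\eps$-net property, correctness of the fairness property, and the running time—each of which has already been established in a separate lemma immediately preceding the theorem. So the proof will essentially be a one-paragraph assembly: invoke the three lemmas on the same algorithm (iterative halving with a fair matching in each round) and collect the guarantees.

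Concretely, I would first recall the algorithm: starting from $X^{(0)}=X$, for each $i=1,\dots,U$ with $U=\log(n/(c_0 (d/\eps)\log(d/\eps)))$, build a fair matching $\Pi_f^{(i-1)}$ on $X^{(i-1)}$ (pairing points of equal color), apply the deterministic halving $\hp(X^{(i-1)},\ranges_{|X^{(i-1)}},\Pi_f^{(i-1)})$ from Lemma~\ref{lem:halving2}, and let $X^{(i)}$ be the $+1$-colored points. The output is $X^{(U)}$. Then I would cite, in order, Lemma~\ref{lem:epsnet1} to conclude that $X^{(U)}$ is an $\eps$-net of size $O((d/\eps)\log(d/\eps))$; the subsequent fairness lemma to conclude that the ratios $|X^{(U)}_\group|/|X^{(U)}|$ equal $|X_\group|/|X|=\tau_\group$ for every color $\group\in\groups$; and the runtime lemma to conclude the $O(n\cdot m\cdot\log n)$ bound. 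Determinism follows from the fact that $\hp$ is itself deterministic (Lemma~\ref{lem:halving2}) and fair matchings are constructed by a single deterministic pass.

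The only substantive point worth emphasizing rather than just citing is that the lemmas really do compose on the \emph{same} execution: the size analysis of Lemma~\ref{lem:epsnet1} uses the classical argument of \cite{chazelle2000discrepancy} which is agnostic to which matching is used in each round, so replacing the arbitrary matching with the fair matching $\Pi_f^{(i-1)}$ does not break that analysis; meanwhile, using the fair matching is precisely what is needed so that halving preserves every group ratio exactly (since both endpoints of every pair share a color, exactly half of each $X^{(i-1)}_\group$ survives). Thus the fair matching is simultaneously ``allowed'' by the size argument and ``required'' by the fairness argument.

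The hardest conceptual step has already been discharged inside the preceding lemmas, namely verifying that the assumption $|X_\group|=2^{\xi_\group}$ together with the ratio assumption $\tau_\ell\sizenet\geq 1$ guarantees that a perfect fair matching exists in every iteration (so the halving procedure never gets stuck on unmatched points). Given that, the theorem follows by assembling the three lemmas and noting that all guarantees hold deterministically and simultaneously on the output $X^{(U)}$.
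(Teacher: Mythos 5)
Your proposal is correct and matches the paper's approach exactly: the paper proves Theorem~\ref{thm:discrepancy} by ``putting everything together,'' i.e., assembling Lemma~\ref{lem:epsnet1}, the fairness lemma, and the runtime lemma for the same iterative fair-matching-plus-halving execution. Your added remark that the size argument of \cite{chazelle2000discrepancy} is matching-agnostic while fairness requires the monochromatic pairing is precisely the observation the paper makes inside Lemma~\ref{lem:epsnet1}.
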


A comparison of the two algorithms, discrepancy-based and sampling-based, is provided in Table~\ref{tab:eps-net-dp}.

\paragraph{Weighted case: } If the points $\points$ have weights, a simple approach would be to replicate each point according to its weight $w_i$ (assuming all the weights are integers; otherwise, we multiply them by a large value). However, the running time of this method would depend on $\sum_i w_i$ (integer weights), which can be significantly larger than $n$. The same argument is valid for space usage.

\subsubsection{Discrepancy with Sketch-and-Merge}\label{sec:demop3}\label{sec:epsnets:sketch}

In this subsection, we discuss an alternative deterministic algorithm that is more efficient than the algorithm from Theorem~\ref{thm:discrepancy} if the VC dimension is small. At a high level, the algorithm works as follows. We first construct a fair $\eps$-sample of small size satisfying DP constraints using a hierarchical bottom-up approach (resulting in a tree structure visualized in Figure~\ref{fig:partitiontree}). Then we apply the algorithm from Theorem~\ref{thm:discrepancy} on the fair $\eps$-sample set to construct a fair $\eps$-net. The algorithm is faster because, this time, the expensive algorithm from Theorem~\ref{thm:discrepancy} is executed on a small set, i.e., a fair $\eps$-sample.
\begin{figure}
    \centering
    \includegraphics[width=0.7\linewidth]{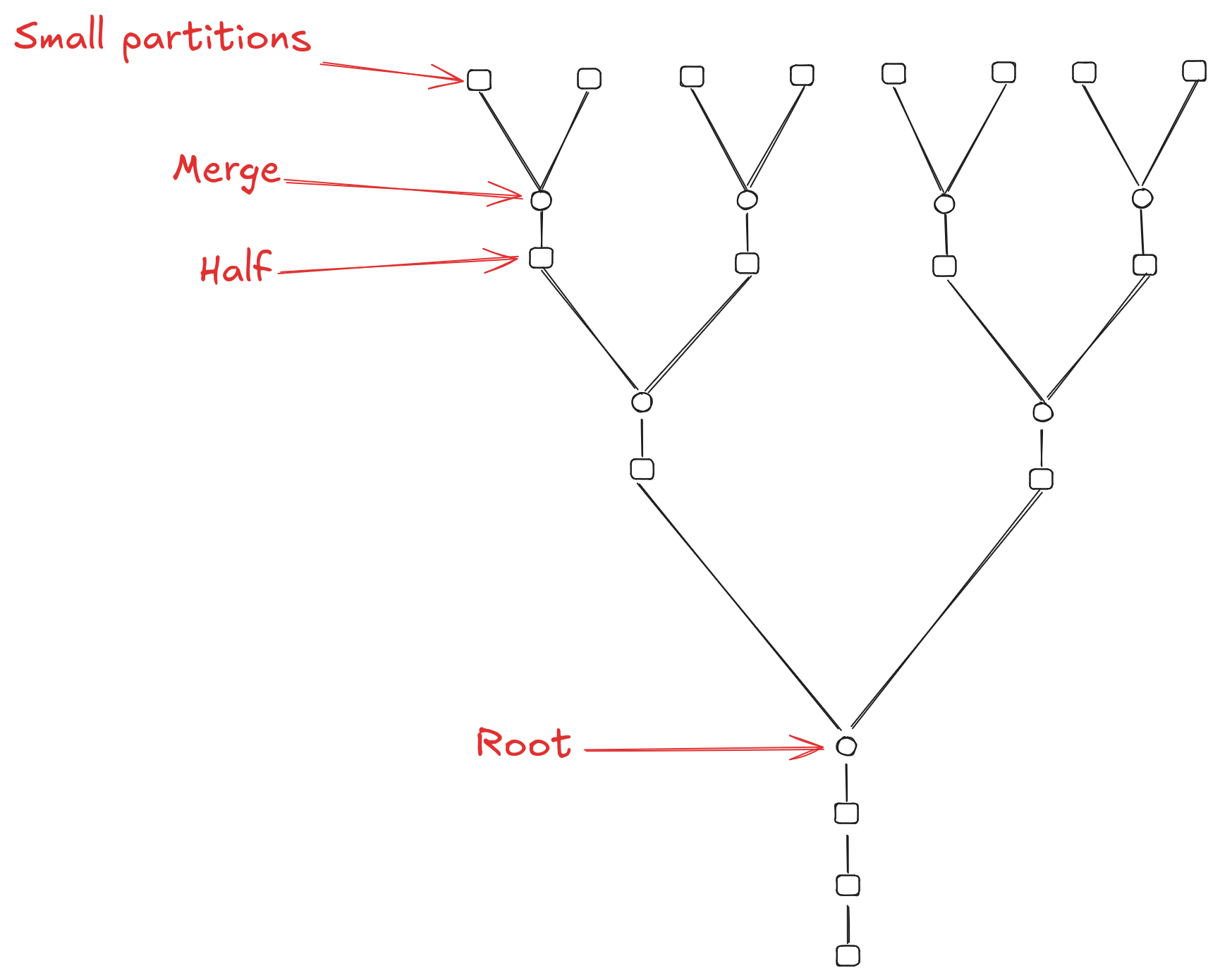}
    \vspace{-4mm}
    \caption{Sketch-and-Merge approach on a partition tree.}
    \label{fig:partitiontree}
\end{figure}

\begin{algorithm}[h]
\begin{algorithmic}[1]
\small
\Require Range space $\rangespace$, Epsilon $\eps$, the value $2^p$ for the size of small partitions.
\Ensure The DP fair $\eps$-net $\res$, found by sketch-and-merge.
\Function{FSM}{$\points, \ranges, \eps, 2^p$}
    \State $\parti \gets $ partition $\points$ to $\frac{n}{2^p}$ equally-sized {\bf fair} subsets.
    \State $\mathcal{L} \gets \parti$ \Comment{All nodes in current level.}
    \While{$|\mathcal{L}| > 1$}\Comment{Until reaching root}
        \State $\mathcal{L}' \gets [\;]$ \Comment{placeholder of the next level.}
        \For{$\textit{siblings } (v_1, v_2) \textit{ in } \mathcal{L}$ with parent $u$}
            \State $X_u' \gets X_{v_1}\cup X_{v_2}$\Comment{Merge each pair of siblings}
            \State $\Pi_f^{(u)} \gets$ a fair matching constructed on $X_u'$
            \State $X_u\gets \hp(X_u',\ranges_{|X_u'},\Pi_f^{(u)})$
            \State $\mathcal{L}'.append(u)$
        \EndFor
        \State $\mathcal{L} \gets \mathcal{L}'$
    \EndWhile
    \State $\root \gets \mathcal{L}[0]$
    \State $\widehat{\X}\gets X_{\root}$
    \While{$|\widehat{\X}| > c_0 \frac{4d}{\eps^2} \log \frac{2d}{\eps}$}\Comment{Until finding $\frac{\eps}{2}$-sample}
        \State Construct fair matching $\Pi_f$ on $\widehat{\X}$
        \State $\widehat{\X}\gets \hp(\widehat{\X},\ranges_{|\widehat{\X}},\Pi_f)$
    \EndWhile
    \State $\X \gets \widehat{\X}$\Comment{The $\frac{\eps}{2}$-sample.}
    \State $\res\gets$ Execute algorithm from Theorem~\ref{thm:discrepancy} on $(\X,\ranges_{|\X})$
    \State \Return $\res$
\EndFunction
\caption{Fair Sketch-and-Merge algorithm (FSM) for Finding an $\eps$-net satisfying DP fairness.}
\label{alg:epsnet:dp:sketch}
\end{algorithmic}
\end{algorithm}

\paragraph{Algorithm}
Let $p$ be a small parameter value that is set later.
Let $\parti$ be the partitioning of $X$ into $\frac{n}{2^p}$ equally-sized fair subsets. Each partition $P\in \parti$ contains $2^p$ points of the same color $\group\in\groups$.
We construct an empty binary partition tree with $\frac{n}{2^p}$ leaf nodes.
For a node $v$ of the partition tree, let $X_v\subseteq X$ be the points stored in node $v$. Initially, $X_v=\emptyset$ for every node $v$ in the partition tree.
Each partition $P\in \parti$ is assigned to a leaf node $v_P$ in the partition tree and $X_{v_P}=P$.
Let $\mathcal{L}_j$ be the set of nodes in the $j$-th level of the tree.
Initially, $\mathcal{L}_0=\{v_P\mid P\in \parti\}$ be the set of leaf nodes of the partition tree.
We repeat the following until we reach the root of the tree. For every pair of sibling nodes $v_1,v_2$ in $\mathcal{L}_j$ with parent $u\in\mathcal{L}_{j+1}$ we define $X_u'=X_{v_1}\cup X_{v_2}$ (Merging step). We construct a fair matching $\Pi_f^{(u)}$ on $X_u'$ as we had in the algorithm of \S~\ref{subsec:discr1}, and we execute $\hp(X_u',\ranges_{|X_u'},\Pi_f^{(u)})$. Let $X_u$ be the set of points from $X_u'$ with color $+1$ returned by the halving procedure (Halving step).
Let $\root$ be the root of the partition tree and $X^{(\root)}$ the points stored in it following the procedure above.
We set $\widehat{\X}=X_\root$.
While $|\widehat{\X}|>c_0\frac{4d}{\eps^2}\log(\frac{2d}{\eps})$, we repeat the following recursive approach. We construct a fair matching $\Pi_f$ on $\widehat{\X}$ and we execute $\hp(\widehat{\X},\ranges_{|\widehat{\X}},\Pi_f)$. Let $\widehat{\X}$ be the set of points with color $+1$ returned by the halving procedure.
The first time that we find $|\widehat{\X}|<c_0\frac{4d}\log\frac{2d}{\eps}$, we stop the repetitions and we set $\X=\widehat{\X}$. 
In the end, we run the algorithm from Theorem~\ref{thm:discrepancy} on the range space $(\X,\ranges_{|\X})$ with parameter $\eps/2$, and let $\res$ be the returned set of points. We return $\res$.
We show the pseudocode of our algorithm in Algorithm~\ref{alg:epsnet:dp:sketch}.
The overall structure of our algorithm using a partition is shown in Figure~\ref{fig:partitiontree}.

\paragraph{Correctness and runtime analysis}
\begin{lemma}
\label{lem:fairsample}
    The set $\X$ is a fair $\frac{\eps}{2}$-sample of $\rangespace$.
\end{lemma}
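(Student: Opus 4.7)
The plan is to prove two properties of $\X$ separately: (a) $\X$ preserves the DP color ratios of $X$, and (b) $\X$ is an $\eps/2$-sample of $\rangespace$. Property (a) follows by a clean invariant-tracking argument, while (b) requires careful accumulation of discrepancy errors across the two phases of the algorithm.

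For (a), I will induct on the sequence of halving steps (both the bottom-up merge+halve steps and the subsequent direct halvings of $\widehat{\X}$). Each step is driven by a fair matching, whose pairs consist of two same-color points; the halving keeps exactly one point from every pair, so each color class is split exactly in half. Together with the initial leaf partition (whose union is $X$ and therefore trivially satisfies DP), this invariant propagates up the tree to $\widehat{\X}$ and then down through the additional halvings to $\X$.

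For (b), the key per-step inequality is: if we halve $Y$ into $Y'$ via $\hp(Y,\ranges_{|Y},\Pi)$, then Lemma~\ref{lem:halving2} combined with $|R\cap Y|=|R\cap Y_+|+|R\cap Y_-|=2|R\cap Y'|\pm\Delta_Y$ gives
\[
\left|\frac{|R\cap Y'|}{|Y'|}-\frac{|R\cap Y|}{|Y|}\right|\;\leq\;\frac{\Delta_Y}{|Y|},\qquad \Delta_Y=\sqrt{2|Y|\ln(4|\ranges_{|Y}|)},
\]
and by Sauer's lemma $|\ranges_{|Y}|=O(|Y|^d)$. In the tree-merge phase, at each level $j\to j+1$ there are $n/2^{p+j+1}$ parallel halvings, each on a set of size $2^{p+1}$. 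Summing the per-halving count errors and dividing by the total level size $|Y_j|=n/2^j$ yields a per-level fractional error of $\Delta_{2^{p+1}}/2^{p+1}=O(\sqrt{pd/2^p})$, hence a cumulative Phase-1 error of $O\bigl(\log(n/2^p)\cdot \sqrt{pd/2^p}\bigr)$. In the direct halving phase on $\widehat{\X}$, each halving shrinks $n_j$ by a factor $2$, so the per-step fractional error $O(\sqrt{d\log n_j/n_j})$ grows geometrically and the total telescoping sum is dominated by the last term with $n_L\approx T=c_0\cdot 4d/\eps^2\cdot\log(2d/\eps)$; plugging in gives $O(\eps/\sqrt{c_0})\leq \eps/4$ for sufficiently large $c_0$. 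Choosing $2^p$ large enough (e.g.\ $2^p=\Omega(d\log^2 n/\eps^2)$ up to log factors) makes the Phase-1 bound at most $\eps/4$ as well. A triangle inequality applied to the chain $X\to Y_1\to\cdots\to\widehat{\X}\to\X$ then shows $|\,|R\cap \X|/|\X|-|R\cap X|/|X|\,|\leq\eps/2$ for every $R\in\ranges$, i.e.\ $\X$ is an $\eps/2$-sample.

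The main obstacle is the Phase-1 accounting: naively summing the discrepancies of all $n/2^{p+j+1}$ parallel sibling merges at a level could look catastrophic, but the crucial observation is that one should normalize by the total level size $n/2^j$, which cancels the number of merges and makes the per-level \emph{fractional} error depend only on $2^p$. This is what ensures the cumulative Phase-1 error is small even though many independent deterministic halvings contribute errors that cannot cancel probabilistically.
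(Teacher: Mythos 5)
Your proof is correct, and both halves match the paper's underlying structure: the fairness claim is proved by exactly the same invariant (fair matchings pair same-color points, halving keeps one point per pair, so every color class is halved and the ratios are preserved level by level), and your quantitative analysis of the sample error is a worked-out version of the two facts the paper merely cites from the discrepancy literature --- namely that a single halving of $Y$ yields a $(\Delta_Y/|Y|)$-sample, and that the union of equal-size $\delta$-samples of disjoint equal-size sets is a $\delta$-sample of the union (your ``normalize by the total level size'' observation is precisely the proof of the latter). The difference is one of self-containment: the paper delegates the error accumulation to \cite{chazelle2000discrepancy}, only checking that fair matchings are a legal instance of ``arbitrary matchings,'' whereas you make the per-level error $O(\sqrt{pd/2^p})$, the number of levels, the Phase-2 telescoping, and the choices of $2^p$ and $c_0$ fully explicit. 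This buys transparency, at a small cost: your choice $2^p=\widetilde{\Omega}(d\log^2 n/\eps^2)$ carries a $\mathrm{polylog}(n)$ factor into the leaf size, which is harmless for this lemma but, if propagated to the running-time lemma, would add a $(\log n)^{O(d)}$ factor that the paper's stated bound does not have; the classical treatment avoids this with a more careful halving schedule, which is presumably what the paper's ``carefully chosen value of $p$'' refers to.
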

\vspace{-4mm}
\begin{proof}
First, we argue that $\X$ is an $\eps/2$-sample. Indeed, using an arbitrary initial partitioning $\parti$ and arbitrary matchings in each node of the partition tree, it is known that in the root of the partition tree, an $\frac{\eps}{2}$-sample is found~\cite{chazelle2000discrepancy}. Two main arguments are used to make this claim. First, if $X'$ is the set of points with color $+1$ returned by the Halving procedure in Lemma~\ref{lem:halving2}, then $X'$ is a $\frac{2\Delta}{n}$-sample of size $n/2$. Since Lemma~\ref{lem:halving2} works for an arbitrary input matching, the claim is still valid using our constructed fair matchings. Second,  if  $\epssample_1$ and $\epssample_2$ are two equal-size $\eps$-samples for the range spaces $(\points_1, \ranges_{|\points_1})$ and $(\points_2, \ranges_{|\points_2})$, respectively, where $\points_1$ and $\points_2$ are disjoint subsets of $\points$ with the same size, then $\epssample_1 \cup \epssample_2$ is an $\eps$-sample of $\points_1 \cup \points_2$. 
Hence, $\X$ is an $\eps/2$-sample.
Let $X^{(j)}=\bigcup_{v\in \mathcal{L}_j}X_v$ be the union of all points stored in the nodes of the $j$-th level of the partition tree. Since we start with a fair partitioning and in each node we compute a fair matching we have that $\frac{|X^{(j)}_\group|}{|X^{(j)}|}=\frac{\frac{1}{2}|X^{(j-1)}_\group|}{\frac{1}{2}|X^{(j)}|}=\frac{|X_\group|}{|X|}$, for every $\group\in \groups$. Hence we have that $\frac{|\X_\group|}{|\X|}=\frac{|X_\group|}{|X|}$, for every $\group\in \groups$ and $\X$ is a fair $\eps/2$-sample of $\rangespace$.
\end{proof}

\begin{lemma}
The returned set $\res$ is a fair $\eps$-net of $\rangespace$.
\end{lemma}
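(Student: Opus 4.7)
The plan is to decompose the claim into the two requirements—fairness and $\eps$-net—and handle them separately, relying on Lemma~\ref{lem:fairsample} and Theorem~\ref{thm:discrepancy} as black boxes. Throughout, I will keep track of the color ratios and of the sampling-vs-net approximation errors, choosing the parameter $\eps/2$ everywhere so that the two $\eps/2$ errors combine to the desired $\eps$.

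First, for the fairness part, I would argue that $\res$ inherits the DP ratios of the input through a two-step chain. By Lemma~\ref{lem:fairsample}, the intermediate set $\X$ produced after the partition-tree halving and the root-level recursive halving satisfies $|\X_\group|/|\X| = |X_\group|/|X|$ for every $\group\in\groups$. The final step invokes Theorem~\ref{thm:discrepancy} on the range space $(\X, \ranges_{|\X})$, which produces a fair $\eps/2$-net with respect to the DP ratios of $\X$. Consequently, $|\res_\group|/|\res| = |\X_\group|/|\X| = |X_\group|/|X|$ for every $\group\in\groups$, establishing DP fairness of $\res$.

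For the $\eps$-net part, I plan to use the classical composition: an $\eps_2$-net of an $\eps_1$-sample is an $(\eps_1+\eps_2)$-net of the original space. Concretely, fix any range $R\in\ranges$ with $|R|/|X|\geq \eps$. Since $\X$ is an $\eps/2$-sample of $\rangespace$ by Lemma~\ref{lem:fairsample},
\[
    \frac{|R\cap \X|}{|\X|} \;\geq\; \frac{|R|}{|X|} - \frac{\eps}{2} \;\geq\; \frac{\eps}{2}.
\]
Thus $R\cap \X$ is a range of relative size at least $\eps/2$ in the induced range space $(\X,\ranges_{|\X})$. Since Theorem~\ref{thm:discrepancy} is invoked with parameter $\eps/2$, the returned set $\res$ is an $\eps/2$-net of $(\X,\ranges_{|\X})$, so $\res\cap R\cap \X \neq \emptyset$, and in particular $\res\cap R\neq \emptyset$. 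Hence $\res$ is an $\eps$-net of $\rangespace$.

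The main obstacle is conceptual rather than technical: confirming that the classical sample-then-net composition applies cleanly even when we impose fairness at every intermediate stage. This is fine because Lemma~\ref{lem:fairsample} already isolates the $\eps/2$-sample property of $\X$ from its fairness, and Theorem~\ref{thm:discrepancy} is applied to $\X$ treated as a fresh input range space, so the fair matchings introduced inside the partition tree do not interfere with either the sample error bound or the net guarantee. Combining the two observations above then yields that $\res$ is a fair $\eps$-net of $\rangespace$.
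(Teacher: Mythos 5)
Your proof is correct and follows essentially the same route as the paper: decompose into fairness and net properties, invoke Lemma~\ref{lem:fairsample} for the fair $\eps/2$-sample and Theorem~\ref{thm:discrepancy} for the fair $\eps/2$-net, and combine via the sample-then-net composition. The only difference is that you spell out the composition inequality $|R\cap\X|/|\X|\geq |R|/|X|-\eps/2$ explicitly, whereas the paper delegates it to a citation of the classical result; this is a welcome addition but not a different argument.
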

\begin{proof}
   It is known from~\cite{chazelle2000discrepancy} that 
   an $\eps_1$-net of a $\eps_2$-sample of $\rangespace$, is an $(\eps_1+\eps_2)$-net of $\rangespace$. Similarly, it is straightforward to see that a fair $\eps_1$-net of a fair $\eps_2$-sample of $\rangespace$, is a fair $(\eps_1+\eps_2)$-net of $\rangespace$. From Lemma~\ref{lem:fairsample}, we know that $\X$ is a fair $\frac{\eps}{2}$-sample of $\rangespace$. The algorithm from Theorem~\ref{thm:discrepancy} is executed on $(\X,\ranges_{|\X})$ with parameter $\frac{\eps}{2}$, and it returns the set $\res$ which is a fair $\frac{\eps}{2}$-net of $(\X,\ranges_{|\X})$. Hence, $\res$ is also a fair $\eps$-net of $\rangespace$.
\end{proof}

\begin{lemma}
    The running time of the algorithm is $O\left(n\cdot\frac{d^{3d}}{\eps^{2d}}\log^d\left(\frac{d}{\eps}\right)\right)$.
\end{lemma}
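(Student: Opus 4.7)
The plan is to bound the cost contributions of the three phases of Algorithm~\ref{alg:epsnet:dp:sketch} -- the partition-tree construction, the post-tree recursive halving, and the final call to the algorithm of Theorem~\ref{thm:discrepancy} -- and then verify that the tree-construction phase dominates, up to the stated bound.

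First I would analyze a single halving at an internal node $u$ of the partition tree. Since the two siblings each carry $2^p$ points, the merged set has $|X_u'| = 2^{p+1}$, and by the Sauer--Shelah lemma applied to the range space with VC dimension $d$ we obtain $|\ranges_{|X_u'}| = O((2^{p+1})^d) = O(2^{pd})$. Constructing the fair matching $\Pi_f^{(u)}$ is a single pass over $X_u'$ and costs $O(2^p)$. By Lemma~\ref{lem:halving2}, running $\hp(X_u', \ranges_{|X_u'}, \Pi_f^{(u)})$ takes $O(|X_u'|\cdot |\ranges_{|X_u'}|) = O(2^{p(d+1)})$ time, which dominates the cost at the node.

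Second, I would count the nodes. Since the binary partition tree has $n/2^p$ leaves, it has $O(n/2^p)$ internal nodes. Summing the per-node cost yields a tree-construction cost of $O((n/2^p)\cdot 2^{p(d+1)}) = O(n \cdot 2^{pd})$. I would then argue that the remaining phases are dominated: in the post-tree recursive halving the sizes shrink geometrically from $2^p$ down to the threshold $\Theta(d/\eps^2 \log(d/\eps))$, and the same Sauer--Shelah argument shows each iteration at size $s$ costs $O(s^{d+1})$, so the geometric series is dominated by its first term $O(2^{p(d+1)})$, which is already subsumed by the tree cost; the final call to the algorithm of Theorem~\ref{thm:discrepancy} on $(\X, \ranges_{|\X})$ with $|\X| = O(d/\eps^2 \log(d/\eps))$ and $|\ranges_{|\X}| = O(|\X|^d)$ runs in $O(|\X|^{d+1}\log |\X|)$ time, which is again subsumed.

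Finally, to match the stated complexity I would plug in the appropriate value of the partition parameter, namely $2^p$ on the order of $d^3/\eps^2 \log(d/\eps)$, and check that this choice is consistent with the assumptions at the end of Section~\ref{sec:background} (each $|X_\group|$ is a power of two large enough to admit the initial fair partition) and with the post-tree halving reaching the $\eps/2$-sample threshold. Substituting into $O(n \cdot 2^{pd})$ gives $O\!\left(n \cdot d^{3d}/\eps^{2d} \cdot \log^d(d/\eps)\right)$. The main step to handle carefully is the Sauer--Shelah reduction of $|\ranges_{|X_u'}|$ at each internal node: without it, the dependence on the (possibly huge) original $|\ranges|=m$ from Theorem~\ref{thm:discrepancy} would propagate into every halving call and defeat the purpose of the sketch-and-merge approach.
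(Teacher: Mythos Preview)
Your proposal is correct and follows essentially the same approach as the paper: bound a single halving via Sauer--Shelah by $O(|X_u'|^{d+1})=O(2^{p(d+1)})$, sum over the $O(n/2^p)$ nodes of the partition tree to get $O(n\cdot 2^{pd})$, argue that the post-tree halvings and the final invocation of Theorem~\ref{thm:discrepancy} are lower-order, and substitute $2^p=\Theta(d^3\eps^{-2}\log(d/\eps))$. The only small gap is that ``again subsumed'' for the final Theorem~\ref{thm:discrepancy} call is not automatic: the paper makes this explicit by writing the total as $O\!\left(n\cdot \frac{d^{3d}}{\eps^{2d}}\log^d\frac{d}{\eps}+\frac{d^{d+1}}{\eps^{2d+2}}\log^{d+2}\frac{d}{\eps}\right)$ and then invoking the standing assumption $n\gg \eps^{-2}\log^2(d/\eps)$ to drop the second term, so you should state that assumption when you claim domination.
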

\begin{proof}
The fair partitioning is constructed in $O(n)$ time.
Each Halving procedure on  a node $u$ of the partition tree runs in $O(|X_u|\cdot\min\{|\ranges_{|X_u}|,|X_u|^d\})$. Furthermore, every fair matching $\Pi_f^{(u)}$ is constructed in $O(|X_u'|)$ time.
We have $n / 2^p$ partitions, where each of them requires an execution of the Halving procedure.
As a result, all Halving procedures in the leaf level of the partition tree take $O(\frac{n}{2^p}(2^p)^{d+1})$ time.
Any other following steps take asymptotically smaller running time following the technical details~\cite{chazelle2000discrepancy}, which we ignore here. By a carefully chosen value of $p$, we construct the fair $\eps/2$-sample $\X$ in $O\left(d^{3d} (\frac{1}{\eps^2}\log \frac{d}{\eps})^d n\right)$ time.
Notice that $|\X|=O(\frac{d}{\eps^2}\log\frac{d}{\eps})$. Since the VC dimension is $d$ we have that $|\ranges_{|\X}|=O((\frac{d}{\eps^{2}}\log\frac{d}{\eps})^d)$. Hence, the algorithm from Theorem~\ref{thm:discrepancy} on $\X$ runs in $O((\frac{d}{\eps^{2}}\log\frac{d}{\eps})^{d+1}\log\frac{d}{\eps})$ time. Overall, our algorithm runs in

\vspace{-6mm}
\[O\left(n\cdot \frac{d^{3d}}{\eps^{2d}}\log^d \frac{d}{\eps} + \frac{d^{d+1}}{\eps^{2d+2}}\log^{d+2}\frac{d}{\eps}\right)\]
time.
We expect that $n\gg \frac{1}{\eps^2}\log^{2}\frac{d}{\eps}$ so the first term dominates the running time. The lemma follows.
\end{proof}

Putting everything together, we conclude with Theorem~\ref{thm:discrepancy2}.
\begin{theorem}
\label{thm:discrepancy2}
    Given a range space $\rangespace$, with $|X|=n$, VC dimension $d$, DP constraints $\mathcal{T}$, and a parameter $\eps\in(0,1)$,
    there exists a deterministic algorithm that constructs a fair $\eps$-net with respect to $\mathcal{T}$, of size $O\left(\frac{d}{\eps}\log\frac{d}{\eps}\right)$ in $O\left(n\cdot\frac{d^{3d}}{\eps^{2d}}\log^d\left(\frac{d}{\eps}\right)\right)$ time.
\end{theorem}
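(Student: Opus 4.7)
The plan is to assemble the three preceding lemmas into the claimed statement. Since the fair Sketch-and-Merge algorithm (Algorithm~\ref{alg:epsnet:dp:sketch}) has already been analyzed piecewise---correctness of $\res$ as a fair $\eps$-net, and running time $O(n\cdot \frac{d^{3d}}{\eps^{2d}}\log^d(d/\eps))$---the theorem follows by a direct combination, and the only remaining task is to match the size bound against the claimed $O(\frac{d}{\eps}\log\frac{d}{\eps})$.

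For the size bound, I would observe that the algorithm's final step invokes Theorem~\ref{thm:discrepancy} on the range space $(\X,\ranges_{|\X})$ with parameter $\eps/2$; this returns a fair $(\eps/2)$-net of that range space of size $O\!\left(\frac{d}{\eps/2}\log\frac{d}{\eps/2}\right)=O\!\left(\frac{d}{\eps}\log\frac{d}{\eps}\right)$. Since $\X$ is a fair $(\eps/2)$-sample of $(X,\ranges)$ by Lemma~\ref{lem:fairsample}, and since (as noted in the proof of the preceding correctness lemma) a fair $\eps_1$-net of a fair $\eps_2$-sample of $\rangespace$ is a fair $(\eps_1+\eps_2)$-net of $\rangespace$, $\res$ is a fair $\eps$-net of $\rangespace$ of the asserted size.

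For the running time, I would combine the costs of the two phases. The bottom-up Halving procedures that build $\X$ dominate: all leaf-level Halvings together cost $O(\frac{n}{2^p}(2^p)^{d+1})$, and with an optimized choice of $p$ this becomes $O\!\left(n\cdot\frac{d^{3d}}{\eps^{2d}}\log^d\frac{d}{\eps}\right)$, as shown in the running-time lemma. The final invocation of Theorem~\ref{thm:discrepancy} on $(\X,\ranges_{|\X})$ costs only $O(|\X|\cdot|\ranges_{|\X}|\cdot\log|\X|)$; by the Sauer--Shelah bound $|\ranges_{|\X}|=O((\frac{d}{\eps^2}\log\frac{d}{\eps})^d)$, so this term is $n$-free and, in the regime $n \gg \mathrm{poly}(d/\eps)$, is absorbed into the first term.

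The main (already-handled) obstacle is justifying the choice of the leaf-partition size $2^p$ so that the leaf-level Halving cost balances against the cost accumulating over the higher levels, and verifying that the classical discrepancy-based $\eps$-sample composition arguments from~\cite{chazelle2000discrepancy} continue to hold under \emph{fair} matchings. The latter is not obstructed because the halving guarantee of Lemma~\ref{lem:halving2} is oblivious to which matching is used as input, so inserting fair matchings at every node of the partition tree preserves the $\eps/2$-sample analysis while automatically enforcing the DP ratios in $\X$ and hence in $\res$.
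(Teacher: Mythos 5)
Your proposal is correct and follows essentially the same route as the paper: the theorem is obtained by directly combining Lemma~\ref{lem:fairsample}, the fair-net composition lemma, and the running-time lemma, with the size bound read off from the final invocation of Theorem~\ref{thm:discrepancy} on $(\X,\ranges_{|\X})$ with parameter $\eps/2$. Your observations that the halving guarantee is oblivious to the choice of matching and that the second running-time term is absorbed when $n$ dominates $\mathrm{poly}(d/\eps)$ match the paper's own justifications.
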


\vspace{-2.5mm}
\subsection{Satisfying Custom-Ratio Fairness}\label{sec:custom}\label{sec:epsnets:cr}
In this section, we discuss the second, more general, fairness definition where the user provides the ratios $\mathcal{T} = \{\tau_1, \tau_2, \cdots, \tau_k\}$ and requires the returned $\eps$-net $\res$ to satisfy
$\frac{|\res_{\group_i}|}{|\res|}=\tau_i$ for every $\group_i\in\groups$. 

Through the whole subsection, we assume that we have access to the ranges $\ranges$; in other words, the ranges are explicitly given to us as input. This assumption is valid in most cases; for example, the discrepancy approach to deterministically find an $\eps$-net is based on this assumption~\cite{har2011geometric, chazelle2000discrepancy}.

Assuming that we have access to the ranges $\ranges$ and $|\ranges|=m$, finding an $\eps$-net of the range space $\rangespace$ can be reduced to the \emph{hitting set} problem. In other words, based on the Definition~\ref{def:epsnet}, $\epsnet$ is an $\eps$-net of $\rangespace$, if and only if it hits all the ranges in $\ranges_{\eps} = \{\range \mid \range \in \ranges, \; \frac{|\range|}{|\points|} \geq \eps\}$.

Given the range space $\rangespace$, one can first filter out the light ranges and keep only the {\it heavy} ranges, aka $\ranges_{\eps}$. Then, compute a hitting set for the new range space $(\points, \ranges_{\eps})$. 
Any algorithm solving the hitting set problem would result in solving the $\eps$-net problem.

\vspace{-4mm}
\[
\text{$\varepsilon$-net}
\;\overset{\text{reduction}}{\underset{\text{over } \ranges_{\eps}}{\longrightarrow}}\;
\text{Hitting Set}
\]
\vspace{-2mm}

\newcommand{\poly}{\mathsf{poly}}
In the fairness context, any fair hitting set given the constraints $\mathcal{T}$ is also a fair $\eps$-net. Solving the Fair Hitting Set problem is discussed in~\cite{dehghankar2024fair} in the general setting. They provide a randomized $O(\log n)$-approximation algorithm for the fair set cover that runs in $O(\poly(n,m))$ time. The approximation factor holds in expectation. The algorithm in~\cite{dehghankar2024fair} is a greedy algorithm that picks the most promising points at each step. As a result, by applying the same algorithm, we can compute a {\bf fair} $\eps$-net for the range space $\rangespace$ with expected size $O(\opt\cdot\log n)$, where $\opt$ is the fair $\eps$-net with the minimum size. We note that this algorithm does not find a fair $\eps$-net with an absolute size as we had in Theorems~\ref{thm:sampling}, \ref{thm:discrepancy}, and~\ref{thm:discrepancy2}. Instead, our algorithm for custom-ratios returns a fair $\eps$-net with size larger by an (expected) $O(\log n)$ factor from the smallest possible fair $\eps$-net with respect to the CR constraints $\mathcal{T}$.
\begin{theorem}
\label{thm:CR}
    Given a range space $\rangespace$, with $|X|=n$, $|\ranges|=m$, CR constraints $\mathcal{T}$, and a parameter $\eps\in(0,1)$,
    there exists a randomized algorithm that constructs a fair $\eps$-net with respect to $\mathcal{T}$, of expected size $O(\opt\cdot \log(n))$ in $O(\poly(n,m))$ time, where $\opt$ is the size of the smallest fair $\eps$-net with respect to $\mathcal{T}$.
\end{theorem}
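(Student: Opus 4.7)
The plan is to carry out the reduction sketched informally just before the theorem statement and then invoke the fair hitting set algorithm of~\cite{dehghankar2024fair} as a black box. First, I would construct the set of heavy ranges $\ranges_\eps = \{R\in\ranges \mid |R|/|X|\geq \eps\}$ in $O(n\cdot m)$ time by scanning each range and counting its points. The key observation is that, by Definition~\ref{def:epsnet}, a set $\epsnet\subseteq X$ is an $\eps$-net of $\rangespace$ if and only if $\epsnet$ hits every range in $\ranges_\eps$. Consequently, the fair $\eps$-net problem on $\rangespace$ with CR constraints $\mathcal{T}$ is \emph{exactly} the fair hitting set problem on the range space $(X,\ranges_\eps)$ with the same constraints $\mathcal{T}$, and the two optimal values coincide.

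Next I would invoke the randomized algorithm of~\cite{dehghankar2024fair} on the instance $(X,\ranges_\eps,\mathcal{T})$. That algorithm produces, in $O(\poly(n,m))$ time, a subset $\res\subseteq X$ which (i) satisfies $\frac{|\res_{\group_\ell}|}{|\res|}=\tau_\ell$ for every $\ell\in[k]$, and (ii) hits every range in $\ranges_\eps$, with expected cardinality within an $O(\log n)$ factor of the optimum fair hitting set of $(X,\ranges_\eps)$. Combining (i) and (ii) with the equivalence above, $\res$ is a fair $\eps$-net of $\rangespace$ with respect to $\mathcal{T}$ whose expected size is $O(\opt\cdot\log n)$, where $\opt$ is the smallest fair $\eps$-net size. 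The preprocessing is dominated by the $\poly(n,m)$ cost of the hitting set routine, so the overall running time remains $O(\poly(n,m))$.

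The only subtlety worth highlighting is verifying that the optimal fair hitting set of $(X,\ranges_\eps)$ has the same cardinality as the optimal fair $\eps$-net of $\rangespace$; this is immediate from the iff characterization and the fact that both problems optimize over the same family of feasible (fair) subsets of $X$ subject to the identical coverage condition, so the $O(\log n)$ approximation ratio transfers without loss. I do not foresee a substantive obstacle beyond this bookkeeping, since the heavy lifting—namely the approximation guarantee and the polynomial running time for fair hitting set under arbitrary ratios—is already established in~\cite{dehghankar2024fair}. The statement of Theorem~\ref{thm:CR} then follows directly.
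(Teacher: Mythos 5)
Your proposal is correct and follows essentially the same route as the paper: restrict to the heavy ranges $\ranges_{\eps}$, observe that a fair $\eps$-net of $\rangespace$ is exactly a fair hitting set of $(X,\ranges_{\eps})$ with the same constraints $\mathcal{T}$, and invoke the randomized $O(\log n)$-approximation (in expectation) for fair hitting set from~\cite{dehghankar2024fair} as a black box. Your explicit check that the two optima coincide is a reasonable piece of bookkeeping that the paper leaves implicit, but it does not change the argument.
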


The algorithm from Theorem~\ref{thm:CR} works in general range spaces; however, assuming a range space $\rangespace$ with a bounded VC-dimension would result in better approximation algorithms~\cite{lit88, fs97, ch09, cch09}. The Fair Geometric Hitting Set  (FGHS) problem is discussed in detail in \S~\ref{sec:fghs}. Here, we only use the final result we get for the FGHS problem and use it to derive an approximation algorithm for constructing a fair $\eps$-net satisfying CR constraints $\mathcal{T}$.
In Theorem~\ref{thm:FGHS}, we show an $O\left(\!\max\{\log\frac{1}{\prob},d\log(\opt_{FGHS})\}\!\cdot\!\log(\frac{k}{\prob})\right)$-approximation algorithm (with probability at least $1-\prob$) for the Fair Geometric Hitting Set problem (on CR constraints), where $\opt_{\mathsf{FGHS}}$ is the optimum solution. By leveraging the algorithm from Theorem~\ref{thm:FGHS} on $(\points, \ranges_{\eps})$, we return a fair $\eps$-net $\res$ with respect to CR constraints $\mathcal{T}$ of size $O\left(\!\max\{\log\frac{1}{\prob},d\log(\opt)\}\!\cdot\!\log(\frac{k}{\prob})\!\cdot\!\opt\right)$ with probability at least $1-\prob$. The algorithm's time complexity is $O(n\cdot m +\mathcal{M}(n,m+k))$, where $\mathcal{M}(n,m+k)$ is the running time to solve a linear program with $O(n)$ variables and $O(m+k)$ constraints (Theorem~\ref{thm:FGHS}).

\begin{theorem}
\label{thm:CR2}
    Given a range space $\rangespace$, with $|X|=n$, $|\ranges|=m$, CR constraints $\mathcal{T}$, and a parameter $\eps\in(0,1)$,
    there exists a randomized algorithm that constructs a fair $\eps$-net with respect to $\mathcal{T}$, of size $O\left(\!\max\{\log\frac{1}{\prob},d\log(\opt)\}\!\cdot\!\log(\frac{k}{\prob})\!\cdot\!\opt\right)$ with probability at least $1-\prob$, in $O(n\cdot m +\mathcal{M}(n,m+k))$ time, where $\opt$ is the size of the smallest fair $\eps$-net with respect to $\mathcal{T}$, $k$ is the number of colors in $X$, and $\mathcal{M}(n,m+k)$ is the running time to solve a linear program with $O(n)$ variables and $O(m+k)$ constraints.
\end{theorem}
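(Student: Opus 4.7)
The plan is to reduce the Fair $\eps$-net problem with CR constraints to Fair Geometric Hitting Set (FGHS) on a restricted range subspace, and then plug in Theorem~\ref{thm:FGHS}. The reduction is immediate from the definition of an $\eps$-net: a subset $\res\subseteq X$ is an $\eps$-net of $\rangespace$ if and only if $\res$ hits every ``heavy'' range in $\ranges_{\eps}=\{R\in\ranges \mid |R|/|X|\geq \eps\}$. Thus a fair hitting set of the derived range space $(\points,\ranges_{\eps})$ with respect to $\mathcal{T}$ is exactly a fair $\eps$-net of $\rangespace$ with respect to $\mathcal{T}$, and crucially the two problems share the same optimum: $\opt_{\mathsf{FGHS}}(\points,\ranges_{\eps})=\opt$.

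First I would construct $\ranges_{\eps}$ by scanning each of the $m$ ranges and counting its points in $O(nm)$ total time, discarding any range $R$ with $|R|<\eps|X|$. Second, I would invoke the algorithm from Theorem~\ref{thm:FGHS} on the range space $(\points,\ranges_{\eps})$ with confidence parameter $\prob$. Since $(\points,\ranges_{\eps})$ inherits the VC dimension bound $d$ of $\rangespace$ (restricting the family of ranges cannot increase the VC dimension), Theorem~\ref{thm:FGHS} guarantees that with probability at least $1-\prob$ the returned set $\res$ is a fair hitting set of $(\points,\ranges_{\eps})$ of size
\[
    O\!\left(\max\{\log\tfrac{1}{\prob},\, d\log(\opt_{\mathsf{FGHS}})\}\cdot \log\tfrac{k}{\prob}\cdot \opt_{\mathsf{FGHS}}\right).
\]
Substituting $\opt_{\mathsf{FGHS}}=\opt$ yields the claimed size bound, and by the reduction above $\res$ is simultaneously a fair $\eps$-net of $\rangespace$.

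For the running time, the filtering step costs $O(nm)$, and the FGHS invocation adds $O(\mathcal{M}(n,m'+k))$ where $m'=|\ranges_{\eps}|\leq m$, giving the total $O(nm+\mathcal{M}(n,m+k))$ as stated. Fairness holds deterministically (the algorithm of Theorem~\ref{thm:FGHS} enforces the ratios as hard constraints via its LP), while the randomization affects only the hitting-set guarantee and its size.

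The only substantive thing to verify is the equality $\opt_{\mathsf{FGHS}}(\points,\ranges_{\eps})=\opt$, which I expect to be the main, if minor, obstacle. One direction is clear: every fair $\eps$-net hits all heavy ranges while satisfying $\mathcal{T}$, so it is a feasible solution for FGHS on $(\points,\ranges_{\eps})$, giving $\opt_{\mathsf{FGHS}}\leq \opt$. For the reverse direction, any fair hitting set of $(\points,\ranges_{\eps})$ satisfying $\mathcal{T}$ hits every range of density at least $\eps$ in $\rangespace$ by construction, hence is a fair $\eps$-net of $\rangespace$, giving $\opt\leq \opt_{\mathsf{FGHS}}$. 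Combined, the two bounds give the equality and the theorem follows.
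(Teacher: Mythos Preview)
Your proposal is correct and follows exactly the paper's approach: filter $\ranges$ down to the heavy ranges $\ranges_\eps$, observe that a fair $\eps$-net of $\rangespace$ is precisely a fair hitting set of $(\points,\ranges_\eps)$ (so the two optima coincide), and invoke Theorem~\ref{thm:FGHS}. One small correction to your parenthetical: fairness is \emph{not} enforced deterministically by the FGHS algorithm---the underlying weighted fair $\eps$-net of Theorem~\ref{thm:weightedsampling} achieves the exact ratios only on the high-probability event of Lemma~\ref{lem:size}, so both the hitting and fairness guarantees are folded into the $1-\prob$ success probability---but this does not affect your argument.
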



It is worth mentioning that in \S~\ref{sec:fghs}, the Fair Geometric Hitting Set is solved by a reduction to the weighted fair $\eps$-net with DP constraints. As a result, the whole algorithm reduces an instance of the fair $\eps$-net with CR constraint to an instance of the weighted fair $\eps$-net problem with DP constraints.

\vspace{-4mm}
\[
\text{Fair $\eps$-net (CR)}
\;\overset{\text{red.}}{\underset{}{\longrightarrow}}\;
\text{FGHS (CR)}
\;\overset{\text{red.}}{\underset{}{\longrightarrow}}\;
\text{Weighted Fair $\eps$-net (DP)}
\]
\vspace{-4mm}
\vspace{-2.5mm}
\subsection{Fair $\eps$-sample}\label{sec:epssample}
The fair $\eps$-sample problem under both demographic parity (DP) and custom-ratio (CR) constraints is addressed by extending the techniques introduced for $\eps$-nets. For DP, fair $\eps$-samples can be constructed following similar principles as in Section~\ref{sec:epsnets}. However, satisfying arbitrary CR constraints is not always feasible. In particular, if a range in the space separates all members of a demographic group, then any fair $\eps$-sample must preserve the group's proportion within an $\eps$ margin, limiting the range of achievable target ratios. We present a formal argument and provide a counterexample illustrating this limitation in Appendix~\ref{app:epssample}.
\vspace{-3mm}
\section{Fair Geometric Hitting Set}\label{sec:fghs}
\vspace{-0.6mm}
In this section, we give an algorithm for satisfying the CR constraints in the Geometric Hitting Set. Given the ratios $\mathcal{T} = (\tau_1, \tau_2, \cdots, \tau_k)$ the goal of FGHS is to find the smallest subset $\res^* \subseteq X$ such that:

\vspace{-.5mm}
\begin{enumerate}
    \item $\forall \group_\ell\in \groups,\; \frac{|\res^*_{\group_\ell}|}{|\res^*|} = \tau_l$,
    \item $\forall R\in\ranges,\; R\cap \res^*\neq \emptyset$.
\end{enumerate}
\vspace{-1mm}
Equivalently, one can consider Fair Geometric Set Cover defined in the dual range space where sets are associated with colors.

There are two popular algorithms for solving the (unfair) Geometric Hitting Set problem. The first approach relies on the Multiplicative Weight Update method~\cite{arora2012multiplicative, lit88, fs97}. This is a greedy algorithm that starts by choosing a subset of points. If this subset is not a hitting set for $\ranges$, they re-weight all the points that belong to at least a range that is not hit by the current solution. Iteratively repeating this process results in finding a $\log \opt_{\mathsf{GHS}}$ approximation solution, where $\opt_{\mathsf{GHS}}$ is the optimum geometric hitting set~\cite{har2011geometric}.

The second approach formulates the problem as an Integer Program, and solves the relaxed Linear Program~\cite{lon01, cch09, ch09, har2011geometric}. Our algorithm is a variant of this approach that adds the fairness constraints inside the LP formulation. Here, we assume that we have access to all ranges $\ranges$ with $|\ranges|=m$.

The IP for solving the FGHS problem is as follows:
\vspace{-.5em}
\begin{small}
\begin{align}\label{eq:ip}
    \text{min} \quad & \sum_{p_i\in X} z_i \\
    \text{s.t.} \quad 
     \sum_{p_i\in R} z_i &\geq 1, \quad \forall R\in \ranges\notag\\
     \sum_{p_i\in X_{\group_\ell}}z_i&=\tau_\ell\cdot \sum_{p_i\in X}z_i, \quad\forall\group_\ell\in\groups\notag\\
    z_i &\in \{0, 1\}, \forall i\in[n]\notag
\end{align}
\end{small}

In this formulation, $z_i$ is a variable that indicates whether the point $p_i\in X$ is selected in the final cover. The first type of constraints ensures that all ranges are hit. The second type of constraints ensures the CR fairness constraints. The equivalent LP formulation considers the real domain $[0, 1]$ for variables $z_i$. By using a variable $f = \sum_i z_i$, we can formulate the LP as:
\vspace{-0.6em}
\begin{small}
\begin{align}\label{eq:mlp}
    \text{min} \quad & f \\
    \text{s.t.} \quad 
     \sum_{p_i\in X} z_i &= f \notag \\
     \sum_{p_i\in R} z_i &\geq 1, \quad \forall R\in\ranges \notag\\
    \sum_{p_i\in X_{\group_\ell}}z_i&=\tau_\ell\cdot f, \quad\forall\group_\ell\in\groups\notag\\
    z_i\in[0,1],& \forall i\in[n],\quad f\geq 0\notag
\end{align}
\end{small}

Now, define $\bar{w}_i = \frac{z_i}{f}$ and $\bar{\eps} = \frac{1}{f}$. We can rewrite the LP:
\vspace{-0.5em}
\begin{small}
\begin{align}\label{eq:lp}
    \text{max} \quad & \bar{\eps} \\
    \text{s.t.} \quad 
     \sum_{p_i\in X} \bar{w}_i &= 1 \notag \\
     \sum_{p_i\in R} \bar{w}_i &\geq \bar{\eps}, \quad \forall R\in\ranges \notag\\
    \sum_{p_i\in X_{\group_\ell}}\bar{w}_i&=\tau_\ell,\quad\forall\group_\ell\in\groups\notag\\
    \bar{w}_i\in[0,1], &\forall i\in[n], \quad\bar{\eps}\geq 0\notag
\end{align}
\end{small}
Interestingly, the LP in~\eqref{eq:lp} is interpreted as follows. The LP sets some weights to the points such that every range in $\ranges$ has a total weight of at least $\bar{\eps}$, and the total weight of the points with color $\group_\ell$ is $\tau_\ell$. Intuitively, a weighted fair $\bar{\eps}$-net as defined in~\ref{subsec:weightedfairepsnet}.
Using these observations, we are ready to describe our algorithm for the FGHS problem.

\paragraph{Algorithm}
Construct the instance of the LP~\eqref{eq:lp}. Use an LP solver to solve the LP ~\eqref{eq:lp}. Let $\eps$ be the value of variable $\bar{\eps}$ in the optimum LP solution and $w_i$ be the value of the variable $\bar{w}_i$ for every $p_i\in X$ in the optimum LP solution. Let $\vec{w}=\{w_1,\ldots, w_n\}$ be the vector of all points' weights.
We run the algorithm from Theorem~\ref{thm:weightedsampling} on $\rangespace$ with DP constraints $\mathcal{T}$, parameter $\eps$, assuming that each point $p_i\in X$ has weight $w_i$. Let $\res$ be the weighted fair $\eps$-net returned by the algorithm from Theorem~\ref{thm:weightedsampling}. We return $\res$.
The pseudocode is shown in Algorithm~\ref{alg:fghs}.

\paragraph{Correctness and runtime analysis}
Let $\opt_{FGHS}$ be the size of the optimum solution for the FGHS problem on $\rangespace$.
\begin{lemma}
    The algorithm returns a hitting set of size 
    $$O\left(\max\{\log\frac{1}{\prob},d\log(\opt_{FGHS})\}\cdot\log(\frac{k}{\prob})\cdot \opt_{FGHS}\right)$$
    with probability at least $1-\prob$.
\end{lemma}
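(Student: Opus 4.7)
The plan is to combine two observations: (i) the LP~\eqref{eq:lp} is a relaxation of the IP~\eqref{eq:ip}, so its optimum gives a lower bound on $\opt_{FGHS}$; and (ii) the optimum LP solution produces exactly the weights and parameter needed to invoke Theorem~\ref{thm:weightedsampling}.

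First, I would argue that $1/\eps \leq \opt_{FGHS}$. The integral solution corresponding to the optimum FGHS (set $z_i=1$ for each chosen point) is feasible for the IP~\eqref{eq:ip}, hence also for the LP~\eqref{eq:mlp} with objective $f = \opt_{FGHS}$. Translating to the form~\eqref{eq:lp} via $\bar{w}_i = z_i/f$ and $\bar{\eps} = 1/f$, we get a feasible LP point with $\bar{\eps}=1/\opt_{FGHS}$. Since the LP~\eqref{eq:lp} \emph{maximizes} $\bar{\eps}$, the returned $\eps$ satisfies $\eps \geq 1/\opt_{FGHS}$, i.e., $1/\eps \leq \opt_{FGHS}$.

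Next, I would verify that the weighted fair $\eps$-net algorithm from Theorem~\ref{thm:weightedsampling} can be legitimately applied with weights $w_i$ and parameter $\eps$ from the LP, and that the output $\res$ is a genuine hitting set for $\ranges$. By LP feasibility, $\sum_i w_i = 1$, $\sum_{p_i \in X_{\group_\ell}} w_i = \tau_\ell$ for every $\group_\ell\in\groups$, and $\sum_{p_i \in R} w_i \geq \eps$ for every $R \in \ranges$. Thus every range has weight at least $\eps$, so by the definition of a weighted $\eps$-net, $\res \cap R \neq \emptyset$ for all $R \in \ranges$, i.e., $\res$ hits every range. Moreover, the DP ratios of Theorem~\ref{thm:weightedsampling} are $W_{\group_\ell} = \sum_{p_i \in X_{\group_\ell}} w_i = \tau_\ell$, which coincide with the CR ratios, so $\res$ satisfies the CR fairness constraint with respect to $\mathcal{T}$ with probability at least $1-\prob$.

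Finally, I would substitute the size bound. Theorem~\ref{thm:weightedsampling} guarantees that $|\res| = O\!\left(\frac{1}{\eps}\max\{\log\frac{1}{\prob}, d\log\frac{1}{\eps}\}\cdot\log\frac{k}{\prob}\right)$ with probability at least $1-\prob$. Plugging in $1/\eps \leq \opt_{FGHS}$ (and using monotonicity of $\log$) yields
\[
|\res| = O\!\left(\max\{\log\tfrac{1}{\prob},\, d\log(\opt_{FGHS})\}\cdot\log\tfrac{k}{\prob}\cdot \opt_{FGHS}\right),
\]
which is exactly the claimed bound. The only mildly subtle step is justifying the LP-relaxation inequality $1/\eps \leq \opt_{FGHS}$ in the presence of the fairness equality constraints; I expect this to go through cleanly because the integral optimum of FGHS itself satisfies the CR ratios exactly, so its scaled version remains feasible for~\eqref{eq:lp}.
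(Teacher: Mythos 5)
Your proposal is correct and follows essentially the same route as the paper: bound $1/\eps \leq \opt_{FGHS}$ via the LP relaxation of the IP, then invoke Theorem~\ref{thm:weightedsampling} with the LP weights and substitute. Your write-up is in fact slightly more explicit than the paper's, since you spell out why the integral FGHS optimum (which satisfies the CR equalities exactly) remains feasible for the LP and why the resulting weighted $\eps$-net hits every range; the paper leaves both points implicit.
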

\begin{proof}
We first observe that solving the LP relaxation of IP~\eqref{eq:ip} or any of the LP listed above is equivalent (i.e., given an optimal solution to one, we can derive an optimal solution to any other).
Let $z_i^*$ be the value of the variable $z_i$ in the optimal solution of IP~\eqref{eq:ip}. Notice that $\opt_{FGHS}=\sum_{p_i\in X}z_i^*$.
Furthermore, let $f^*$ be the value of the variable $f$ in the optimal solution of LP~\eqref{eq:mlp}.
Since LP~\eqref{eq:mlp} is equivalent to the LP relaxation of IP~\eqref{eq:ip} it holds that $\opt_{FGHS}\geq f^*$.
By definition, it also holds that $f^*=\frac{1}{\eps}$. Hence,
$$\opt_{FGHS}=\sum_{p_i\in X}z_i^*\geq f^*=\frac{1}{\eps}\Leftrightarrow \frac{1}{\eps}\leq \opt_{FGHS}.$$
The algorithm from Theorem~\ref{thm:weightedsampling} executed on $\rangespace$ with parameters $\eps, \prob$ and weights $\vec{w}=\{w_1,\ldots, w_n\}$ such that $W_{\group_\ell}=\sum_{p_i\in X_{\group_\ell}}w_i=\tau_\ell$ for every $\ell\in[k]$, returns a weighted fair $\eps$-net $\res$ of size $$O\!\left(\!\frac{1}{\eps}\max\{\log\frac{1}{\prob},d\log\frac{1}{\eps}\}\!\!\cdot\!\log\frac{k}{\prob}\!\right).$$
Since $\frac{1}{\eps}\leq \opt_{FGHS}$ the result of the lemma follows.
\end{proof}

Let $\mathcal{M}(x,y)$ be the running time to solve an LP with $O(x)$ variables and $O(y)$ constraints.
\begin{lemma}
    Our algorithm has a time complexity of $O(n\cdot m + \mathcal{M}(n,m+k))$.
\end{lemma}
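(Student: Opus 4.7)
The plan is to bound the running time by summing the costs of the three phases of the algorithm: (i) building the linear program, (ii) solving it, and (iii) invoking the weighted fair $\eps$-net routine of Theorem~\ref{thm:weightedsampling}. Since the claim simply identifies the bottleneck, nothing clever is required; the goal is to exhibit, for each phase, an explicit cost that fits inside $O(n\cdot m + \mathcal{M}(n,m+k))$.

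First I would analyze the cost of constructing LP~\eqref{eq:lp}. The LP has $n+1$ variables (one $\bar{w}_i$ per point, plus $\bar{\eps}$), and its constraints are: one normalization $\sum_i \bar w_i = 1$; one inequality $\sum_{p_i\in R}\bar w_i \geq \bar\eps$ per range $R\in\ranges$, of which there are $m$; and one equality $\sum_{p_i\in X_{\group_\ell}}\bar w_i = \tau_\ell$ per color, of which there are $k$. Writing down each range constraint requires listing the points in that range, which costs $O(|R|) \leq O(n)$, for a total of $O(n\cdot m)$ over all ranges. The color constraints cost $O(n)$ in total (one pass partitioning $X$ by color), and the normalization is $O(n)$. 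Hence the LP is assembled in $O(n\cdot m)$ time.

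Next, solving the LP contributes exactly $\mathcal{M}(n, m+k)$ time by the definition of $\mathcal{M}$, since it has $O(n)$ variables and $O(m+k)$ constraints. Reading off the optimal values $\eps$ and $w_1,\ldots,w_n$ from the solver output takes $O(n)$ additional time.

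Finally, by Theorem~\ref{thm:weightedsampling}, the invocation of the weighted fair $\eps$-net algorithm on $\rangespace$ with the computed weights runs in $O(n)$ time. Summing the three contributions yields $O(n\cdot m) + \mathcal{M}(n,m+k) + O(n) = O(n\cdot m + \mathcal{M}(n,m+k))$, as claimed. No step here is conceptually hard; the only subtlety worth flagging is the $O(nm)$ term from writing out the range constraints, which would remain the dominant cost whenever the LP solver is faster than the encoding size, so the bound is tight up to the explicit representation of $\ranges$ assumed at the start of Section~\ref{sec:custom}.
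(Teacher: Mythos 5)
Your proposal is correct and follows essentially the same decomposition as the paper's proof: $O(n\cdot m)$ to write out the LP by traversing the points in each range, $\mathcal{M}(n,m+k)$ for the solver on $O(n)$ variables and $O(m+k)$ constraints, and linear time for the weighted fair $\eps$-net call from Theorem~\ref{thm:weightedsampling}. The extra detail you give on the color and normalization constraints is consistent with, and slightly more explicit than, the paper's argument.
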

\vspace{-2mm}
\begin{proof}
    We construct the instance of the LP~\eqref{eq:lp} in $O(n\cdot m)$ time, traversing all points inside every range in $\ranges$. The constructed LP has $O(n)$ variables and $O(m+k)$ constraints so the LP solver runs in $O(\mathcal{M}(n,m+k))$ time. Finally, the algorithm from Theorem~\ref{thm:weightedsampling} runs in linear time with respect to the input points. The lemma follows.
\end{proof}

Putting everything together, we conclude with the next theorem.
\begin{theorem}
    \label{thm:FGHS}
    Given a range space $\rangespace$ with $|X|=n$, $|\ranges|=m$, CR constraints $\mathcal{T}$, and a parameter $\prob\in(0,1)$, there exists a randomized algorithm that constructs a fair geometric hitting set of $\rangespace$, of $O\left(\!\max\{\log\frac{1}{\prob},d\log(\opt_{FGHS})\}\!\cdot\!\log(\frac{k}{\prob})\!\cdot\! \opt_{FGHS}\!\right)$ size with probability at least $1-\prob$, in $O(n\cdot m + \mathcal{M}(n,m+k))$ time, where $\opt_{FGHS}$ is the size of the optimum FGHS solution,  $k$ is the number of different colors in set $X$, and $\mathcal{M}(n,m+k)$ is the running time to solve an LP with $O(n)$ variables and $O(m+k)$ constrains.
\end{theorem}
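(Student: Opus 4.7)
The plan is to establish correctness, the size bound, and the running time separately, with the heart of the argument being the reduction to the weighted fair $\eps$-net problem (Theorem~\ref{thm:weightedsampling}) via an LP relaxation.

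First I would verify that the three LP formulations~\eqref{eq:ip}, \eqref{eq:mlp}, \eqref{eq:lp} are equivalent under the transformation $\bar{w}_i = z_i/f$ and $\bar{\eps}=1/f$, so that an optimal solution of one can be recovered from an optimal solution of another. Let $z_i^\star$ denote an optimal solution to the IP~\eqref{eq:ip} (so $\opt_{FGHS}=\sum_i z_i^\star$), and let $f^\star$ be the optimal value of the LP relaxation~\eqref{eq:mlp}. Since the LP is a relaxation of the IP, $f^\star \leq \opt_{FGHS}$. The equivalence with~\eqref{eq:lp} then gives $f^\star = 1/\eps$, yielding the crucial inequality $1/\eps \leq \opt_{FGHS}$.

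Next I would verify that the output of the LP has exactly the structure demanded by the weighted fair $\eps$-net problem with DP constraints: the weights $\{w_i\}$ sum to $1$, the weight inside every range $R\in\ranges$ is at least $\eps$, and the weight of each color class $\group_\ell$ equals $\tau_\ell$. These are precisely the constraints of~\eqref{eq:lp}. Consequently, a weighted fair $\eps$-net of the range space $(\points,\ranges)$ under these DP group weights is automatically (i) a hitting set for all of $\ranges$ (since every range has weight at least $\eps$, so it contains at least one selected point), and (ii) fair with respect to $\mathcal{T}$. Applying Theorem~\ref{thm:weightedsampling} with parameters $\eps,\prob$ therefore returns, with probability at least $1-\prob$, a set $\res$ of size $O\!\left(\frac{1}{\eps}\max\{\log\frac{1}{\prob},d\log\frac{1}{\eps}\}\cdot \log\frac{k}{\prob}\right)$. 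Substituting $1/\eps \leq \opt_{FGHS}$ (and $\log(1/\eps)\leq \log \opt_{FGHS}$) into this bound yields the claimed size.

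For the running time, constructing the LP instance requires scanning, for each of the $m$ ranges, the points contained in it, taking $O(n\cdot m)$ time. Solving the LP, which has $O(n)$ variables and $O(m+k)$ constraints (one per range, one per color class, plus the normalization constraint), takes $O(\mathcal{M}(n,m+k))$ time. The final call to the algorithm of Theorem~\ref{thm:weightedsampling} runs in $O(n)$ time. Summing gives $O(n\cdot m + \mathcal{M}(n,m+k))$.

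The main obstacle I anticipate is the inequality $1/\eps \leq \opt_{FGHS}$, which rests on confirming the equivalence of the three LP formulations (including that the change of variables $\bar{w}_i = z_i/f$ preserves both the group-ratio constraints and the hitting constraints after rescaling). Everything else, including the fairness of $\res$ and its being a hitting set, follows directly from the guarantees of Theorem~\ref{thm:weightedsampling} applied to an input whose weights and $\eps$ are certified by the LP optimum.
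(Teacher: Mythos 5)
Your proposal is correct and follows essentially the same route as the paper: solve the LP relaxation~\eqref{eq:lp}, use the chain $\opt_{FGHS}=\sum_i z_i^* \geq f^* = 1/\eps$ to bound $1/\eps$ by $\opt_{FGHS}$, and feed the resulting weights and $\eps$ into the weighted fair $\eps$-net algorithm of Theorem~\ref{thm:weightedsampling}, with the identical running-time decomposition. Your explicit remark that the LP forces every range to have weight at least $\eps$ (so the weighted fair $\eps$-net is automatically a hitting set for all of $\ranges$) is a point the paper leaves implicit, but it is the same argument.
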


\begin{algorithm}[bt]
\begin{algorithmic}[1]
\small
\Require Range space $\rangespace$, set of custom ratios $\mathcal{T}$.
\Ensure The Fair Geometric Hitting Set $\res$.
\Function{FGLP}{$\points, \ranges, \mathcal{T}$}
    \State $\eps, \vec{w} \gets \mathsf{LPSolver}(\points, \ranges, \mathcal{T})$
    \State $\res \gets$ Execute algorithm from Theorem~\ref{thm:weightedsampling} on $\rangespace$ with parameter $\eps$ and weights $\vec{w}$
    \State \Return $\res$
\EndFunction
\Function{$\mathsf{LPSolver}$}{$\points, \ranges, \mathcal{T}$}
    \State \textit{Solves the LP in Equation~\ref{eq:lp}.}
    \State \Return \textit{$\eps$ and the weights $\vec{w} =\{w_1, \cdots, w_n\}$.}
\EndFunction
\caption{Algorithm for finding a Fair Geometric Hitting Set through LP relaxation (FGLP).}
\label{alg:fghs}
\end{algorithmic}
\end{algorithm}
\vspace{-2.5mm}
\section{Related Work and Applications}\label{sec:related}

Fairness in data-driven systems has become a central topic in Computer Science, particularly in Machine Learning~\cite{barocas2023fairness,mehrabi2021survey}. 
However, relatively less attention has been given to algorithmic fairness in the context of combinatorial optimization problems~\cite{wang2022balancing}.
Related work includes satisfying fairness constraints while optimizing sub-modular functions \cite{wang2022balancing}, and satisfying fairness in Coverage Maximization problems ~\cite{asudeh2023maximizing,bandyapadhyay2021fair}, Set Cover~\cite{dehghankar2024fair}, Hitting Sets~\cite{inamdar2023fixed}, and Facility Locations~\cite{jung2019center}.
Fairness considerations have also been extended to a variety of other algorithmic domains, in related problems like Matching \cite{garcia2020fair,esmaeili2023rawlsian}, 
Resource Allocation~\cite{mashiat2022trade}, Ranking~\cite{asudeh2019designing,zehlike2017fa}, and Clustering~\cite{makarychev2021approximation,thejaswi2021diversity}.
Beyond these, fairness has been investigated in approximate data processing (AQP) data structures, including data-informed hashmaps~\cite{shahbazi2024fairhash} and locality-sensitive hashing~\cite{aumuller2021fair}.

For instance, fairness in coverage has been enforced by using coloring constraints to minimize discrepancies in the number of elements of different colors covered by the selected sets~\cite{asudeh2023maximizing}. Similarly, fairness has been studied in the context of vertex cover and edge cover problems~\cite{bandyapadhyay2021fair}, as well as in the set cover problem where colors are assigned to the sets rather than the elements~\cite{dehghankar2024fair}. However, none of these settings consider the geometric case where the range space has a bounded VC-dimension.

In contrast, this work focuses on the hitting set problem with fairness constraints applied directly to the elements (i.e., points) in geometric settings. While some of the existing methods could be adapted to the dual of this problem and serve as baselines, the setting considered here is fundamentally different: the focus is on the geometric case, where stronger approximation guarantees are achievable compared to general set cover or hitting set formulations. Additionally, a variant of the hitting set problem has been proposed in which the goal is to avoid selecting too many elements of the same type~\cite{inamdar2023fixed}, which is different from the objective considered here.

To the best of our knowledge, this is the first paper to study fairness in geometric approximation algorithms, particularly for \enet, $\eps$-sample, and geometric hitting set problems.

\vspace{-2.5mm}
\subsection{\enet 's application in Data-driven Systems}\label{sec:application}
In Example~\ref{ex:enet}, we highlighted the application of \enets as {\bf dataset summaries for range queries}, where the goal is to compress potentially very large datasets to a small subset that contains at least one tuple from any possible (axis-parallel) range query. The \enet can then facilitate AQP by quickly identifying a tuple in the set that satisfies a query range.
In the following, we briefly outline some other applications of \enets in data-driven systems:

\vspace{-1mm}
\stitle{$k$-Nearest Neighbor (kNN) search}
kNN search is popular for settings such as geospatial data, where one would like to retrieve the $k$ nearest tuples with minimum Euclidean distance to a given query point $q$~\cite{roussopoulos1995nearest,dhanabal2011review,yao2010k}. In such cases, the queries can be viewed as $d$-balls ($d$-dimensional hyperspheres) centered at $q$. Given a large dataset of tuples, an \enet provides a small subset that guarantees to hit the top-$k$ (for the corresponding $\eps$) of any given $d$-ball. Hence, the NN of the \enet to $q$ belongs to the $k$NN of the complete dataset.

\vspace{-1mm}
\stitle{Regret-minimizing Sets}
Another application of \enets is in regret-minimizing representative sets of a dataset, ~\cite{nanongkai2010regret}: compact sets that minimize a notion of regret in approximate top-$k$ query processing.
In particular, rank-regret representatives are compact subsets that guarantee to contain at least one of the top-$k$ results according to any linear ranking function~\cite{asudeh2019rrr}. Hence, by defining the range space as the universe of half-spaces, and setting $\eps=\frac{k}{n}$, an \enet for a dataset is a rank-regret representative~\cite{asudeh2022finding}.

\vspace{-1mm}
\stitle{Application demonstration in machine learning}
\enets, geometric hitting Sets, and set covers have diverse applications across machine learning tasks such as sample-efficient learning, active learning, clustering, and interoperability. For example, \enets offer a principled way to select small informative subsets for active learning under geometric and distributional assumptions, with applications in learning linear separators ~\cite{balcan2013active}, convex bodies ~\cite{har2021active}, and online settings ~\cite{bhore2024online}. In clustering, coresets based on \enets and hitting sets enable efficient approximations for $k$-means and $k$-median objectives \cite{har2004coresets, feldman2011unified}.
Hitting set formulations support interpretable and sparse models like the Set Covering Machines \cite{marchand2003set, hussain2004linear} and predictive checklists \cite{zhang2021learning}, where minimal feature sets capture key behaviors. They also find use in motion planning \cite{shaw2024towards} and sensor network coverage \cite{wu2020optimal}.

\vspace{-2.5mm}
\section{Experiments}\label{sec:exp}
In this section, we evaluate the proposed algorithms on (a) real-world datasets from related applications and (b) synthetic data. The code and other artifacts are publicly available\footnote{\color{blue}\href{https://github.com/UIC-InDeXLab/FairNet}{Github repository}\color{black}}. Appendix~\ref{app:exp} contains a detailed discussion of the experiments and additional results.

\begin{figure*}[ht]
\centering
    \begin{subfigure}[t]{0.32\linewidth}
        \includegraphics[width=.95\linewidth]{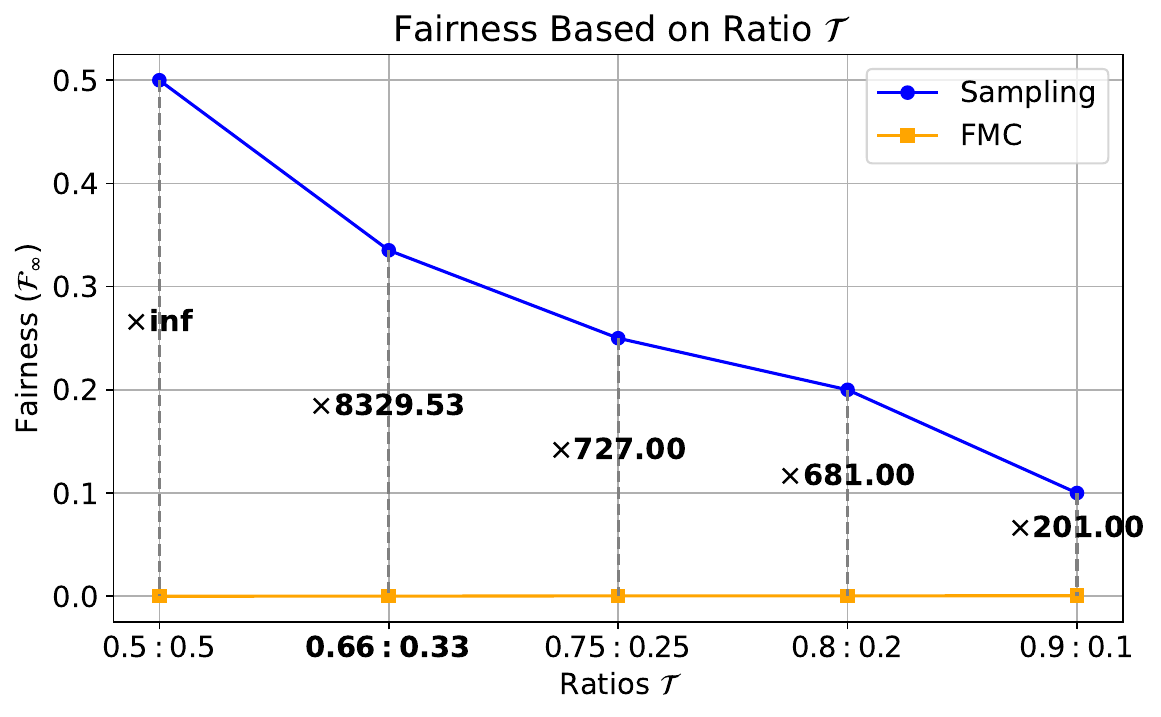}
        \vspace{-2.5mm}\caption{Adults Dataset}
    \end{subfigure}
    \hfill
    \begin{subfigure}[t]{0.32\linewidth}
        \includegraphics[width=.95\linewidth]{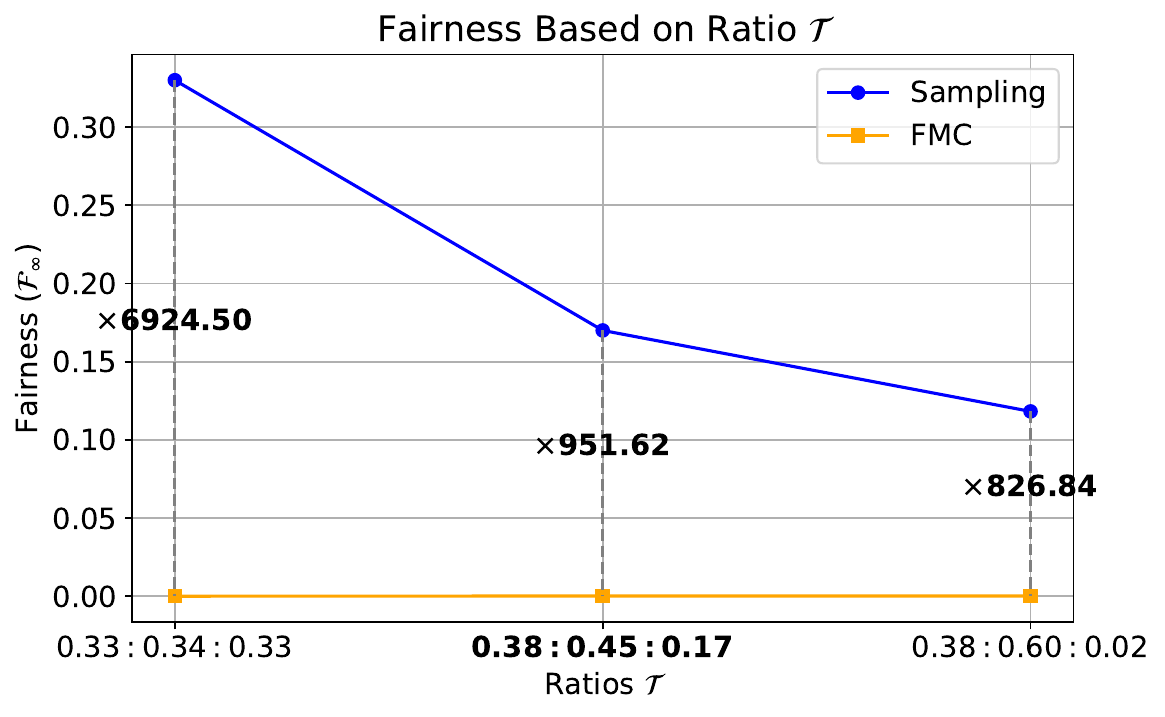}
        \vspace{-2.5mm}\caption{COMPAS Dataset}
    \end{subfigure}
    \hfill
    \begin{subfigure}[t]{0.32\linewidth}
        \includegraphics[width=.95\linewidth]{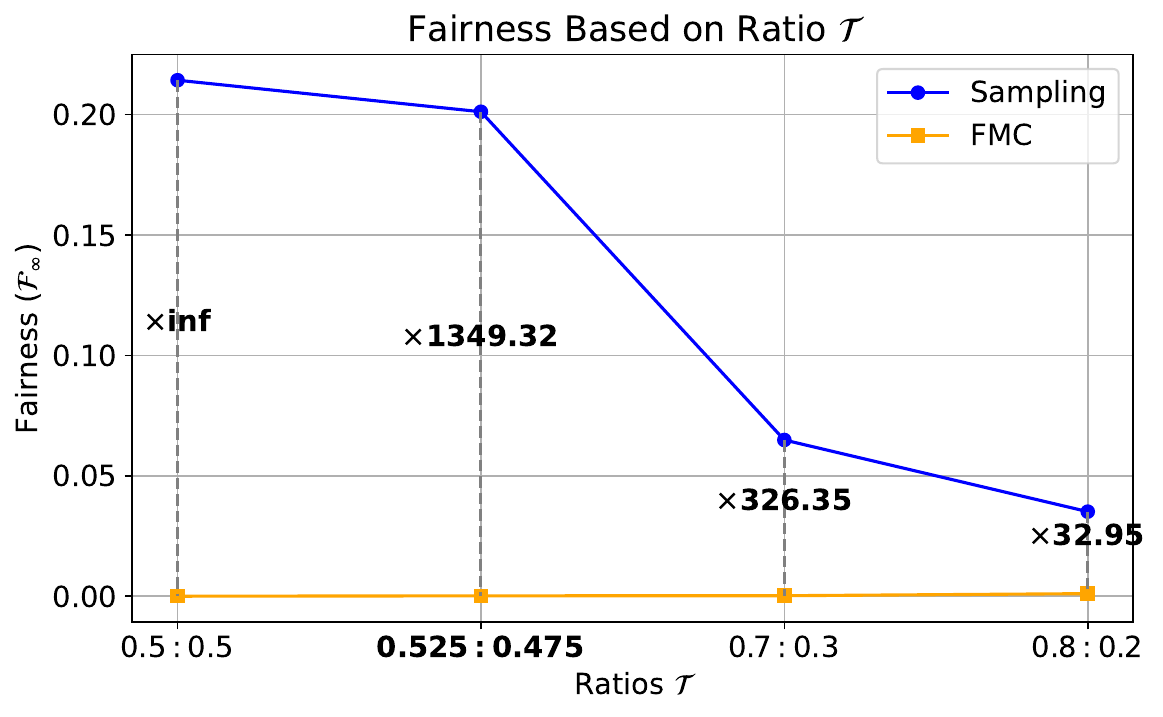}
        \vspace{-2.5mm}\caption{College Admission Dataset}
    \end{subfigure}
\vspace{-4mm}\caption{Measuring the fairness of the output of Fair Monte-Carlo (FMC) and standard sampling for dataset summarization. The fairness constraint is CR with its ratios $\mathcal{T}$ given on the x-axis.}
\label{fig:rds:fairness:cr}
\vspace{-5mm}
\end{figure*}    

\begin{figure*}[t]
\centering
\begin{minipage}[t]{0.32\linewidth}
    \centering
    \includegraphics[width=0.95\linewidth]{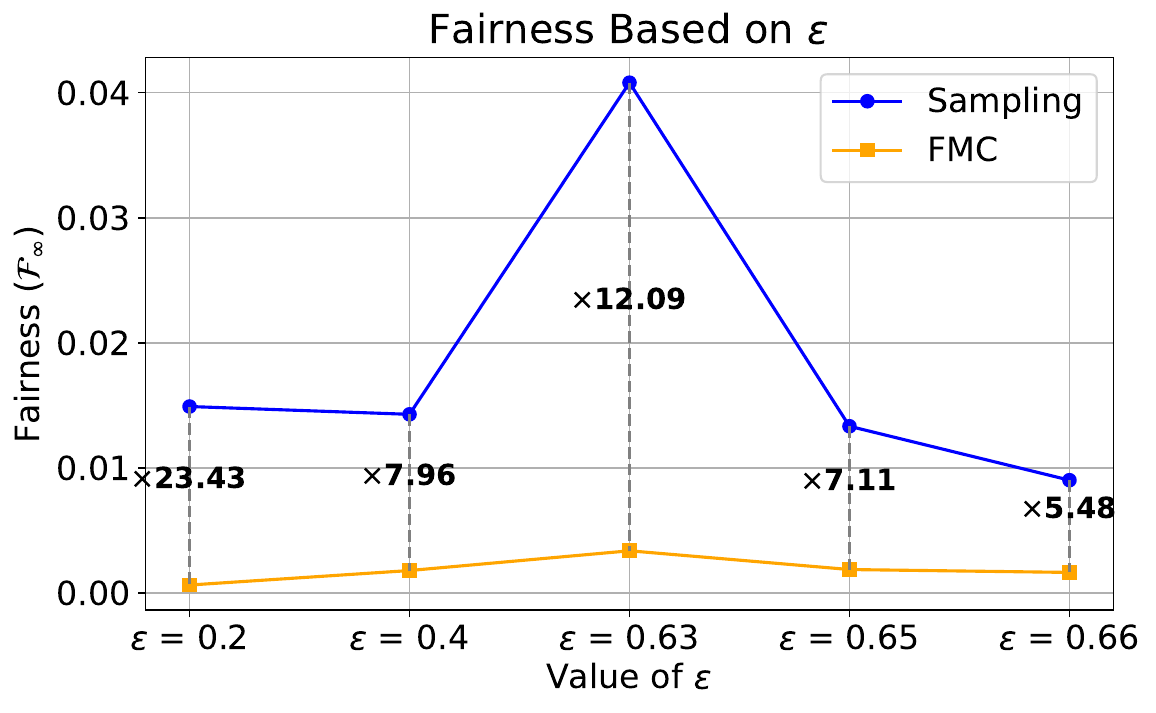}
    \vspace{-5mm}\caption{The fairness of FMC algorithm vs. standard sampling on the PopSim dataset vs. the value of $\eps$. The fairness constraint is DP.}
    \label{fig:popsim:dp}
\end{minipage}
\hfill
\begin{minipage}[t]{0.32\linewidth}
    \centering
    \includegraphics[width=0.95\linewidth]{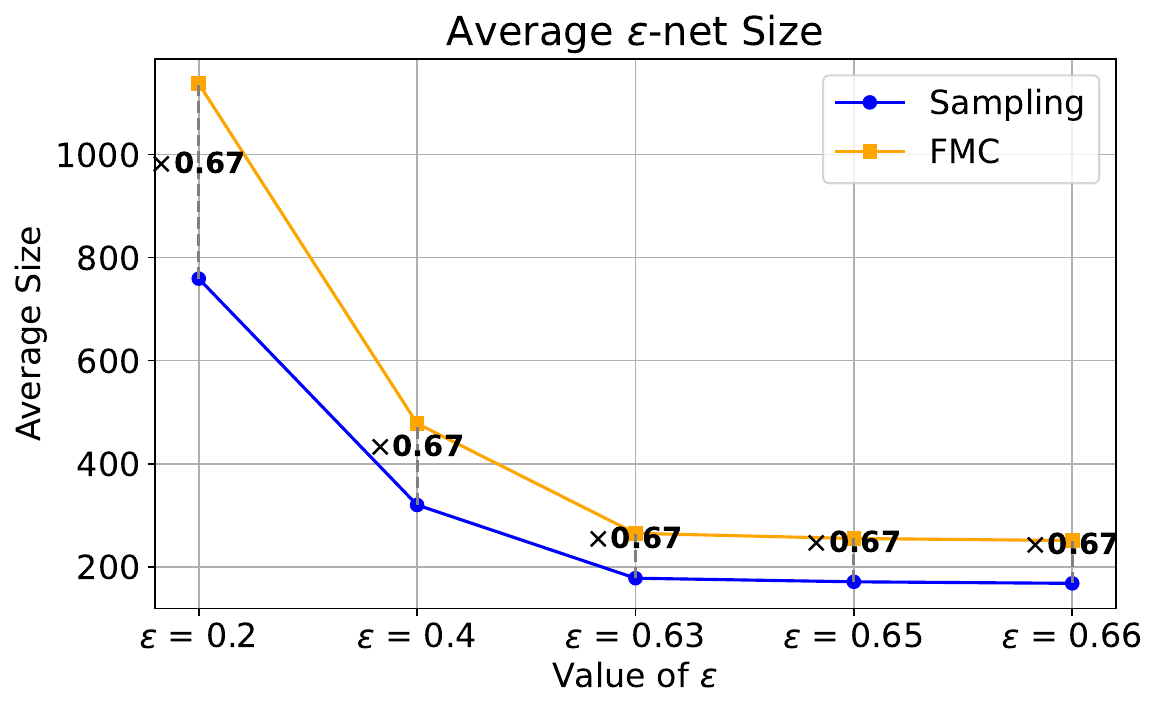}
    \vspace{-5mm}\caption{Average size of output $\eps$-net for the PopSim dataset according to DP fairness constraint.}
    \label{fig:popsim:size}
\end{minipage}
\hfill
\begin{minipage}[t]{0.32\linewidth}
    \centering
    \includegraphics[width=0.95\linewidth]{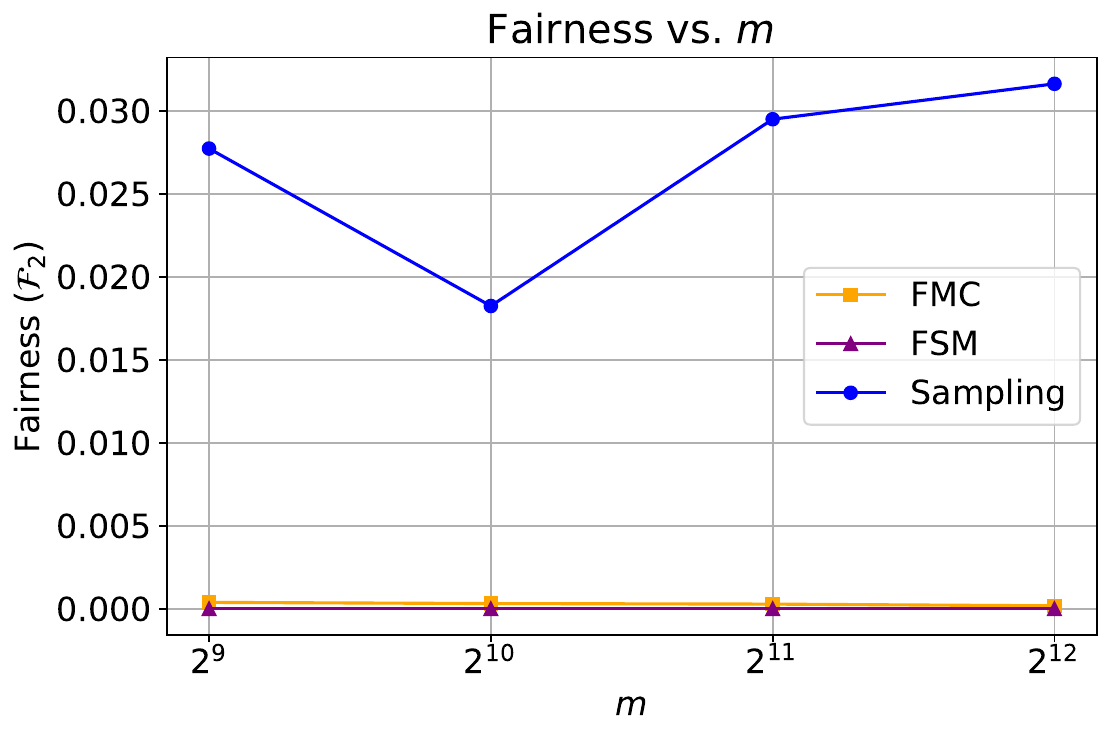}
    \vspace{-5mm}\caption{The fairness of FMC, FSM, and sampling algorithms on Synthetic Sampling task. $m$ is the number of rectangles (ranges) in 2D.}
    \label{fig:synth:epsnet:fairness}
    \vspace{-5mm}
\end{minipage}
\vspace{-4mm}
\end{figure*}

\vspace{-2.5mm}
\subsection{Experimental Setups}\label{sec:exp:setup}
We implement the following four tasks to compare the proposed algorithms with baselines:

\begin{itemize}[leftmargin=*]
    \item {\bf Database Summarization for Range Queries:}
    As illustrated in Example~\ref{ex:enet}, the goal in this setting is to find a small, representative sample of the database table. Given a collection of range queries $\mathcal{Q} = {q_1, q_2, \cdots, q_m}$, the objective is to select a small {\it fair} representative subset of tuples such that each heavy query is hit by at least one tuple from the sample.

    \item {\bf Neighborhood Hitting on Geographic Data:}
    Given a set of geographic coordinates representing individual locations, we aim to select a small {\it fair} subset of individuals (or facility points) such that every neighborhood has at least one selected point within the sample — effectively hitting all heavy spatial neighborhoods. As further discussed in \S~\ref{sec:application}, such a data summary facilitates approximate nearest neighbor search, where the goal is to return one of the top$-\eps$ closest tuples in the vicinity of a query point.

    \item {\bf Rank Regret Representative:} 
    The goal of this task is to find a small subset of tuples that hit the top-$l$ of any ranking function~\cite{asudeh2019rrr}.
    A ranking function is defined as a linear combination of the dataset attributes. The top-$l$ tuples are determined by computing the dot product between each tuple and the weight vector $f$, followed by sorting based on the resulting scores. Our objective is to find a {\it fair} rank-regret representative of the dataset.

    \item {\bf Synthetic Data:}
    In this setup, we construct synthetic datasets with tunable parameters such as range size $m$, dataset size $n$, VC-dimension $d$, and demographic distributions. The goal is to evaluate algorithms under varying geometric complexity.
\end{itemize}

\paragraph{Algorithms:} We evaluate the following set of algorithms, as introduced in the previous sections and summarized in Table~\ref{tab:summary}:
\begin{itemize}[leftmargin=*]
    \item {\bf $\eps$-net:} As a set of baselines, for finding a standard (unfair) $\eps$-net, we consider {\it Sampling}, {\it Discrepancy}, and {\it Sketch-and-Merge (SM)} -- See \S~\ref{sec:background}\footnote{Whenever we don't mention {\it "Fair"} in the name of an algorithm, we mean the baseline unfair version. We sometimes refer to unfair algorithms as "Standard".}.  In all the reports, "{\bf Sampling}" refers to the baseline unfair $\eps$-net construction.
    \item {\bf Fair $\eps$-net:} For fair $\eps$-nets, we use {\it Fair Monte-Carlo (FMC)} algorithm and {\it Fair Sketch-and-Merge (FSM)} as introduced in \S~\ref{sec:epsnets}.
    \item {\bf Hitting Set:} For standard (unfair) hitting sets, we employ the {\it Geometric LP (GLP)} algorithm as a baseline, where LP relaxation is used to solve the geometric hitting set problem.
    \item {\bf Fair Hitting Set:} For the fair hitting set, we use the fair version, {\it Fair Geometric LP (FGLP)} as introduced in \S~\ref{sec:fghs}.
\end{itemize}

\paragraph{Measuring (un)fairness:} 
Fairness constraints are enforced based on either demographic parity (DP) or custom-ratios (CR) fairness.
We evaluate fairness in the final results by considering two measures: $\ell_2$-norm and $\ell_{\infty}$-norm. Assume $\epsnet \subseteq \points$ is the output of one of the algorithms and $\groups = \{\group_1, \group_2, \cdots, \group_k\}$ are the colors (demographic groups). The $\ell_2$-norm fairness calculates the $\ell_2$ distance of colors in $\epsnet$ versus the desired constraint vector $\mathcal{T}$:
\vspace{-0.5em}
\begin{align}
    \mathcal{F}_{2}(\epsnet, \mathcal{T}) &= \frac{1}{k} \left\| \left( \frac{|\epsnet_{\group_1}|}{|\epsnet|}, \dots, \frac{|\epsnet_{\group_k}|}{|\epsnet|} \right) - \mathcal{T} \right\|^2_2 \\
    &= \frac{1}{k} \sum_{i \leq k} \left(\frac{|\epsnet_{\group_i}|}{|\epsnet|} - \tau_i\right)^2
\end{align}

Similarly, we define $\ell_\infty$-norm fairness:
\vspace{-0.5em}
\begin{align}
    \mathcal{F}_{\infty}(\epsnet, \mathcal{T}) = \max_{i \leq k} \left| \frac{|\epsnet_{\group_i}|}{|\epsnet|} - \tau_i \right|
\end{align}

\paragraph{Datasets:}
We use four real-world datasets in our experiments. PopSim~\cite{nguyen2023popsim}, Adult~\cite{Dua:2019}, COMPAS~\cite{compas2016}, and College Admissions~\cite{Chand:2020}. A detailed description of these datasets is provided in the Appendix~\ref{app:exp:datasets}.
We also use a synthetic dataset with axis-aligned rectangles in 2D and half-spaces in higher dimensions(see Appendix~\ref{app:exp:synth} for more details).

\vspace{-3mm}
\subsection{Experiemnt Results}

\stitle{Database Summarization for Range Queries}
For this task, we use three datasets: Adult, COMPAS, and College Admission. For each dataset, we generate a collection of range queries that the output sample must hit. These ranges are constructed as hyper-rectangles in the feature space. Specifically, for each attribute, we select a threshold to partition the data into two groups, resulting in up to $2^d$ hyper-rectangles, where $d$ is the number of features. We then compute an $\eps$-net over the data points to hit these ranges for varying values of $\eps$.

Figure~\ref{fig:rds:fairness:cr} illustrates the \underline{fairness} ($\mathcal{F
}_\infty$) of the Fair Monte-Carlo (FMC) algorithm compared to the baseline (unfair) sampling approach under custom-ratio (CR) constraints. 
In the CR setting, the goal is to match a target ratio $\mathcal{T}$ of demographic groups in the output. FMC successfully enforces the desired ratio with zero unfairness, while the baseline approach fails to do so. Notably, as the target ratio for the {\em minority group} increases (towards the left side of x-axis), the standard algorithm struggles even more to satisfy the fairness constraint. In these cases, the standard sampling might not even pick any point from the minor group, which results in higher average values of $\mathcal{F}_\infty$.

The average \underline{output size} of each method is also reported in Figure~\ref{fig:rds:size}, where we can observe that the fair variants of the algorithms achieve zero unfairness while introducing a relatively small increase in the output size.

Table~\ref{tab:rds:time} reports the average \underline{running time} of both algorithms across the three datasets. The results show that the Fair Monte-Carlo (FMC) algorithm shows only a minimal overhead compared to the standard sampling approach.

Additional experimental results are presented in Appendix~\ref{app:exp:rds}, where we compare the fairness metrics of these methods across varying values of $\eps$ (Figure~\ref{fig:rds:fairness:dp}). Consistent trends in the improvement of fairness achieved by FMC can be observed. The similar results also observed on $\ell_2$-norm fairness ($\mathcal{F}_2$).

\begin{figure}
    \centering
    \includegraphics[width=0.6\linewidth]{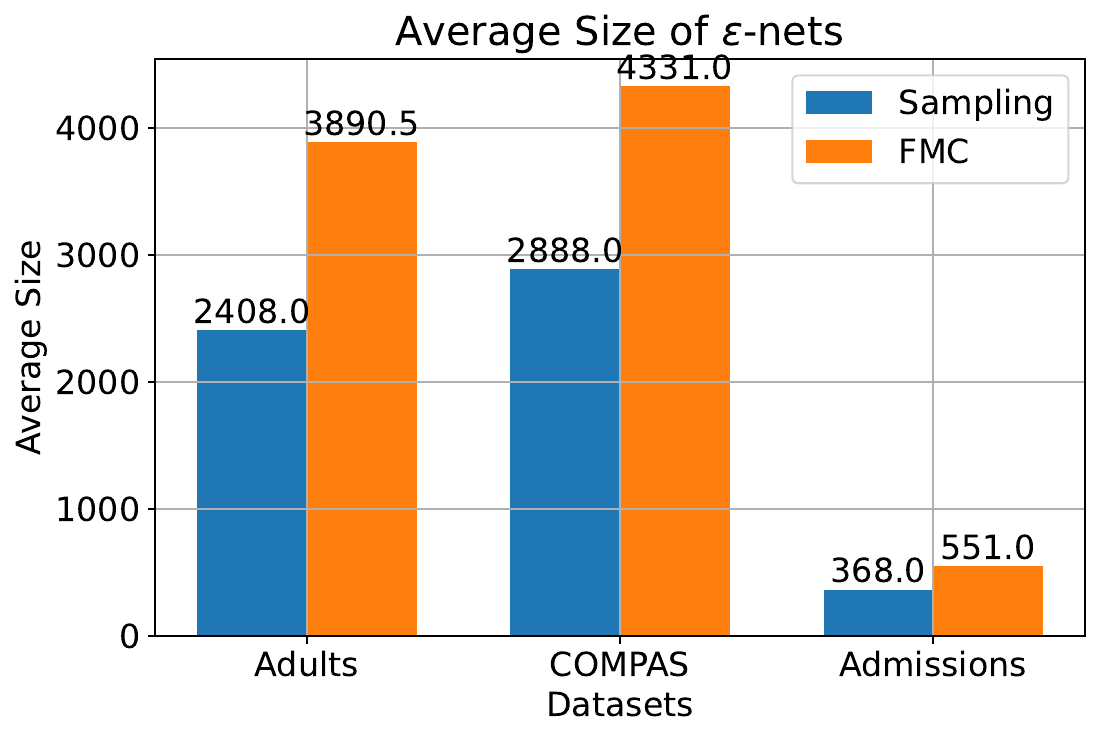}
    \vspace{-4mm}\caption{Comparing the average size of $\eps$-net produced by standard and fair algorithms for dataset summarization task.}
    \label{fig:rds:size}
    \vspace{-3mm}
\end{figure}

\renewcommand{\arraystretch}{1.2} 

\begin{table}[tb]
\centering
\small
\begin{tabular}{@{}lrr@{}}
\toprule
\rowcolor{gray!15}
\textbf{Dataset} & \textbf{Sampling Time (s)} & \textbf{FMC Time (s)} \\
\midrule
Adults      & 2.1244 & 2.2044 (\textbf{1.03$\times$}) \\
COMPAS      & 2.4343 & 2.5470 (\textbf{1.05$\times$}) \\
Admissions  & 0.0032 & 0.0038 (\textbf{1.19$\times$}) \\
\bottomrule
\end{tabular}
\caption{Average running time of Fair Monte-Carlo and standard sampling algorithms for the dataset summarization task.}
\label{tab:rds:time}
\vspace{-6mm}
\end{table}

\begin{figure*}[t]
\centering
\begin{minipage}[t]{0.32\linewidth}
    \centering
    \includegraphics[width=0.95\linewidth]{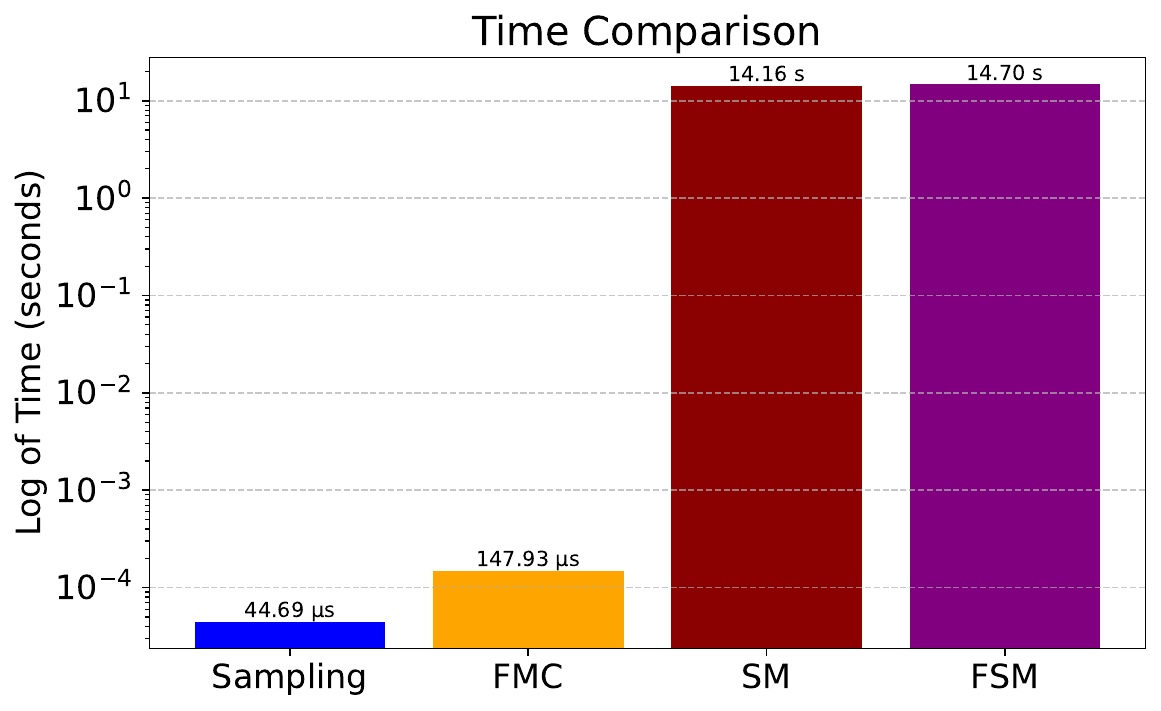}
    \vspace{-4mm}\caption{The running time of the standard methods and their fair variants. The time is averaged between different runs in the synthetic rectangle range space in 2D.}
    \label{fig:synth:time}
\end{minipage}
\hfill
\begin{minipage}[t]{0.32\linewidth}
    \centering
    \includegraphics[width=0.95\linewidth]{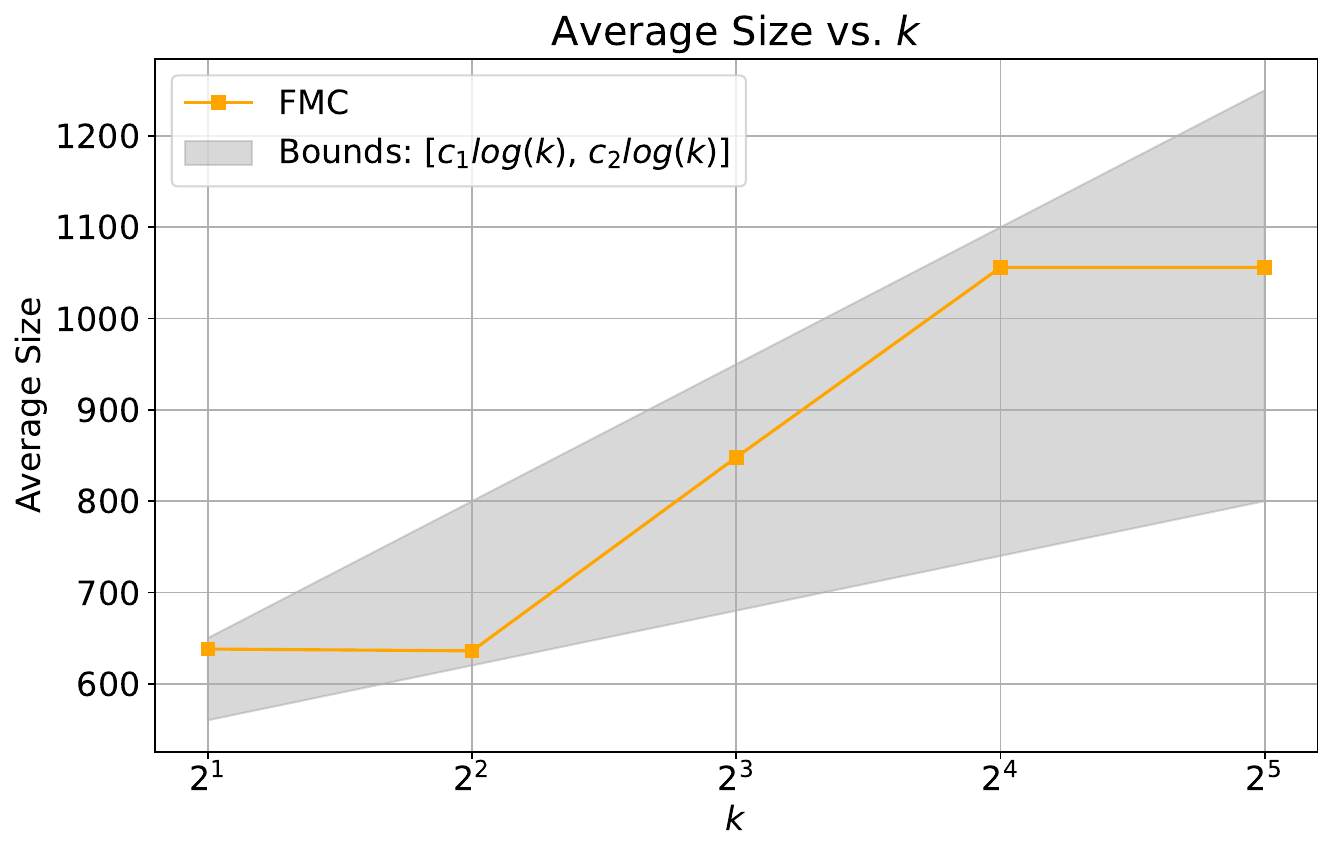}
    \vspace{-4mm}\caption{Average size of the $\eps$-net vs. the number of colors ($k$). This is the synthetic dataset with rectangle ranges with DP constraint. The gray area shows the logarithmic trend.}
    \label{fig:synth:size_k}
\end{minipage}
\hfill
\begin{minipage}[t]{0.32\linewidth}
    \centering
    \includegraphics[width=0.95\linewidth]{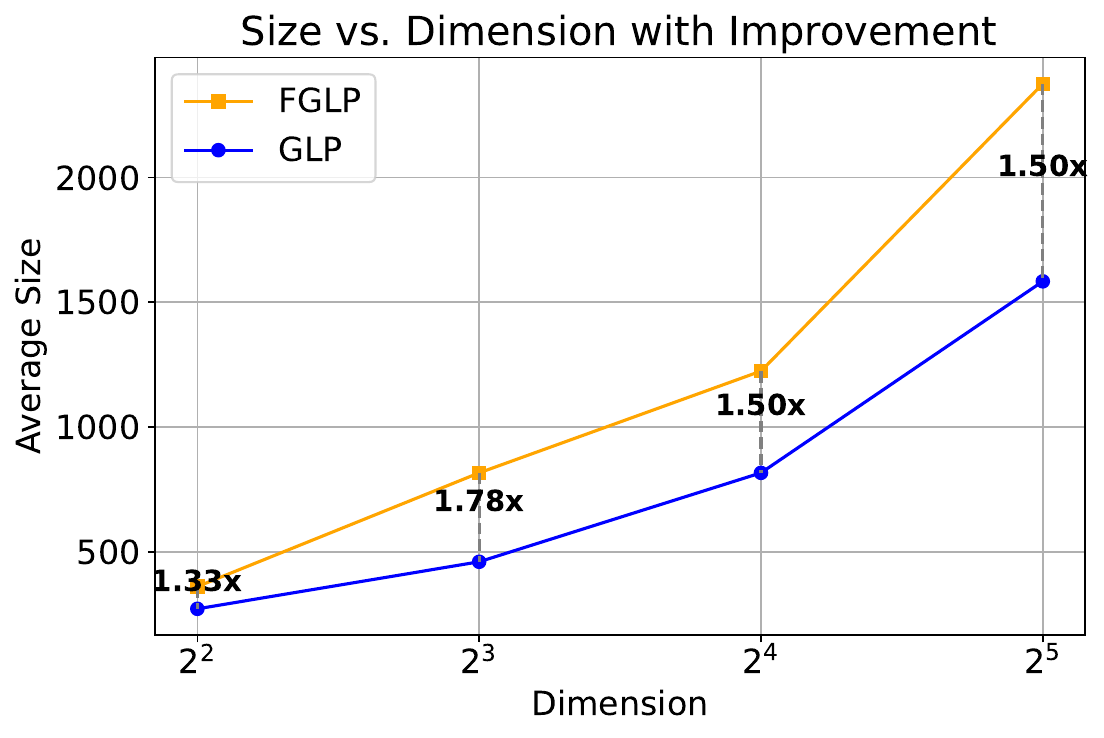}
    \vspace{-4mm}\caption{Average size of the hitting set vs. the dimension $d$. In this setting, we consider half-spaces in $\mathbb{R}^d$ space with DP constraint, and points have two colors ($k = 2$).}
    \label{fig:synth:hit-size-dim}
\end{minipage}
\end{figure*}

\stitle{Neighborhood Hitting}
For this task, we use the PopSim dataset, which contains the geographic locations of individuals. Each location is represented as a 2D point, and ranges are defined as balls of fixed radius centered around a randomly selected subset of points. The objective is to construct a fair $\eps$-net—a subset of individuals that hits all densely populated regions. We apply both the Fair Monte-Carlo (FMC) algorithm and the standard sampling approach to this setting. We also construct the Fair Hitting Set for this range space. To compute the hitting set, we use the Geometric LP (GLP) algorithm and its fairness-aware variant, Fair GLP (FGLP).

Figure~\ref{fig:popsim:dp} shows that the Fair Monte-Carlo (FMC) algorithm achieves zero \underline{unfairness} under the demographic parity (DP) constraint, whereas the standard sampling method results in a significantly unfair sample.
Figure~\ref{fig:popsim:size} further illustrates how the \underline{average size} of the output sample varies with different values of $\eps$. As $\eps$ decreases, more points are required to construct an $\eps$-net that hits all heavy ranges. In addition, the fair method introduces only a minimal increase in sample size compared to the standard approach (logarithmic to $k$). 
A comparison of the \underline{running time} of these two algorithms based on different values of $\eps$ is also provided in Appendix~\ref{app:exp:popsim} (Figure~\ref{fig:popsim:time}).

We also applied the GLP and FGLP algorithms to compute a hitting set for this range space. A comparison of their outputs is presented in Table~\ref{tab:glp-vs-fglp}. The results show that FGLP reduces unfairness 45 times, while maintaining a comparable sample size—especially considering the full dataset consists of 2M points.

\begin{table}[t]
\centering
\renewcommand{\arraystretch}{1.3}
\begin{tabular}{lccc}
\toprule
\rowcolor{gray!15}
\textbf{Algorithm} & \textbf{Output Size} & \textbf{Fairness} (\(\mathcal{F}_\infty\)) & \textbf{Runtime} (\(\mu\)s) \\
\midrule
GLP   & 336 & 0.093 & 108 \\
FGLP  & 503 & \textbf{0.002} & 403 \\
\bottomrule
\end{tabular}
\caption{GLP v.s. FGLP on PopSim dataset for Neighborhood Hitting.}
\label{tab:glp-vs-fglp}
\vspace{-9mm}
\end{table}

\stitle{Synthetic Sampling}
For this task, we generated a collection of synthetic datasets consisting of randomly sampled points and geometric ranges in the plane (rectangles) and higher dimensions (half-spaces). Details of the construction process are provided in Appendix~\ref{app:exp:synth}. We varied the dimensionality, the number of colors, and the number of ranges in these datasets.

\vspace{-2mm}
\paragraph{Results on $\eps$-nets:}     Figure~\ref{fig:synth:epsnet:fairness} compares the \underline{fairness} ($\mathcal{F}_2$) of the standard Sampling method with two fair variants: Fair Monte Carlo (FMC) and Fair Sketch-and-Merge (FSM). Both FMC and FSM achieve near-zero unfairness, whereas the standard sampling method exhibits significantly higher unfairness. Figure~\ref{fig:synth:time} presents the \underline{running time} comparison between standard Sampling and FMC, as well as between standard Sketch-and-Merge (SM) and its fair counterpart, FSM. The results show that the additional computational cost introduced by the fair methods is minimal while they significantly improve fairness.

Additional results comparing the \underline{output size} and \underline{runtime} of standard (unfair) Sampling and Discrepancy-based methods are provided in Appendix~\ref{app:exp:synthresult} (Figures~\ref{fig:synth:size} and ~\ref{fig:synth:standard}), where we compare all the three standard (unfair) algorithms.

Figure~\ref{fig:synth:size_k} shows the \underline{output size} of the Fair Monte Carlo algorithm as the {\bf number of colors} in the point set increases. The results demonstrate that an exponential increase in the number of colors leads to a linear growth in output size, aligning with the theoretical result of the $\log k$ factor.

\vspace{-2mm}
\paragraph{Results on Geometric Set Cover:} Table~\ref{tab:synth:hittingset-fair-time} presents a comparison of \underline{runtime} and \underline{fairness} between the two algorithms for constructing hitting sets. FGLP achieves zero unfairness with only a minimal increase in runtime.
Figure~\ref{fig:synth:hit-size-dim} compares the \underline{output size} of these algorithms across {\bf different dimensions}. Overall, the fair setting results in a slightly larger output size compared to the standard version, while FGLP achieves near-zero unfairness. As the dimension increases, the growth in size remains nearly constant since it depends on $\log k$ rather than the VC dimension.

\begin{table}[!tb]
\centering
\small
\renewcommand{\arraystretch}{1.5}
\begin{tabular}{@{}lcc@{}}
\toprule
\rowcolor{gray!15}
\textbf{Algorithm} & \textbf{Half-space ($d > 2$)} & \textbf{Rectangles ($d = 2$)} \\
\midrule
FGLP & \makecell{$\mathcal{F}_{\infty}$: {\bf 0.00}\\Time: 29.77s} & \makecell{$\mathcal{F}_{\infty}$: {\bf 0.00}\\Time: 1.31s} \\ \hline
GLP  & \makecell{$\mathcal{F}_{\infty}$: 0.06\\Time: 20.22s} & \makecell{$\mathcal{F}_{\infty}$: 0.28\\Time: 0.76s} \\
\bottomrule
\end{tabular}
\caption{Comparison of fairness $\mathcal{F}_{\infty}$ and average running time for fair and unfair hitting set algorithms on the synthetic dataset.}
\label{tab:synth:hittingset-fair-time}
\vspace{-7mm}
\end{table}

\stitle{Rank Regret Representatives}
To generate Rank Regret Representatives, we use three real-world datasets: Adult, COMPAS, and College Admissions. The objective is to construct an $\eps$-net that intersects the heavy top-$l$ region of any ranking function $f$ defined over the dataset features. We evaluate this across different values of $\eps = \frac{l}{n}$~\cite{asudeh2019rrr}.
The results are consistent with previous settings. Figure~\ref{fig:rrr:fairness} presents the \underline{fairness} comparison across the three datasets averaged over multiple $\eps$ values. More results are provided in Appendix~\ref{app:exp:rrr}.

\begin{figure}[!tb]
    \centering
    \includegraphics[width=0.6\linewidth]{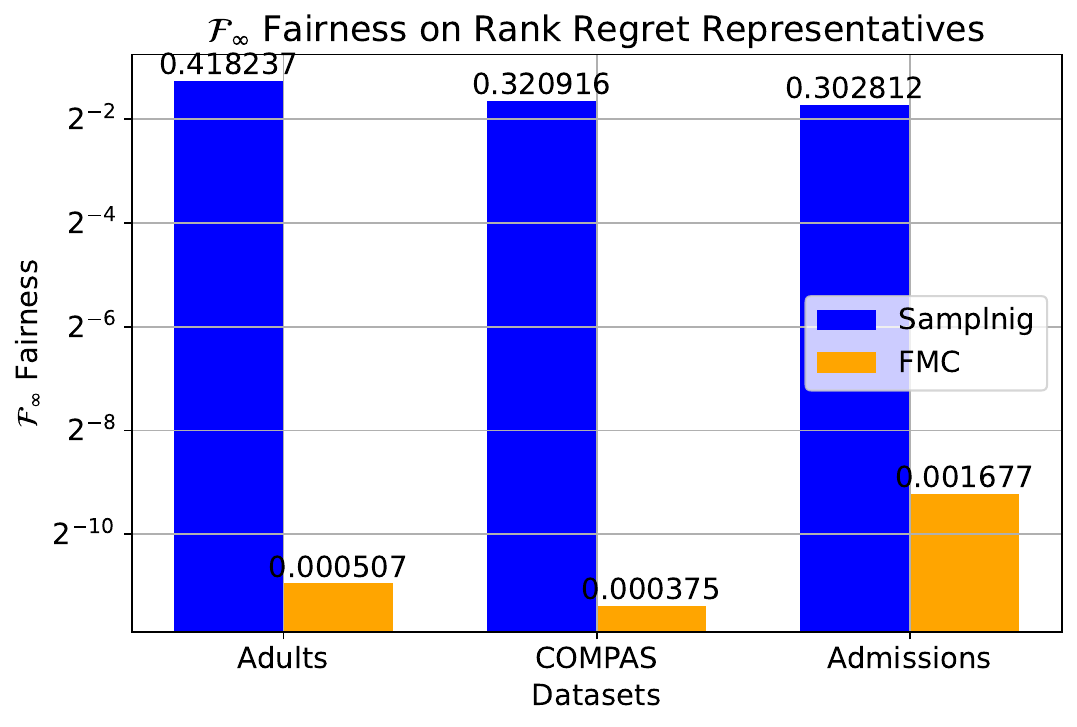}
    \vspace{-5mm}\caption{Fairness of output Rank Regret Representatives.}
    \label{fig:rrr:fairness}
    \vspace{-4mm}
\end{figure}

\section{Conclusion}\label{sec:conclusion}
Motivated by their broad application in data-driven systems, in this paper, we studied the geometric approximation problems of \enet, \esample, and geometric hitting set through the lens of fairness. We formulated the problems using two notions of group fairness and proposed efficient randomized and deterministic algorithms with small approximation factors to address them. In addition to the theoretical guarantees, our experimental evaluations further demonstrated the effectiveness of our algorithms across various tasks and datasets.
\balance
\bibliographystyle{ACM-Reference-Format}
\bibliography{ref,ref-sc}

\newpage
\appendix
\section*{Appendix}
\begin{figure}[htb]
    \centering
    \includegraphics[width=0.5\linewidth]{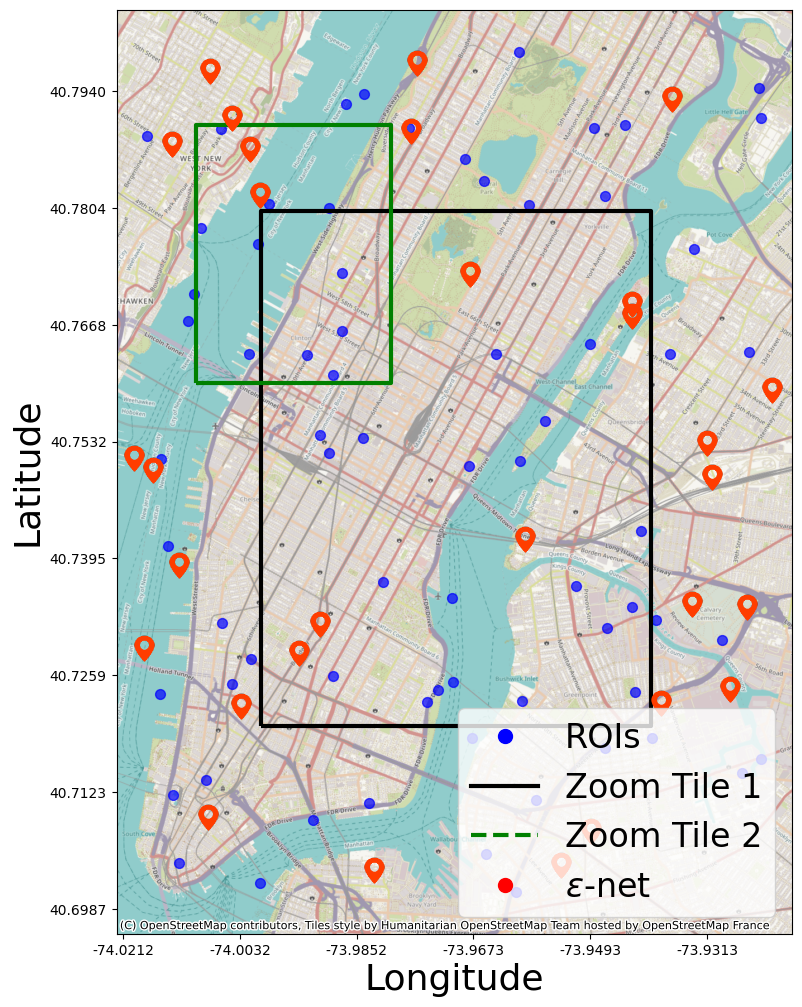}
    \caption{Application use case based on Example 1: interactive map UI with different zoom tiles. The red points are $\eps$-net samples of the whole set of regions of interest (ROI).} 
    \label{fig:map-example}
\end{figure}

\section{Fair $\eps$-sample}\label{app:epssample}

In this section, we discuss the Fair $\eps$-sample problem by addressing both DP and CR constraints. Most of the results are similar to Section~\ref{sec:epsnets}, so we only rely on a brief discussion.

Based on Theorem~\ref{thm:epssample} in Background, an $\eps$-sample can be built by randomly sampling 
$\gamma \geq \frac{c_0}{\eps^2}\left(d\log\frac{d}{\eps} + \log\frac{1}{\prob}\right)$ from $\points$. As a result, both the Monte-Carlo sampling-based and the discrepancy-based algorithms discussed in Section~\ref{sec:epsnets} also apply here to build a fair $\eps$-sample for satisfying DP. Specifically, 
The discrepancy-based algorithm for building $\eps$-samples follows similar steps to those in $\eps$-nets~\cite{chazelle2000discrepancy}. By constructing the Fair Matchings (as discussed in Section~\ref{sec:epsnets:disc}), one can similarly build fair $\eps$-samples.

However, it is not always possible to build an $\eps$-sample satisfying any arbitrary CR constraint. We provide a counter-example of a case where satisfying a large set of possible ratios $\mathcal{T}$ is impossible.

Assume that for a range space $\rangespace$, there is at least one range $\range^{*} \in \ranges$ that separates all the points of color $\group$ from others. In other words,
\[
    \range^* \cap \points = \points_{\group}.
\]
This is a valid assumption in practice. For example, assume the range space $\rangespace$ where $\points$ is a subset of points in $\mathbb{R}^d$ and $\ranges$ is the set of all possible half-spaces in $\mathbb{R}^d$. There can be a hyper-plane that separates at least one of the demographic groups from others with a high accuracy (this is a classification task).

Let $\epssample$ be an $\eps$-sample of this range space. Based on the Definition~\ref{def:epssample} we have:

\begin{align}
    &\left|\frac{|\epssample \cap \range^*|}{|\epssample|} - \frac{|\points \cap \range^*|}{|\points|}\right| \leq \eps\\
    &\to \left|\frac{|\epssample \cap \points_{\group}|}{|\epssample|} - \frac{|X_{\group}|}{|\points|}\right| \leq \eps
\end{align}

This means that $\frac{|\epssample \cap \points_{\group}|}{|\epssample|}$, namely the ratio of points colored as $\group$ in $\epssample$,  can not deviate more than $\eps$ from the original set. In other words, any ratio $\tau_\ell$ with $t_{\ell} \notin [(\frac{\points_{\group}}{|\points|} - \eps), (\frac{\points_{\group}}{|\points|} + \eps)]$ is not satisfiable, for $\ell\in[k]$.

In fact, even if there is a range $\range^*$ that {\bf approximately} separates one of the colors (demographic groups) from others, then, there always exists some ratios that cannot be satisfied by CR.

An example of this situation is illustrated in Figure~\ref{fig:counter-example}.

\begin{figure}[htb]
    \centering
    \includegraphics[width=0.6\linewidth]{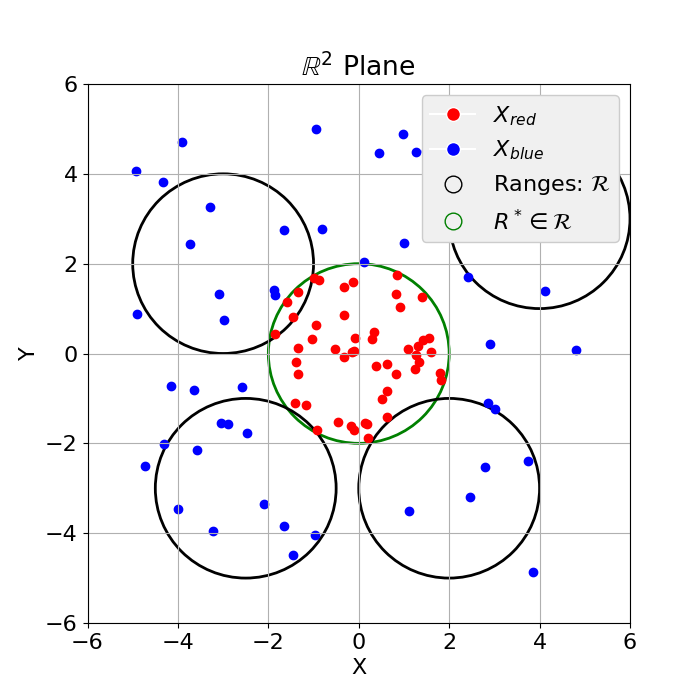}
    \vspace{-5mm}
    \caption{A counter-example for building an $\eps$-sample satisfying any CR constraints. In this example, any $\eps$-sample must contain $\frac{1}{2} \pm \eps$ fraction of red points. The reason is that there exists a range $\range^* \in \ranges$ that separates all reds from all blue points, and we have $\frac{|\points_{red}|}{|\points|} = \frac{1}{2}$.}
    \label{fig:counter-example}
    \vspace{-5mm}
\end{figure}

\section{Additional Experiments}\label{app:exp}

\subsection{Datasets Detail}\label{app:exp:datasets}
We use four real-world datasets in our experiments. PopSim\cite{nguyen2023popsim} is a large semi-synthetic dataset of 2 million individuals from Illinois, annotated with geographic and racial information. Adult\cite{Dua:2019} is a UCI benchmark from the 1994 U.S. Census, where the task is to predict income; we use gender as the sensitive attribute (66\% male, 33\% female). COMPAS\cite{compas2016} includes criminal and demographic data for over 7,000 defendants, using race as the sensitive attribute across three groups: African-American (48\%), Caucasian (37\%), and Hispanic (14\%). College Admission\cite{Chand:2020} contains application features for international students, with gender as the sensitive attribute (52\% male, 47\% female). 

PopSim contains 2,000,000 rows and three columns: latitude, longitude, and a categorical attribute representing race. Adult consists of 48,000 rows and 15 attributes, covering demographic and employment information. COMPAS includes 60,000 rows and 28 features related to criminal history and risk assessment. College Admission has 400 entries with 7 features describing applicants' academic and demographic information.

All experiments were performed on a machine with an Intel i9-12900K CPU, 64GB RAM, NVIDIA RTX 3090 GPU (24GB), running Ubuntu 22.04 and Python 3.10.

\subsection{Synthetic Dataset Construction}\label{app:exp:synth}
We constructed two classes of range spaces. In the 2D setting, the ranges were axis-aligned rectangles, and the point set consisted of randomly generated points within the unit square. Each point was randomly assigned a color from a set of $k$ colors. The values of $\eps$ were carefully chosen to produce $\eps$-nets that capture varying sizes of heavy range families, based on the selected $\eps$.

For higher-dimensional settings, we used half-spaces in $\mathbb{R}^d$ as the range family. Points were uniformly sampled from the unit hypercube, and each half-space was defined by a randomly generated normal vector and a constant. The constant was selected such that each half-space intersects at least one point from the generated point set.
The total number of points and ranges generated in all these synthetic datasets varies between $2^{12}$ to $2^{16}$.

\subsection{Additional Results on Experiments}
\subsubsection{Representative Database Sampling:}\label{app:exp:rds}
Fairness comparison under Demographic Parity (DP) and constraints for the three datasets is presented in Figures~\ref{fig:rds:fairness:dp}.

\subsubsection{Neighborhood Hitting:}\label{app:exp:popsim}
Figure~\ref{fig:popsim:time} shows the average running time of the two algorithms for finding $\eps$-net in the PopSim dataset across different values of $\eps$.

\begin{figure*}[ht]
\centering

\begin{minipage}[t]{0.45\linewidth}
    \centering
    \includegraphics[width=0.6\linewidth]{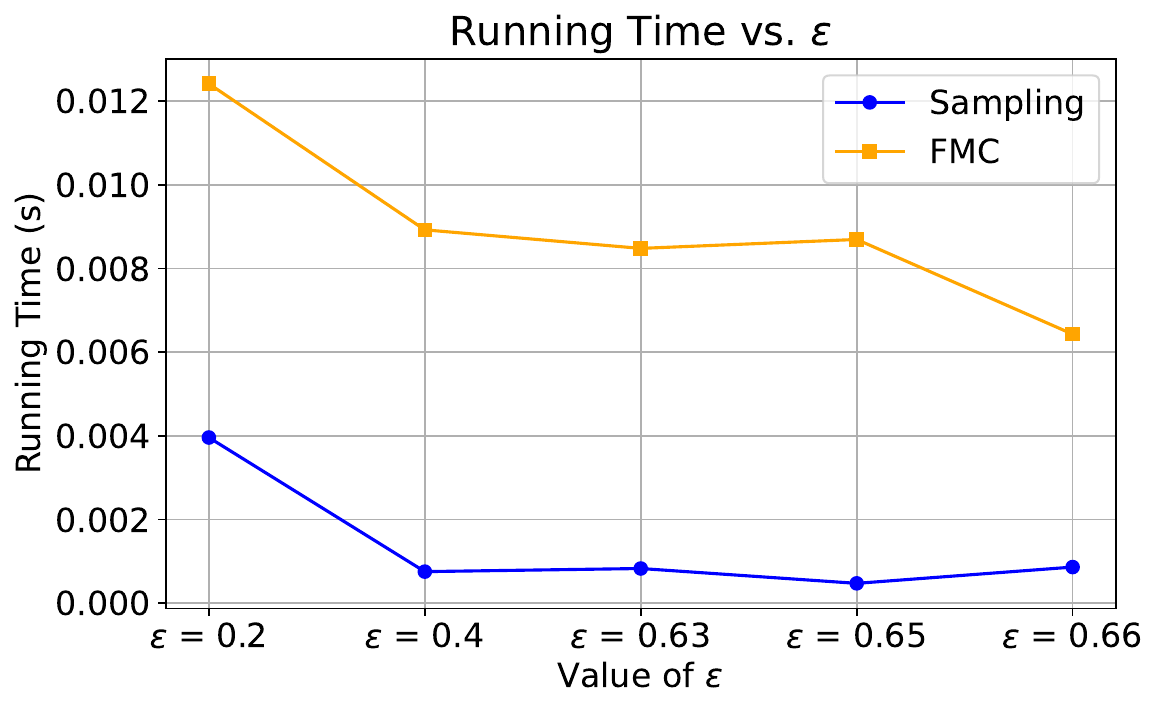}
    \caption{Running time comparison of Fair Monte-Carlo (FMC) algorithm and standard Sampling on PopSim dataset for generating $\eps$-net.}
    \label{fig:popsim:time}
\end{minipage}
\hfill
\begin{minipage}[t]{0.45\linewidth}
    \centering
    \includegraphics[width=0.6\linewidth]{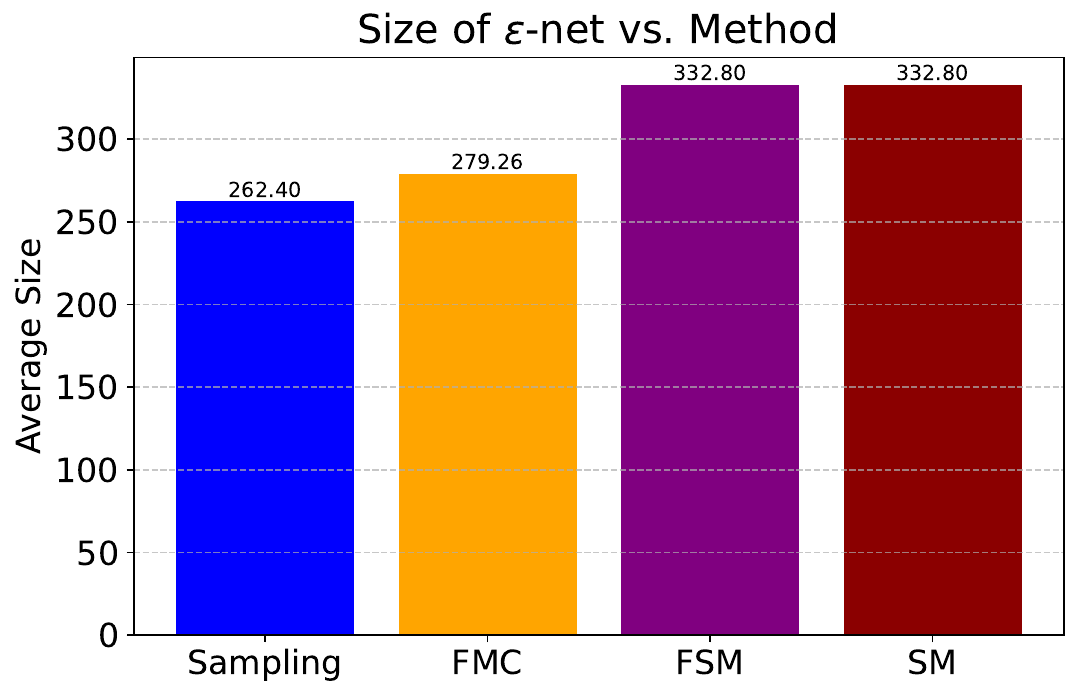}
    \caption{Comparing the size of output $\eps$-net for different methods on synthetic dataset.}
    \label{fig:synth:size}
\end{minipage}

\end{figure*}

\subsubsection{More Results of Synthetic Sampling}\label{app:exp:synthresult}
Figure~\ref{fig:synth:standard} presents a comparison of the average output size and running time of standard (unfair) algorithms for constructing $\eps$-nets. Among them, the Sketch-and-Merge (SM) algorithm outperforms the classic Discrepancy method in terms of runtime. The Sampling algorithm is the fastest overall, as it simply relies on drawing a random sample from the dataset. 

The output sizes of the Discrepancy and Sketch-and-Merge (SM) algorithms are identical. The larger output size observed for SM compared to Sampling is due to implementation constraints—specifically, SM produces outputs whose sizes are rounded up to the nearest power of two.

Figure~\ref{fig:synth:size} compares the average output size of $\eps$-nets generated by fair and unfair algorithms. The fair variants incur only a slight increase in size compared to their standard counterparts.

\subsubsection{More Results on Rank Regret Representatives}\label{app:exp:rrr}
Figure~\ref{fig:rrr:dp} shows the results of applying the Fair Monte Carlo (FMC) algorithm to compute Fair Rank Regret Representatives. Once again, FMC achieves near-zero unfairness.

\begin{figure*}[ht]
\centering
    \begin{subfigure}[t]{0.32\linewidth}
        \includegraphics[width=.95\linewidth]{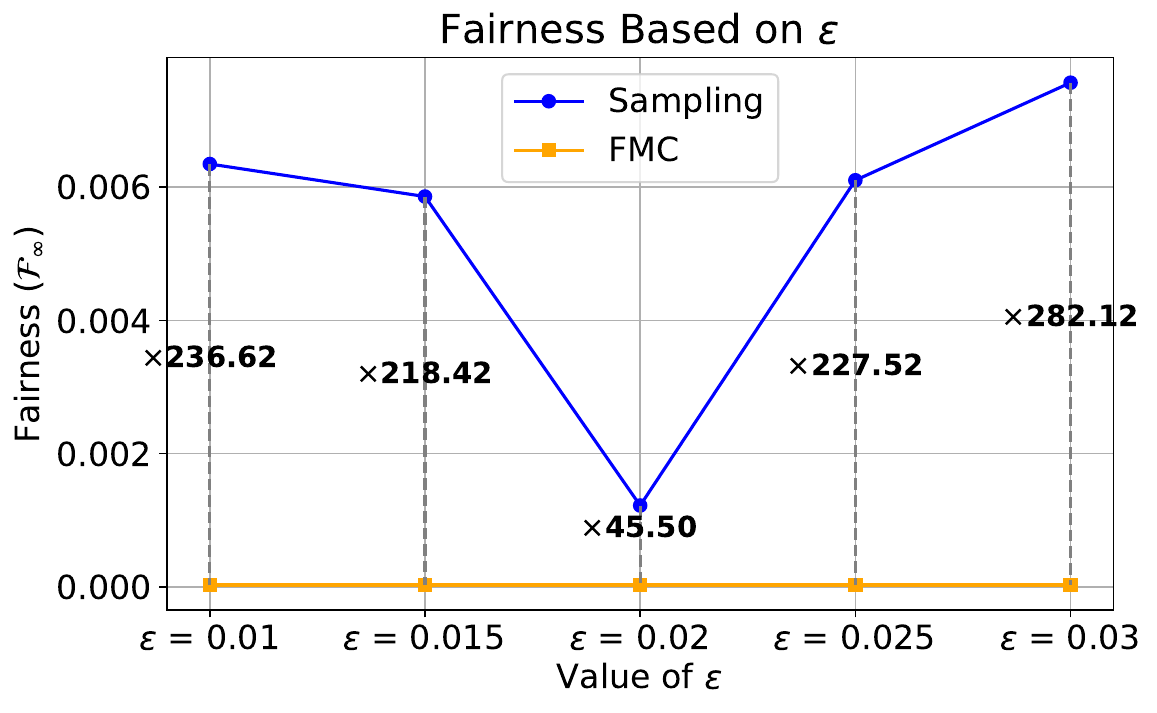}
        \caption{Adults Dataset}
    \end{subfigure}
    \hfill
    \begin{subfigure}[t]{0.32\linewidth}
        \includegraphics[width=.95\linewidth]{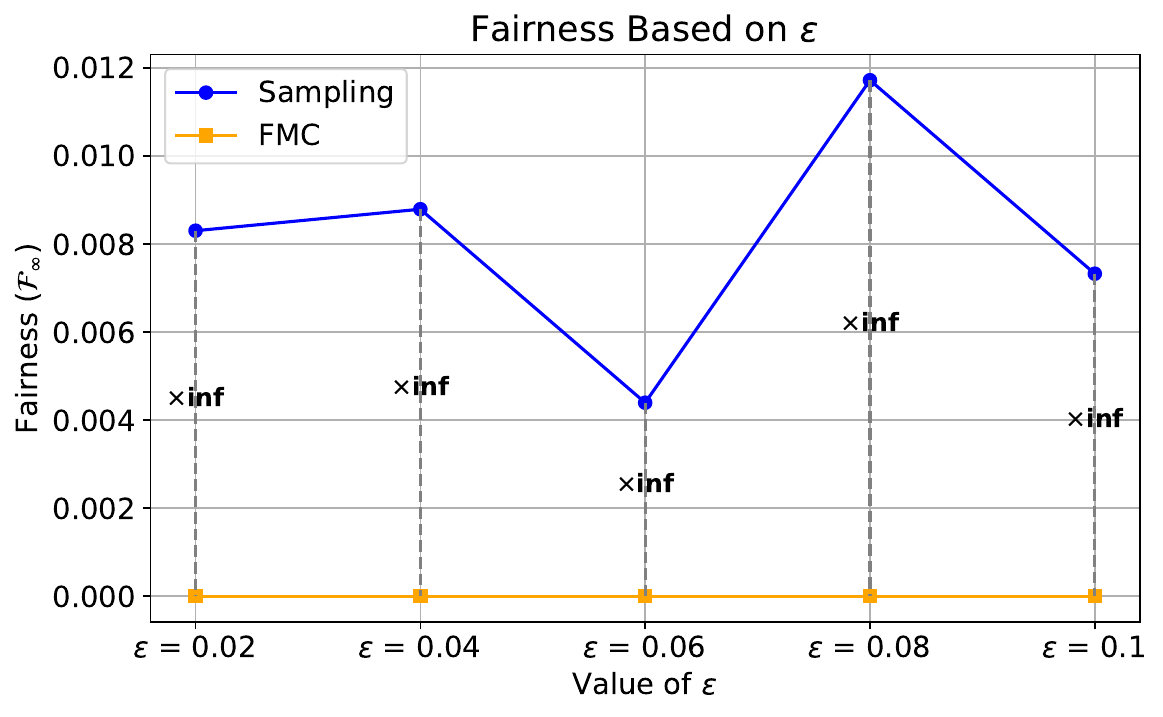}
        \caption{COMPAS Dataset}
    \end{subfigure}
    \hfill
    \begin{subfigure}[t]{0.32\linewidth}
        \includegraphics[width=.95\linewidth]{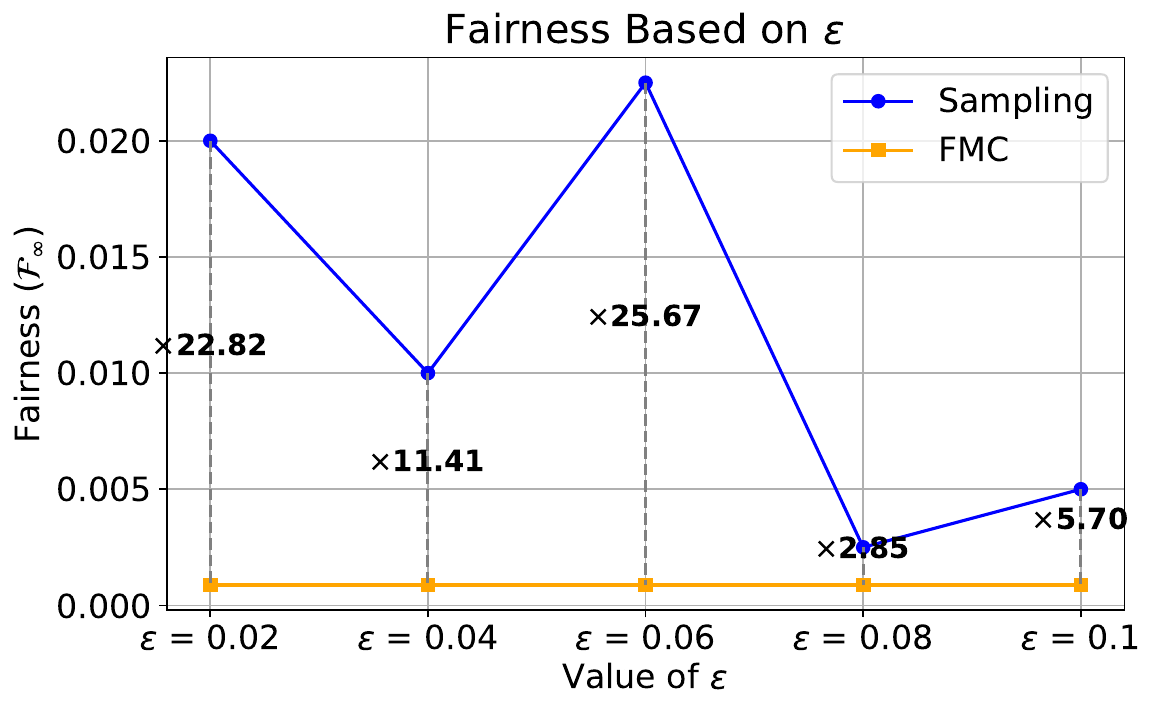}
        \caption{College Admission Dataset}
    \end{subfigure}
\caption{Measuring the fairness ($\mathcal{F}_{\infty}$) of the output $\eps$-net across different values of $\eps$ across three datasets on Representative Database Sampling task. Here, we compare the unfair Sampling and Fair Monte-Carlo sampling methods. Here, the fairness constraint is DP.}
\label{fig:rds:fairness:dp}
\end{figure*}

\begin{figure*}[ht]
\centering
    \begin{subfigure}[t]{0.45\linewidth}
        \centering
        \includegraphics[width=.6\linewidth]{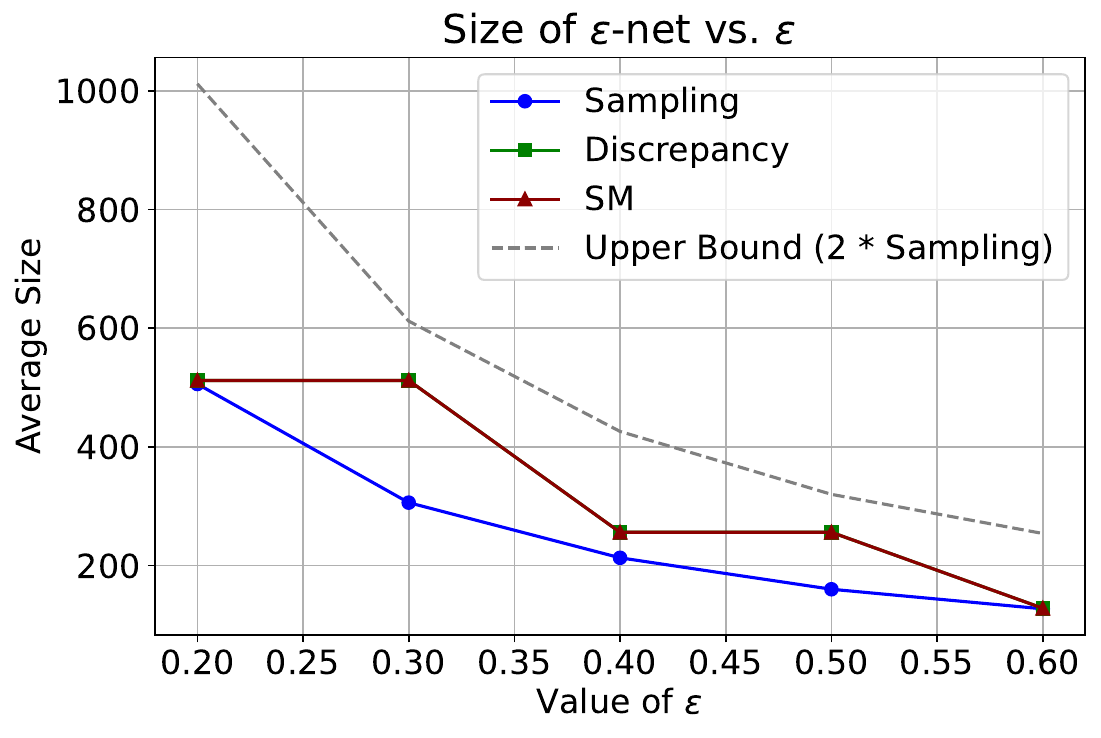}
        \caption{Average size comparison.}
        \label{fig:synth:standard:size}
    \end{subfigure}
    \hfill
    \begin{subfigure}[t]{0.45\linewidth}
        \centering
        \includegraphics[width=.6\linewidth]{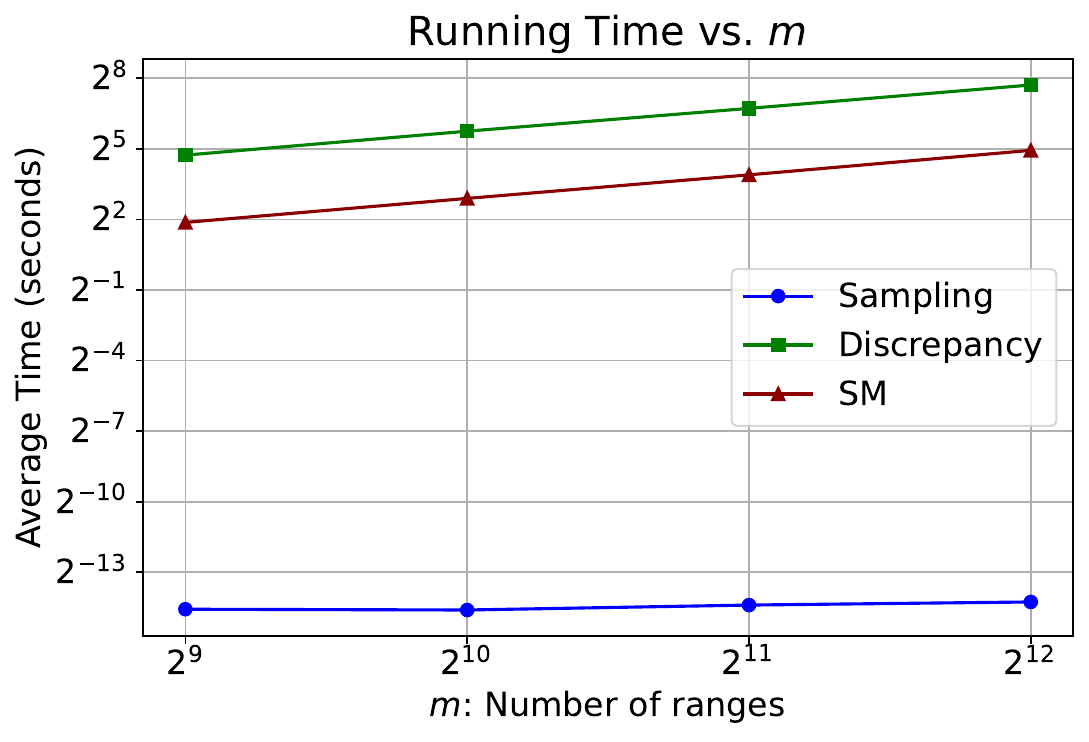}
        \caption{Running time comparison.}
        \label{fig:synth:standard:time}
    \end{subfigure}
\caption{The comparison between the standard algorithms for building $\eps$-net. The Sketch-and-Merge (SM) algorithm achieves a better runtime compared to the classic Discrepancy algorithm, and Sampling is the fastest. This experiment is run on the synthetic dataset.}
\label{fig:synth:standard}
\end{figure*}

\begin{figure*}[ht]
\centering
    \begin{subfigure}[t]{0.32\linewidth}
        \includegraphics[width=.95\linewidth]{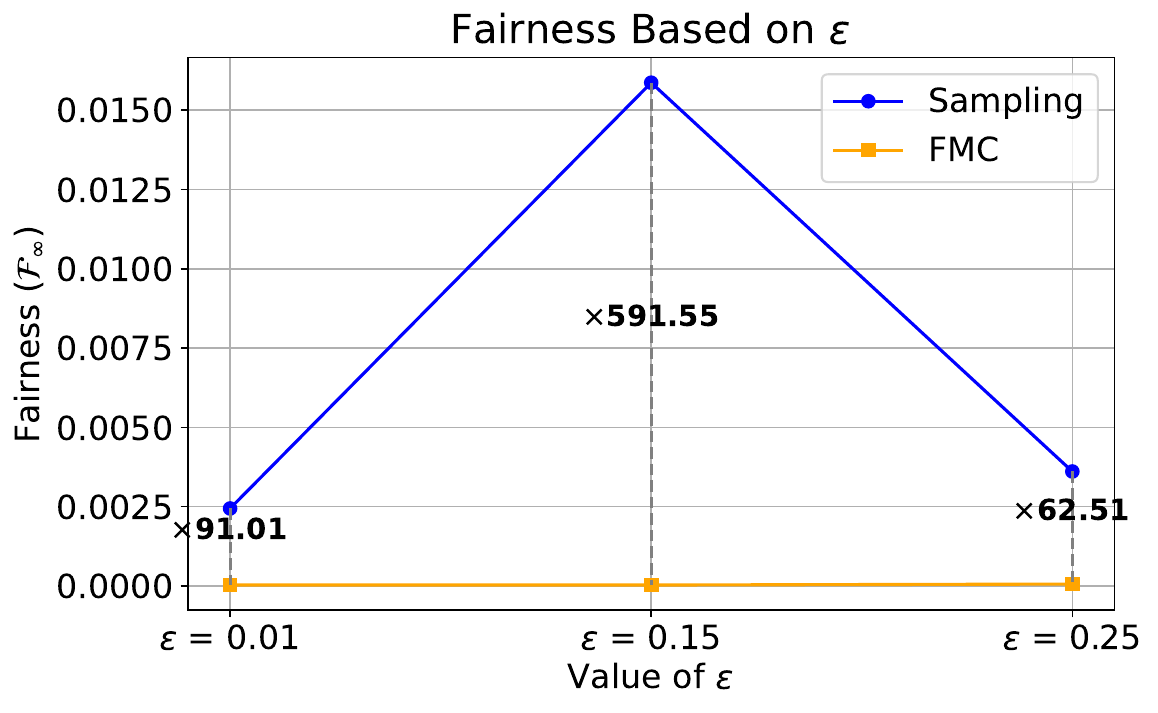}
        \caption{Adults Dataset}
    \end{subfigure}
    \hfill
    \begin{subfigure}[t]{0.32\linewidth}
        \includegraphics[width=.95\linewidth]{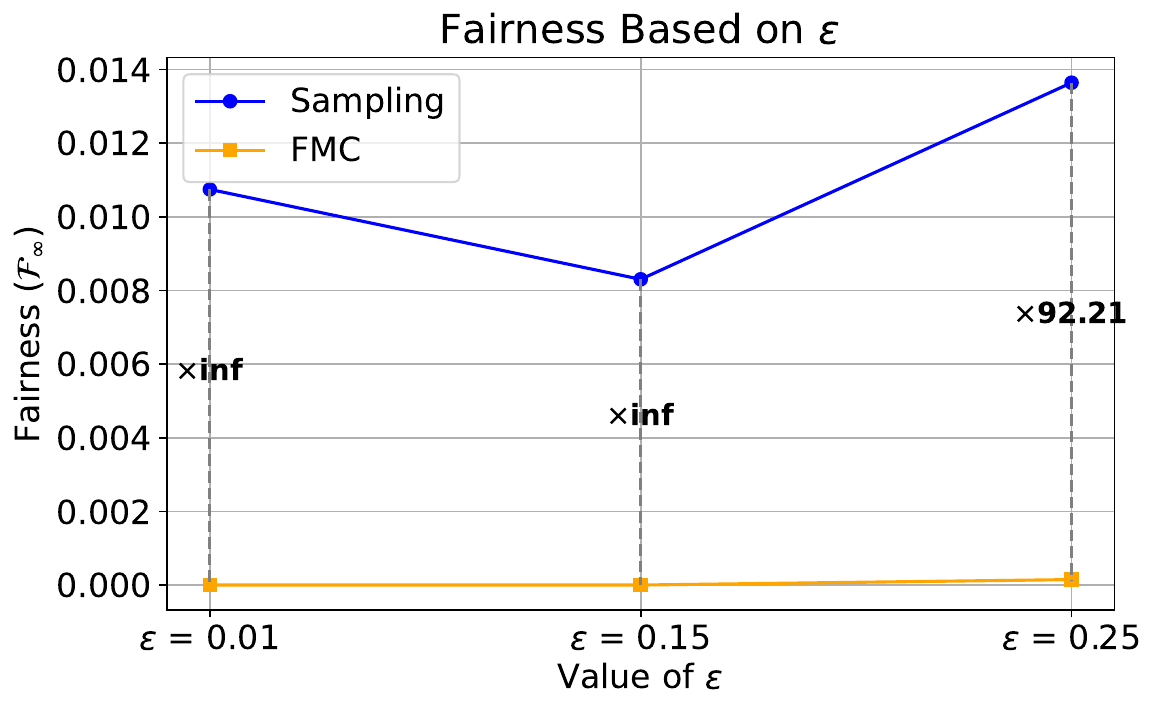}
        \caption{COMPAS Dataset}
    \end{subfigure}
    \hfill
    \begin{subfigure}[t]{0.32\linewidth}
        \includegraphics[width=.95\linewidth]{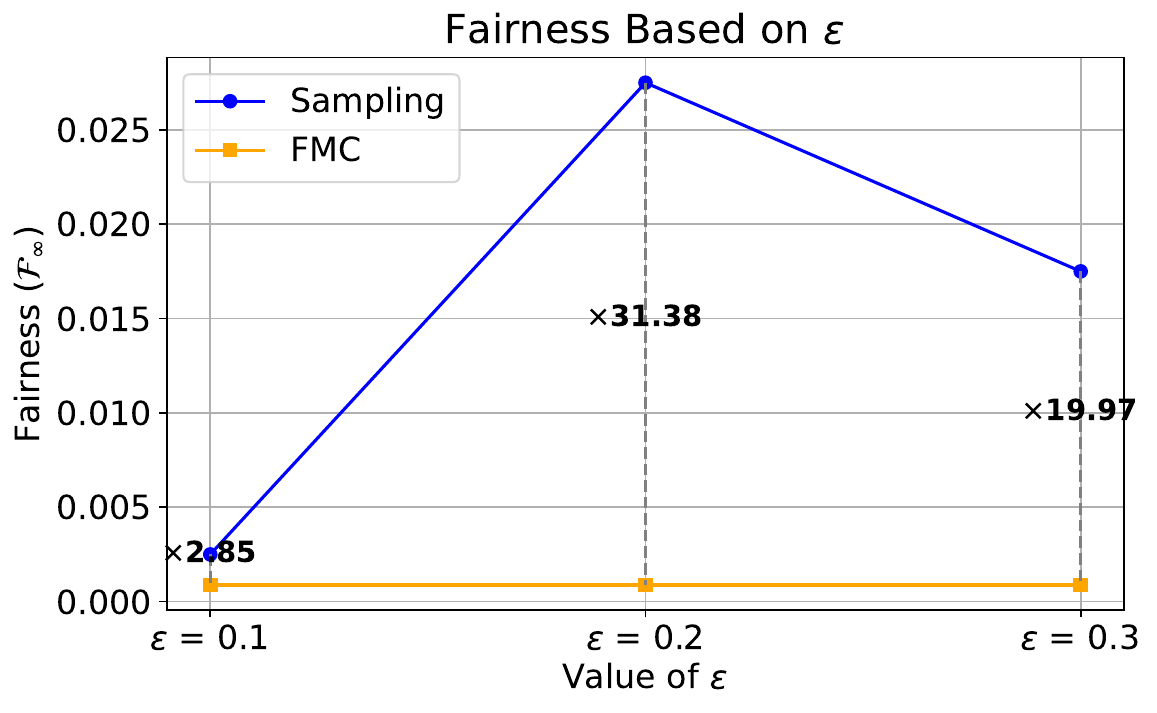}
        \caption{College Admission Dataset}
    \end{subfigure}
\caption{The result of applying fair algorithms in finding the fair Rank Regret Representatives on the three datasets.}
\label{fig:rrr:dp}
\end{figure*}


\end{document}